\numberwithin{equation}{section}
\newcommand*{\boxedcolor}{black}
\renewcommand{\boxed}[1]{\textcolor{\boxedcolor}{%
  \fbox{\normalcolor\m@th$\displaystyle#1$}}}
\DeclareMathAlphabet{\mathcalligra}{T1}{calligra}{m}{n}
\definecolor{cardinal}{rgb}{0.6,0,0}
\definecolor{darkgreen}{rgb}{0,0.5,0}
\definecolor{golden}{rgb}{0.92, 0.7, 0}
\definecolor{midnight}{rgb}{0, 0, 0.5}
\definecolor{darkblue}{rgb}{0.2, 0, 0.8}
 \def\Im{{\rm Im}}
\def\cB{{\cal B}}
\def\cI{{\cal I}}
\def\cN{{\cal N}}
\def\cO{{\cal O}}
\def\cP{{\cal P}}
\def\cQ{{\cal Q}}
\def\Tr{{\rm Tr}\,}
\newcommand{\DQC}{\mathsf{DQC1}}
\newcommand{\yes}{\mathsf{YES}}
\newcommand{\no}{\mathsf{NO}}
\newcommand{\be}{\begin{equation}} \newcommand{\ee}{\end{equation}}
\newcommand{\bea}{\begin{equation} \begin{aligned}} \newcommand{\eea}{\end{aligned} \end{equation}}
\newcommand{\bmu}{\begin{multline}} \newcommand{\emu}{\end{multline}}
\newcommand\equ[1] {\begin{equation}#1\end{equation}}
\newcommand\eqs[1] {\begin{align}#1\end{align}}
\newcommand\eqss[1] {\begin{align}\begin{split}#1\end{split}\end{align}} 
\renewcommand\( {\left(}
\renewcommand\) {\right)}
\newcommand{\longsquiggly}{\xymatrix{{}\ar@{~>}[r]&{}}}
\def\cH{\mathcal H}
\newtheorem{thm}{Theorem}
\newtheorem{definition}{Definition}
\newtheorem{proposition}{Proposition}
\newtheorem{corollary}{Corollary}
\newcommand{\sharpP}{\mathsf{\#P}}
\renewcommand{\braket}[1]{\langle #1 \rangle}
\newcounter{mycount}
\newcommand\myprob[3]{%
   \stepcounter{mycount}
 \par\noindent   {\bfseries Definition\  \themycount} (#1)\\
   {\bfseries Input}: #2\\
   {\bfseries Problem}: #3\par
}
\newcommand\mypromprob[4]{%
   \stepcounter{mycount}
 \par\noindent  {\bfseries Definition\  \themycount} (#1)\\
   {\bfseries Input}: #2\\
   {\bfseries Promise}: #3\\
   {\bfseries Problem}: #4\par
}
\begin{document}  
%%%%%%%%%%%%%%%%%%%%%%%%%%%%%%

\begin{titlepage}

\vspace*{1.5cm}

% {\huge   Complexity of Supersymmetric Systems     \\ \vspace{5mm}  and the  (co)Homology Problem}

\title{Complexity of Supersymmetric Systems and the (co)Homology Problem}

\author[1]{Chris Cade}

\author[2]{P. Marcos Crichigno}

\affil[1]{QuSoft \& CWI, Amsterdam, the Netherlands.}

\affil[2]{Blackett Laboratory, Imperial College Prince Consort Rd., London, SW7 2AZ, U.K.}

\maketitle

% \vspace{1.5cm}

% {\large  Chris  Cade${}^{(1)}$ and P. Marcos Crichigno${}^{(2)}$   \\ }
% \bigskip
% \bigskip
% ${}^{(1)}$
% QuSoft, Centrum Wiskunde $\&$ Informatica \\1098 XG Amsterdam, Netherlands
% \vskip 5mm
% ${}^{(2)}$ Blackett Laboratory, Imperial College \\
% Prince Consort Rd., London, SW7 2AZ, U.K. 
% \vskip 5mm
% \texttt{chris.cade@cwi.nl,~p.crichigno@imperial.ac.uk} \\
% \end{center}

% \vspace{0.5cm}

\abstract{\noindent We consider the complexity of the local Hamiltonian problem in the context of fermionic Hamiltonians with $\mathcal N=2 $ supersymmetry and show that the problem remains $\QMA$-complete. Our main motivation for studying this is the well-known fact that  the ground state energy of a supersymmetric system is exactly zero if and only if a certain cohomology group is nontrivial. This opens the door to bringing the tools of Hamiltonian complexity to study the computational complexity of a large number of algorithmic problems that arise in homological algebra, including problems in algebraic topology, algebraic geometry, and group theory. We take the first steps in this direction by introducing the  {\sc $k$-local Cohomology} problem and showing that it is  $\QMA_1$-hard and, for a large class of instances, is contained in $\QMA$. We then consider the complexity of estimating normalized Betti numbers and show that this problem is hard for the quantum complexity class $\DQC$, and for a large class of instances is contained in $\BQP$.  In light of these results, we argue that it is natural to frame many of these homological problems in terms of finding ground states of supersymmetric fermionic systems. As an illustration of this perspective we discuss in some detail the model of  Fendley, Schoutens, and de Boer  consisting of hard-core fermions on a graph, whose ground state structure encodes $l$-dimensional holes in the independence complex of the graph. This offers a new perspective on existing quantum algorithms for topological data analysis and suggests new ones.}

\end{titlepage}

%%%%%%%%%%%%%%%%%%%%%%%%%%%%%%%%%%%%%

\setcounter{tocdepth}{2}
\tableofcontents
\newpage

%\input{everything}

% !TEX root = ./SUSY_Complexity.tex

\section{Introduction}

The development of the theory of quantum computation has greatly enriched our understanding of computational complexity. Contemplating the ways in which quantum mechanics can be harnessed to process information has not only led to the discovery of  new (quantum) complexity classes, but also to novel tools for establishing properties of known complexity classes, and revealed that certain computational problems which (most likely) cannot be efficiently solved by classical computers can be efficiently solved by quantum computers \cite{Shor_1997}.\footnote{See \cite{watrous2008quantum} for a review on quantum complexity theory and \cite{drucker2011quantum} for a survey on the application of quantum techniques to classical problems. } 
Determining the complexity of a given computational problem, however,  can be a difficult task and the techniques for doing so typically depend on the sort of problem being considered, e.g., whether it is a problem in linear algebra, topology, or group theory.
Some problems are manifestly quantum mechanical  (e.g., finding the ground state of a quantum Hamiltonian \cite{KitaevBook,2003quant.ph..2079K,2004quant.ph..6180K}) but others may conceal their quantum mechanical nature. A ``prime'' example is that of factoring which, although at face value has little to do with quantum mechanics,  was famously shown by Shor~\cite{Shor_1997} to be efficiently solvable by a quantum computer, while it is believed that it cannot be solved efficiently by a classical computer.

In this paper, which is a follow-up to \cite{Crichigno:2020vue}, we point out that there is a large class of computational problems which at first may not seem  quantum mechanical in nature, but in fact they are.  These problems fall under the mathematical field of~   ``homological algebra'' \cite{weibel_1994,CartanEilenberg+2016}, in particular the ``(co)homology problem'' described below.  Although this can be a rather abstract field, the advantage is that it provides a unified framework encompassing a large number of disparate mathematical problems, including problems of practical interest (see, e.g.,  \cite{KaczynskiTomasz2004Ch/T}).

Homological algebra has its origins in the study of topology, in particular in the problem of describing ``$l$-dimensional holes'' in simplicial complexes, but exploded in its applications and reach when it was realized that the formalism can be applied to algebraic systems (see \cite{WEIBEL1999797} for a history of the subject). Homological algebra has now become a powerful tool used in several branches of mathematics, including algebraic topology, algebraic geometry, and group theory, as well as in physics, including  condensed matter theory, quantum gauge theories, and string theory.

\begin{figure}
\begin{center}
\includegraphics{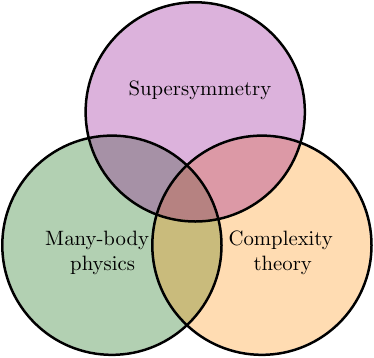}
\end{center}
\caption{The intersection of supersymmetry, many-body physics, and  computational complexity theory considered in this paper. A set of problems in  homological algebra arises at this intersection, whose complexity is naturally captured by various quantum complexity classes. This includes, for instance, the problem of finding $l$-dimensional holes in a simplicial complex.  }
\label{QCT}
\end{figure}

The cohomology problem is described as follows. One is given a vector space, $V$, a ``coboundary'' operator $d:V\to V$ squaring to zero, $d^2=0$, and  is asked to determine the elements $v\in V$ such that $dv=0$, modulo elements of the form $v=du$ for some $u$.
Such elements are, by definition, in the ``cohomology'' of $d$.\footnote{One can similarly define the ``homology'' problem, defined by a ``boundary'' operator $\partial$. These are closely related, with the cohomology problem being the ``dual'' of the homology problem. We will often refer only to the cohomology problem. The precise formulation of these problems in given in Section~\ref{sec:The k-local cohomology problem}. } This includes problems as diverse as finding $l$-dimensional holes in a simplicial complex (via simplicial cohomology), studying fixed points under the action of a group (via group cohomology), and  problems in number theory (via Galois cohomology).

One of our main goals here is to bring the tools of Hamiltonian complexity to  bear on the  complexity of problems in homological algebra, in particular the (co)homology problem just described. The field of  Hamiltonian complexity lies at the intersection of many-body physics and computational complexity. Starting with with Kitaev's celebrated result that determining  the ground state energy of a quantum local Hamiltonian is $\QMA$-complete  \cite{KitaevBook,2003quant.ph..2079K,2004quant.ph..6180K},  it has led to a number of important insights and tools in quantum complexity theory.

The bridge between  Hamiltonian complexity and the cohomology problem is provided by the study of Hamiltonians describing {\it supersymmetric systems}, a class of quantum mechanical systems which are invariant under the exchange of fermionic and bosonic states \cite{Nicolai:1976xp, Witten:1981nf,Witten:1982df,Witten:1982im}. By definition, the Hamiltonian for these quantum mechanical systems  can be written as $H=\{\cQ,\cQ^\dagger\}$, where $\cQ$ is known as the ``supercharge'' of the system, an operator sending bosonic states into fermionic states (and vice versa) and squaring to zero, $\cQ^2=0$.\footnote{Technically, this corresponds to an $\cN=2$ model; see Section~\ref{sec:Supersymmetric Quantum Mechanics} for details. The symbol $\{\cdot,\cdot\}$ denotes the anticommutator: $\{a,b\} := ab + ba$. } An important role is played by states with exactly zero energy, $E=0$, called {\it supersymmetric ground states}. As noted by Witten \cite{Witten:1982df,Witten:1982im} these organize themselves into certain cohomology groups and thus  the question of whether the cohomology group is nontrivial is translated into the physical question of whether the spectrum contains  any states with $E=0$. In this way, the complexity of determining cohomology groups can be studied using the tools of  Hamiltonian complexity, applied to supersymmetric systems. 

A new set of questions thus arises at the intersection of many-body physics, the theory of computational complexity, and supersymmetry (see Fig.~\ref{QCT}) in what may be called ``quantum computational (co)homology.''

To make this concrete, let us illustrate the interplay at this intersection with the beautiful fermion hard-core Hamiltonian  of  Fendley,  Schoutens, and  de Boer \cite{Fendley:2002sg}. The Hamiltonian describes a set of fermionic modes on the vertices $i$ of a simple graph, $G=(V,E)$, created and annihilated by standard fermionic operators $a_i^\dagger$ and $a_i$, respectively.\footnote{The operators $a_i^\dagger$ and $a_i$ create and destroy a fermionic mode (or simply a fermion) at site $i$, respectively. These satisfy the standard anticommutation relations--see Section~\ref{sec:Supersymmetry in lattice systems} for details or, e.g., \cite{bravyi2002fermionic}.}  A fermion is allowed to occupy an empty vertex provided none of its neighbors is occupied, hence the term ``hard-core.'' The Hamiltonian is given by
\equ{\label{HHC}
H=\sum_{ (i,j)\in E} P_i a_i^\dagger a_j P_j+\sum_{i\in V}P_i\,,
}
where $P_i=\prod_{j|(i,j)\in E} (1-a_j^\dagger a_j)$ is a projector which vanishes if any vertex adjacent  to $i$ is occupied. The first term in \eqref{HHC} is a hopping term among adjacent sites (as long as the hard-core condition is satisfied). The second term is required for supersymmetry; the supercharge is given by $\cQ=\sum_i a_i^\dagger P_i$ as can be easily checked. Given a graph $G$ one can define an associated simplicial complex, known as the independence complex $I(G)$ (see Section~\ref{sec:betti_numbers}). It turns out that supersymmetric ground states $\ket{\Omega_{l+1}}$ of the Hamiltonian \eqref{HHC}  with fermion number $F=l+1$ (i.e., with $l+1$ vertices occupied by a fermionic mode), are  in one-to-one correspondence with $l$-dimensional holes in the corresponding independence complex:
\eqs{
\parbox{0.2\textwidth}{\centering $l$-dimensional holes \\ in $I(G)$ } \qquad \Longleftrightarrow \qquad H\ket{\Omega_{l+1}}=0\,.
}
We will return to this model in Section~\ref{sec:Betti numbers in Topological Data Analysis}. (See \cite{huijse2009supersymmetry} for a nice review of this model, its relation to the independence complex, and references therein for related work in the math literature.) 

Finding $l$-dimensional holes in the independence complex has applications in topological data analysis (TDA)~\cite{wasserman2018topological} and a quantum algorithm for this was presented in  \cite{LloydetalTDA}.\footnote{More precisely, the problem considered there relates to the clique complex, but this is equivalent to the independence complex of the complement graph $\bar G$.} An alternative approach to TDA is thus provided by implementing the fermionic mode Hamiltonian \eqref{HHC} -- gates simulating hopping terms in fermionic models have already been implemented on near-term devices~\cite{Foxen_2020}. We comment on this further in Section~\ref{sec:Betti numbers in Topological Data Analysis}, where we also propose an alternative, variational quantum algorithm for TDA.

 A number of supersymmetric models have been studied in the many-body  literature--see \cite{Nicolai:1976xp,1993hep.th...11138S,Fendley_2003,2004JPhA...37.8937Y,Fendley:2006mj,deGier:2015tpa,Fu_2017} for a non-exhaustive list. Supersymmetric systems can arise at special points in parameter space of systems that are well studied in the computational complexity literature, such as an open XXZ spin chain and adding a boundary magnetic field  with a precise value or the $t$-$J$ model at the point $J=-2t$ \cite{2004JPhA...37.8937Y}. Finding ground states of close cousins of these systems remains  $\QMA$-hard (see, e.g.,  \cite{cubitt2016complexity,o2021electronic}).  It is thus conceivable that such simple supersymmetric systems remain $\QMA$-hard and it would be very interesting to establish this.
 
 The goal of this paper, however, is not to focus on particular supersymmetric Hamiltonians but rather to study the computational complexity associated to a large class of  supersymmetric Hamiltonians and thus of a large class of cohomology problems. We view this as a first step towards a program of determining the computational complexity of a large variety of specific homological problems using the techniques of  Hamiltonian complexity, thus providing an overarching framework for studying the complexity of a variety of disparate computational problems. 
 
As we discuss in more detail below, the perspective just described leads to {\it promise} versions of the (co)homology problem. The notion of promise problems -- decision problems in which the input comes with a certain promise~\cite{goldreich2006promise} -- is central to quantum computation. Indeed, in the black box setting, without such promises quantum algorithms can at most provide polynomial speedups over classical algorithms \cite{beals2001quantum}, and in fact no complete problems are known for the class $\BQP$, only for the class ``$\mathsf{PromiseBQP}$'' (see~\cite{janzing2007simple} for a detailed discussion). It is thus essential when considering problems in  homological algebra (or any other field) from a quantum computational perspective to identify appropriate promises. This is precisely what a (supersymmetric) Hamiltonian complexity perspective accomplishes for homological algebra, translating the promises natural for Hamiltonians, such as properties of the ground state or spectral gap, into promises on the homological problems, allowing us to precisely pinpoint their computational complexity. One may hope that this perspective can lead to the identification of new problems with  provable quantum speed-ups (assuming widely believed conjectures in complexity theory). Indeed, as we discuss in detail below, the problem of estimating the rank of general cohomology groups (or Betti numbers) is one such problem.\footnote{More precisely, we show this is the case for the closely related problem of computing ``quasi'' Betti numbers, which we define in Section~\ref{sec:betti_numbers}.} It would be interesting to study whether this remains the case for particular cohomology problems, such as finding the number of $l$-dimensional holes in a simplicial complex. Although we do not establish the hardness of these particular problems here, the quantum Hamiltonian perspective we advocate provides a concrete setting in which to study this and other similar problems.

\

As illustrated above,  a  subclass of supersymmetric Hamiltonians connect to problems in algebraic topology, thus establishing a close link between certain problems in  topology and quantum computation. A well known relation between topology and quantum computation also arises in the context of topological quantum computation, a model of quantum computation based on the manipulation of anyons \cite{Freedman1998PNPAT,2001quant.ph..1025F} (see \cite{preskillnotes,Lahtinen_2017} for reviews). This has also led to a number of complexity results in computational topology, in particular in relation to the computation of the Jones polynomial \cite{Kitaev_2003,Freedman_2002,2000quant.ph..1108F,aharonov2005polynomial}. The connection here is via the study of topological quantum field theory (TQFT), specifically Chern-Simons theory, and its  relation to the Jones polynomial as uncovered by Witten \cite{Witten:1988hf}. The appearance of topology in supersymmetric quantum mechanics is less explicit, being related to supersymmetric ground states or to a particular class of ``supersymmetric'' operators. See \cite{Crichigno:2020vue} for a discussion of these operators in the context of quantum computation.

\ 

The high degree of symmetry of supersymmetric systems often leads to a strong computational control, sometimes even rendering them exactly solvable (see, e.g., \cite{Fendley_2003,Hagendorf_2014}). What makes supersymmetric systems most interesting from our perspective, however, is not that in some cases they may become ``simple'' from the perspective of a  mathematical physicist  but the fact that they remain computationally hard, implying the hardness the corresponding problems in homological algebra.

Although the main focus of this paper is on the problem of computing (co)homology groups, and estimating their ranks, we expect this interplay to extend beyond these problems. Similarly, we note that the field of  Hamiltonian complexity has led  to a number of interesting complexity results (see \cite{2014arXiv1401.3916G} for a review) and we conjecture that many of these  could be ``ported'' over to interesting statements about homological algebra.

\subsection{Summary of results}

We first show that the local Hamiltonian problem remains $\QMA$-complete for supersymmetric Hamiltonians. This is shown in Section~\ref{sec:The SUSY local Hamiltonian problem}, by reduction from the standard (non-supersymmetric) local Hamiltonian problem. 
This result becomes relevant to the study of (co)homology when considering the perfect completeness, or  $\QMA_1$, version of the problem, which we study in Section~\ref{sec:The k-local cohomology problem}. This is the problem of deciding whether the supersymmetric Hamiltonian has any supersymmetric ground states, i.e., states with {\it exactly} zero energy. Since supersymmetric ground states are in one-to-one correspondence with cohomology groups it thus follows that the problem of deciding whether a cohomology group is trivial or nontrivial is $\QMA_{1}$-hard. As we discuss, the problem is generally contained in $\QMA$ and thus its complexity lies somewhere between $\QMA_{1}$ and $\QMA$.

We then move on to the problem of determining the rank of the cohomology group, or Betti number. This is equivalent to counting supersymmetric ground states which is $\#\BQP$-complete, as we discuss in Section~\ref{sec:betti_numbers}.\footnote{$\#\BQP$ (the counting version of $\QMA$) is equal to $\sharpP$ under weakly parsimonious reductions~\cite{Brown_2011}.}
We then consider the problem of approximating Betti numbers up to some additive accuracy. As we discuss in detail, it is less clear how to precisely pinpoint the complexity of this problem, and following~\cite{gyurik} we thus consider a more general problem, which we call ``quasi Betti number estimation'' and which coincides with the problem of estimating Betti numbers when the input satisfies a certain promise.\footnote{Although we define the problem in terms of estimating Betti numbers, this can easily be framed as a decision problem from which the estimate can be recovered via binary search, and so it makes sense to talk about this problem in the context of complexity classes of decision problems such as $\BQP$, $\DQC$, etc.}
Using the tools from the previous section we show that the general quasi Betti number estimation  problem is $\DQC$-hard, and that there is a $\BQP$ algorithm for this problem and thus its hardness lies somewhere between $\DQC$ and $\BQP$. 
This shows that there exists an efficient quantum algorithm for this problem that cannot be dequantized, unless $\DQC \subseteq \BPP$. 
We stress that the quasi Betti number estimation problem is identical to the Betti number estimation problem only when an additional promise is given on the system, in the form of an inverse polynomial gap in the spectrum.
Unfortunately, it is not clear at the moment whether the quasi Betti number problem remains $\DQC$-hard in this case and thus the complexity of the Betti number problem remains open.\footnote{It is in fact possible to show that the true Betti number problem (counting only exactly-0 eigenvalues) is hard for a ``perfect-completeness version'' of $\DQC$ using techniques from~\cite{brandao_thesis} and an appropriate definition of that class. 
However, it is not clear that this result is very meaningful -- a perfect completeness version of $\DQC$ is not a particularly well-defined model, and it is unclear whether the class would be hard to simulate classically, as we suspect is the case for $\DQC$.} 
Studying this in detail may be of great interest, in particular for the independence (or clique) complex mentioned above in light of  possible implementations of a quantum algorithm for this problem in near-term devices.
We comment further about this in Section~\ref{sec:Betti numbers in Topological Data Analysis}.

\paragraph{Relation to previous work.} The current paper expands on the ideas  of \cite{Crichigno:2020vue}, which considered various aspects at the intersection of supersymmetry and quantum computation. The problem of counting (with signs) the number of supersymmetric ground states was shown to be $\#\P$-complete for a  $6$-local supercharge. The problem of determining the ground state energy of supersymmetric systems was posed, suggesting that the problem may remain $\QMA$-hard as we show here. In an independent set of developments, the question of the complexity of estimating normalized Betti numbers of the independence complex of a graph was considered by Gyurik et al. in~\cite{gyurik}. A relaxed version of the problem, ``approximate Betti number estimation'' (ABNE), was defined but its hardness  was left as an open problem. Instead, the authors showed that the more general problem of determining the ``low-lying spectral density'' (LLSD) of a Hamiltonian is  $\DQC$-hard. The relevance of this result to the problem of estimating normalized ``quasi'' Betti numbers, however, was left open. This question is addressed here, in the context of our other main results, where we show that estimating normalized ``quasi'' Betti numbers for general cohomology groups is $\DQC$-hard. The question on whether this remains the case for the clique complex, however, remains open.

\paragraph{Historical comment. }  Supersymmetry was originally proposed in the late 1960s in the context of string theory and quantum field theory as a possible fundamental symmetry of spacetime, with the consequence that for every fundamental bosonic particle there should be an accompanying fundamental fermionic particle. Since then, research in supersymmetry has grown into playing a prominent role in modern theoretical physics:  it is a powerful tool in charting the space of possible quantum field theories and deriving exact, non-perturbative, results; it is a cornerstone of string/M-theory and the study of quantum gravity via the holographic principle; and has led to various developments in pure mathematics including complex geometry, topology, and representation theory. As  mentioned above,  most important to us is the intimate relation between supersymmetry and cohomology. Returning to its origins, the question of whether supersymmetry is  realized as a fundamental symmetry at the level of particle physics has been an active area of theoretical and experimental research for several decades but still remains open. The class of supersymmetric systems that we consider here, however, are not affected by the answer to this question as supersymmetry appears in these systems not as a fundamental symmetry but as a symmetry of certain many-body systems which can, in principle, be engineered in the lab or as an emergent symmetry of certain condensed-matter systems at low energies. 

\paragraph{Experimental realizations.}

The systems we consider can, at least in principle, be realized in the lab as spin or qubit systems with specially tuned interactions to make them supersymmetric. An example is given by the  XXZ model with a specially tuned boundary magnetic field \cite{2004JPhA...37.8937Y}. See \cite{minar2020kink} and references therein for proposed realizations of the fermion hard-core model in 1D using trapped atoms. Other experimentally realizable systems include relativistic supersymmetric quantum systems  emerging at low energies at the boundary of a topological phase \cite{Grover:2013rc}, or from strongly interacting Majorana zero modes \cite{2015PhRvL.115p6401R}.  Thus, the complexity results  described  here and in \cite{Crichigno:2020vue} can be seen as a first step towards characterizing the computational capabilities of these experimentally realizable systems. As pointed out in \cite{Crichigno:2020vue}, the topological nature of certain subsectors of  supersymmetric quantum mechanics suggests a natural robustness of these systems against a certain type of noise. It would be interesting to study this further but leave this and others interesting questions for future work.

\paragraph{Organization of the paper.}

The paper aims at being accessible to both physicists and computer scientists and is as self-contained as possible. To make it accessible to computer scientists we begin in Section~\ref{sec:Supersymmetric Quantum Mechanics} with a review of supersymmetric quantum mechanics as well as a basic introduction to cohomology, focusing on the relation to supersymmetric quantum mechanics and the study of supersymmetric ground states. To make the paper accessible to physicists, we define the basics concepts of quantum complexity theory, in particular the definition of various quantum complexity classes, as they are needed. We refer to \cite{watrous2008quantum} for a survey of quantum complexity theory, to  \cite{2014arXiv1401.3916G} for a survey of  Hamiltonian complexity, and to  \cite{JunkerSUSYBook,Hori:2003ic} for a  broader review of supersymmetry and its myriad applications in statistical physics and mathematics. In Section~\ref{sec:The SUSY local Hamiltonian problem} we overview basic definitions in quantum complexity theory, define the local Hamiltonian problem for supersymmetric systems, and show that it remains $\QMA$-hard. The tools developed there are the basis for the problems of most interest to us, namely that of determining cohomology groups, which we study in Section~\ref{sec:The k-local cohomology problem}, and that of estimating Betti numbers which we study in Section~\ref{sec:betti_numbers}. We note that although we draw much of our intuition from the  SUSY local Hamiltonian problem, the  $k$-local cohomology problem can be formulated without any reference to supersymmetry and thus for readers who are mostly interested in computational (co)homology it is possible to skip directly to Sections~\ref{sec:The k-local cohomology problem} and~\ref{sec:betti_numbers}, which we have made as self-contained as possible, with little reference to previous sections. In Section~\ref{sec:Betti numbers in Topological Data Analysis} we focus our attention on the fermion hard-core model and its relation to the topic of topological data analysis (TDA), suggesting some approaches towards TDA on near-term devices. Finally, we conclude with a discussion of open problems in Section~\ref{Discussion and Outlook}.

% !TEX root = ./SUSY_Complexity.tex

%%%%%%%%%%%%%%%%%%%%
\section{Supersymmetric Quantum Mechanics}
\label{sec:Supersymmetric Quantum Mechanics}

We begin by reviewing basic aspects of supersymmetric (SUSY) quantum mechanics  \cite{Nicolai:1976xp, Witten:1981nf,Witten:1982df,Witten:1982im}, including the supersymmetric algebra,  the structure of ground states and  the role of cohomology. We mostly follow the original exposition in \cite{Witten:1982df,Witten:1982im} and also Chapter 10 of  \cite{Hori:2003ic}.\footnote{See also \cite{JunkerSUSYBook} for a textbook focusing on applications of supersymmetry to statistical mechanics.}

\subsection{Definition, algebra, and spectrum}

The Hilbert space of  any quantum mechanical theory can be decomposed as 
\equ{\label{HFB}
\cH= \cH^{B}\oplus \cH^{F}\,,
}
where $\cH^{B}$ and $\cH^{F}$ are the spaces of bosonic and fermionic states, respectively. These are distinguished by an operator, denoted $(-1)^{F}$, which  acts  as $+1$ on bosonic states and as $-1$ on fermionic states.  By definition, a supersymmetric quantum mechanics is a quantum mechanical system in which there is a set of Hermitian operators, $Q_I$, $I=1,\ldots,\cN$, sending states in $\cH^{B}$ into states in $\cH^{F}$ and vice versa. The operators $Q_{I}$ are the generators of supersymmetry, or ``supercharges,'' and one says that the quantum mechanics has $\cN$ real supercharges.  A basic consistency requirement is that 
\equ{\label{QF}
\{(-1)^{F},Q_I\}=0\,,
}
where $\{\cdot,\cdot\}$ is the anticommutator. In addition, one imposes that the supercharges satisfy the algebra
\equ{\label{qiqjH}
\{Q_I,Q_J\} =\delta_{IJ}\, H\,,\qquad \qquad I,J\in\{1,\ldots,\cN\}\,,
}
where $H$ is, by definition,  the Hamiltonian of the system. In words, an $\cN=1$ supersymmetric Hamiltonian is thus a Hamiltonian that can be written as the {\it square} of a parity-odd, Hermitian, operator $Q$. If there are $\cN$ such independent operators, all anticommuting with each other,  then one says that the systems ``has $\cN$ real supercharges.'' 

It follows directly from the algebra above that
\equ{
[Q_{I},H]=0\,,\qquad [(-1)^{F},H]=0\,,
}
and thus the operators $Q_{I}$ and $(-1)^{F}$ are symmetries of such systems. Another crucial property of supersymmetric systems is that the spectrum is positive semi-definite:
%\equ{
%\bra{\psi}H\ket{\psi}=\frac{1}{\cN}\bra{\psi}\sum_{I=1}^{\cN}\{Q_{I},Q_{I}\}\ket{\psi}=\frac{2}{\cN}\sum_{I=1}^{\cN}\norm{Q_{I}\ket{\psi}}^{2}\geq 0\,.
%}
\equ{
\bra{\psi}H\ket{\psi}=\bra{\psi}\{Q_{I},Q_{I}\}\ket{\psi}=2\norm{Q_{I}\ket{\psi}}^{2}\geq 0\,,\qquad \text{for each $I$.}
}
In particular, a state $\ket{\Omega}$ has $E=0$ if and only if it preserves all the supersymmetries, i.e., 
\equ{\label{qomega0}
Q_{I}\ket{\Omega}=0\,, \qquad \qquad \qquad I=1,\ldots,\cN\,.
}
Such states, if they exist, are called {\it supersymmetric ground states} and will play a fundamental role in what follows.

\begin{figure}
\begin{center}
\includegraphics[scale=1.6]{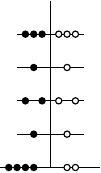}
\put(-85,130){bosonic} \put(-30,130){fermionic}
\put(-130,50){$E>0$}
\put(-130,0){$E=0$}
\caption{The spectrum of supersymmetric quantum mechanics. Black dots represent bosonic states and white ones fermionic states. All states with $E>0$ come in boson/fermion pairs.  Supersymmetric ground states ($E=0$), if there are any, are not necessarily paired.}
\label{fig:spectrum}
\end{center}
\end{figure}

A crucial property of  states with $E>0$ is that they come in boson-fermion pairs with the same energy. To see why this is the case consider for example a bosonic state $\ket{b}$ with $E>0$ and thus there exists at least one supercharge, say  $Q:= Q_{1}$, which does not annihilate the state. Then, the fermionic state  $\ket{f}:= \frac{1}{\sqrt E}Q\ket{b}$ is normalized and has the same energy as $\ket{b}$, since $Q$ commutes with the Hamiltonian. One then says that the states  $(\ket{b},\ket{f})$ form a ``supersymmetric doublet'' and furnish a two-dimensional representation of supersymmetry. When there are several supercharges one may apply multiple $Q_{I}$'s, obtaining a longer multiplet of bosonic and fermionic states but these can be constructed by combining  doublets for each supercharge. The conclusion is that, regardless of the total number of supercharges, for each bosonic state with $E>0$ there is a corresponding fermionic state with the same energy (see Figure~\ref{fig:spectrum}).\footnote{In this paper we shall be mostly interested in finite-dimensional Hilbert spaces and thus the spectrum is discrete.}

The situation for supersymmetric ground states, $E=0$, on the other hand,  is very different. As seen above,  in this case all supercharges annihilate the state, $Q_I\ket{\Omega}=0$, and thus  supersymmetric ground states cannot be paired into  doublets. Rather,  they  stand by themselves, in the trivial representation of supersymmetry. Thus, in general there can be an arbitrary number, $n_{E=0}^{B}$ and $n_{E=0}^{F}$, of bosonic and fermionic supersymmetric ground states, respectively.

Given a supersymmetric system, an interesting question is whether the vacuum state $\ket{\Omega}$ preserves the full supersymmetry of the system, i.e., $\cQ_I\ket{\Omega}=0$ for all $I$. As shown above, this is the case only if the ground state has $E=0$.  When this is not the case, one says that the system breaks supersymmetry ``spontaneously.'' This is in general a rather difficult question to answer (in fact, it is $\QMA_1$-hard as shown below) as this typically requires exact, non-perturbative, methods to determine if the energy of a state is {\it exactly} zero. This led Witten to define the ``index'' \cite{Witten:1982df},
\equ{\label{wittenN=1}
\cI := n^{B}_{E=0}-n^{F}_{E=0}\,,
}
given by the difference in the number of bosonic and fermionic supersymmetric ground states. The quantum mechanical system  of  Figure~\ref{fig:spectrum}, for instance, has Witten index $\cI=4-2=2$.  Under certain conditions, the ``Witten index'' \eqref{wittenN=1} can be computed exactly by perturbative methods (see \cite{Witten:1982df} for a detailed discussion). This simplicity, however, comes at the expense of only providing partial information on supersymmetry breaking; if it is nonzero the  cohomology is necessarily nontrivial but if it vanishes it is inconclusive (it only implies that $n^{B}_{E=0}=n^{F}_{E=0}$ but these could vanish or not). Despite its relative simplicity, computing the Witten index is $\#\P$-hard \cite{Crichigno:2020vue}.

\

It is useful to distinguish operators acting on the Hilbert space according to the action of $(-1)^{F}$ on the operator. An operator  is said to be bosonic or fermionic if it commutes or anticommutes with the parity operator, i.e.,  
\equ{
[(-1)^{F},\cO_{B}]=0\,,\qquad \qquad \{(-1)^{F},\cO_{F}\}=0\,,
}
respectively. It follows that when acting on a state with definite parity, bosonic operators preserve its parity and fermionic operators flip it. The supercharges $Q_{I}$, for example, are fermionic operators and the Hamiltonian is a bosonic operator.

\paragraph{$\cN=2$ systems. } From now on we will focus on systems with $\cN=2$ supersymmetry. In this case it is useful to combine the Hermitian supercharges into the complex combinations 
\equ{
\cQ:= \frac{1}{\sqrt 2} (Q_1+ i Q_2)\,,\qquad \cQ^\dagger:=\frac{1}{\sqrt 2} (Q_1- i Q_2)\,.
}
The supersymmetry algebra \eqref{qiqjH} implies these complex supercharges are nilpotent of degree two, i.e., 
\equ{\label{nilpQ}
\cQ^{2}=(\cQ^{\dagger})^{2}=0\,.
}
The Hamiltonian can be written as
\equ{\label{QQcom}
H=\{\cQ,\cQ^\dagger\}\,.
}
As discussed above, supersymmetric ground states (states with exactly zero energy) must preserve all supersymmetries and thus
\equ{\label{SUSYgs}
\cQ\ket{\Omega}=\cQ^\dagger\ket{\Omega}=0\,.
} 
One of the most important properties of $\cN=2$ supersymmetric quantum mechanics is that supersymmetric ground states are in one-to-one correspondence with the cohomology group of $\cQ$. We review this next.

\subsection{Supersymmetric ground states and cohomology}
\label{sec:Cohomology, homology, and ground states}

A special role is played by states that are ``closed'' or ``exact'' with respect to the supercharge. A state is said to be ``$\cQ$-closed'' if it is in the kernel of $\cQ$, i.e.,
\equ{
    \cQ\ket{\psi} = 0\,.
}
A state is said to be ``$\cQ$-exact'' if it is in the image of $\cQ$, i.e., if there exists some state $\ket{\psi'}$ such that 
\equ{
    \ket{\psi} = \cQ\ket{\psi'}\,.
}
Consider the set of states in the kernel and image of $\cQ$, i.e., 
\equ{
\text{Ker}(\cQ)= \{\ket{\psi}\;|\; \cQ\ket{\psi}=0\}\,,\qquad \text{Im}(\cQ)= \{\ket{\psi}\;|\; \ket{\psi}=\cQ\ket{\psi'}\}\,.
}
The same definitions hold  for  $\cQ^{\dagger}$ and supersymmetric ground states \eqref{SUSYgs} are closed with respect to both $\cQ$ and $\cQ^{\dagger}$. Now, since $\cQ^{2}=0$, all states that are $\cQ$-exact are also $\cQ$-closed and thus $\text{Im}(\cQ)\subseteq\text{Ker}(\cQ)$. Then, the  vector space 
\equ{
H(\cQ):= \frac{\text{Im}(\cQ)}{\text{Ker}(\cQ)}
} 
is well defined and known as the {\it cohomology group} of $\cQ$.\footnote{In this case the cohomology group has the additional structure of a vector space but this is not necessary. See Section~\ref{sec:The k-local cohomology problem} for the general definition. }  It consists of all states in the Hilbert space which are $\cQ$-closed but not $\cQ$-exact. In fact, due to the $\Bbb Z_{2}$-grading of the Hilbert space, $\cH=\cH_{B}\oplus \cH_{F}$, there is a more refined structure and one can define a $\Bbb Z_{2}$-graded ``complex''  of vector spaces,
\equ{\label{complexBF}
C: \qquad \cH^{F}\xrightarrow{\cQ}\cH^{B}\xrightarrow{\cQ}\cH^{F}\xrightarrow{\cQ}\cH^{B}\,.
}
Then, one can define two cohomology groups:
\eqs{
H^{B}(\cQ):=\,& \frac{\text{Ker}(\cQ): \cH^{B}\to \cH^{F}}{\text{Im}(\cQ): \cH^{F}\to \cH^{B}}\,,\\ 
H^{F}(\cQ):=\,& \frac{\text{Ker}(\cQ): \cH^{F}\to \cH^{B}}{\text{Im}(\cQ): \cH^{B}\to \cH^{F}}\,,
}
consisting of all bosonic and fermionic states which are $\cQ$-closed but not $\cQ$-exact, respectively. The full cohomology group is 
\equ{
H(\cQ)=H^{B}(\cQ)\oplus H^{F}(\cQ)\,.
} 
A  complex for which $\text{Ker}(\cQ)=\Im(\cQ)$, and thus $H(\cQ)=0$,  is called an {\it exact} sequence. Thus, cohomology groups are a measure of the failure of a  sequence to be exact.\footnote{For a textbook on algebraic topology see, e.g., \url{http://pi.math.cornell.edu/~hatcher/AT/ATplain.pdf} and for computational aspects see \cite{CompTopBook,zomorodian_2005}.}

Physically, an exact sequence corresponds to a system with no supersymmetric ground states. To see this, note that states with $E>0$ that are $\cQ$-closed  are necessarily $\cQ$-exact, as follows directly from the supersymmetry algebra \eqref{QQcom}. Indeed, applying the algebra  on a state with energy $E$ one has $\{\cQ,\cQ^\dagger\}\ket{\psi}=E\ket{\psi}$. Using that the state is $\cQ$-exact and assuming $E>0$, implies $\ket{\psi}= \cQ \ket{\psi'}$ with $\ket{\psi'}:= \frac{1}{E} \cQ^\dagger \ket{\psi}$ and thus the state $\ket{\psi}$ is  $\cQ$-exact.  On the other hand, all  states with $E=0$ are necessarily closed, as we have already seen, but   {\it cannot} be exact. To show this  suppose for the sake of contradiction that is the case, i.e., that there exists a state $\ket{\psi'}$ such that $\ket{\Omega}=\cQ\ket{\psi'}$. Since $\cQ$ commutes with $H$, the state $\ket{\psi'}$ must also have $E=0$ and hence must be $\cQ$-closed,  $\cQ\ket{\psi'}=0$. The upshot of this analysis is that $\text{Ker}(\cQ)|_{E>0}=\text{Im}(\cQ)|_{E>0}$ while  $\text{Im}(\cQ)|_{E=0}=0$. These two statements together imply
\equ{
 H^B(\cQ)= \cH^B_{E=0}\,,\qquad \qquad  H^F(\cQ)= \cH^F_{E=0}\,,
}
as claimed.  That is, the cohomology of  $\cQ$ is given by the set of supersymmetric ground states. In particular, the total number of SUSY ground states is given by the dimension of the full cohomology group,
\equ{\label{totnsattes}
n_{E=0}^{B}+n_{E=0}^{F}=\dim H^{B}(\cQ) +\dim H^F(\cQ)\,.
}
The Witten index of such systems is then given by 
\equ{\label{Wcoh}
\cI= n_{E=0}^{B}-n_{E=0}^{F}= \dim H^B(\cQ)- \dim H^F(\cQ)\,,
}
which coincides with the ``Euler charactersitic'' of the complex \eqref{complexBF}.

To summarize, finding the space of supersymmetric ground states is equivalent to finding the cohomology group $H(\cQ)=H^{B}(\cQ)\otimes H^{F}(\cQ)$ and computing the Witten index of the system is equivalent to computing the Euler characteristic of the cochain complex. This  will play a crucial role in studying the complexity of the cohomology problem in Section~\ref{sec:The k-local cohomology problem}.

\paragraph{Systems with $F\in \Bbb N_{0}$.} In certain models, a more refined structure may appear. A case of particular interest to us is when there exists an operator $F$   (usually called the fermion number operator)  with integer-valued eigenvalues, $\Bbb N_{0}=\{0,1,\cdots\}$, and thus one can identify the parity operator $(-1)^{F}$, literally as ``minus one to the power $F$, '' or
\equ{
(-1)^{F}=e^{i\pi F}\,.
}
The Hilbert space can then be graded more finely, with respect to eigenvalues  of $F$. We will focus on systems in which the fermion number is bounded above and thus
\equ{
\cH=\cH^{0}\oplus \cH^{1}\oplus\cdots \oplus \cH^{m}\,,
}
where $\cH^{p}$ is spanned by all states with $F=p$ and $m$ is the maximum fermion number of the system. Then $\cH_{B}$ and $ \cH_{F}$ correspond to states with even and odd fermion number, respectively.

By definition, an operator $\cO$ has definite fermion number $f$, if
\equ{
[F,\cO]= f \cO\,.
}
We will informally say that such an operator has fermion number $F=f$. Note that the action by the operator $\cO$ with fermion number $f$ on a state $\ket{s}$ with fermion number $k$ gives a state $\cO\ket{s}$ with fermion number $f+k$; $F \cO\ket{s}=(f+k)\cO\ket{s}$. An operator with $F=0$ thus preserves the fermion number of the states it acts on and the operator is block diagonal in a the basis of states organized by fermion number.

Note that all that is required for defining supersymmetry is that $\cQ$ maps states with even fermion number to states with odd fermion number and vice versa. Thus, the supercharge $\cQ$ must have definite parity, $(-1)^{F}=-1$, but not necessarily definite fermion number $F$ (the supercharge may be given, for instance, by a sum of terms each with different (odd) fermion numbers). In the case that $\cQ$ has  a definite fermion number and equal to $1$, i.e., 
\equ{
[F,\cQ]=\cQ\,,
}
the supercharge acts by sending states in $\cH^{p} $ to states in $\cH^{p+1}$ and the complex \eqref{complexBF} splits into
\equ{\label{complexN}
C:\qquad 0\xrightarrow{\cQ}\cH^{0}\xrightarrow{\cQ}\cH^{1}\xrightarrow{\cQ} \cdots \xrightarrow{\cQ}\cH^{m}\xrightarrow{\cQ}0 \,,
}
where we have included the empty spaces at the ends to emphasize there are no states with $F<0$ or $F>m$. Then one may define the cohomology groups at each level, 
\eqs{
H^{p}(\cQ):=\,& \frac{\text{Ker}(\cQ): \cH^{p}\to \cH^{p+1}}{\text{Im}(\cQ): \cH^{p-1}\to \cH^{p}}\,.
}
Then, each cohomology group $H^p$ cohomology is in one-to-one correspondence with the space of supersymmetric ground states with fermion number $p$:
\equ{
 H^p(\cQ)= \cH^p_{E=0}\,.
}
The $p$th Betti number $\beta_{p}$ is defined as the dimension of the $p$th cohomology group,
\equ{
\beta_{p}:= \dim H^{p}(\cQ)\,.
}
The Witten index of the system is given by the alternating sum
\equ{
\cI=\sum_p (-1)^p \beta_p\,,
}
which coincides with the  Euler characteristic of the complex \eqref{complexN}.

\subsection{Supersymmetry in lattice systems}
\label{sec:Supersymmetry in lattice systems}

A number of interesting models of supersymmetric quantum mechanics have been studied in the literature, with applications ranging from statistical mechanics to string theory and with deep connections to algebraic topology.\footnote{For instance, a supersymmetric version of a free particle moving in a curved manifold leads to a description of de Rham cohomology and turning on a potential leads to Morse theory \cite{Witten:1982im}.}  In this paper we are mostly interested in the realization of $\cN=2$ quantum mechanics in a {\it discrete} system of spins or qubits, as first discussed in \cite{Nicolai:1976xp}. It is most convenient to use the fermionic representation of states. 
In this model one considers $m$ vertices of a graph $G$, each of which can be occupied by 0 or 1 spinless fermions. A fermion at vertex $i$ is created by an operator $a_{i}^{\dagger}$ and annihilated by $a_{i}$,  satisfying the standard  anticommutation relations
\equ{
\{a_i,a_j^\dagger\}= \delta_{ij}\,,\qquad \qquad i,j\in\{1,\ldots,m\}\,.
}
$\ket{0\dots0}$ is the state with no fermionic excitations, which is annihilated by all $a_i$, i.e.,  $a_{i}\ket{0\dots0}=0$ for all $i$. The fermion number operator in Fock space is defined as
\equ{
F=\sum_{i=1}^{m} a^{\dagger}_{i}a_{i}\,.
}
The $2^{m}$-dimensional Fock space is then constructed by acting  with creation operators on the vacuum state, as
\equ{\label{mapstates}
\ket{n_{1}\dots n_{m}}= (a_{1}^{\dagger})^{n_{1}}\cdots (a_{m}^{\dagger})^{n_{m}}\ket{0\dots0}\,.
}
Note that the fundamental modes $a_i^\dagger$ are fermionic. However, the multiparticle states $\ket{n_1\dots n_m}$ can be either bosonic or fermionic  depending on whether the total number of fermionic particles in the state is even or odd, respectively, i.e.,
\eqs{\label{paritystrings}
(-1)^{F}\ket{n_{1}\dots n_{m}}=\begin{cases} +\ket{n_{1}\dots n_{m}}&\text{for $\sum_{i} n_{i}\in 2\Bbb Z$}\\ -\ket{n_{1}\dots n_{m}} &\text{for $\sum_{i} n_{i}\in 2\Bbb Z+1$} \end{cases}
}

Any operator in Fock space can be constructed as a polynomial in the fermionic operators $a_{i}$ and $a_{i}^{\dagger}$. The supercharge $\cQ$ must be a polynomial with only odd degree monomials in order to be consistent with its fermionic nature. In the model of \cite{Nicolai:1976xp} the supercharges are cubic functions of the creation and annihilation operators. More generally one can write
\eqss{
\cQ=\sum_{i}(\alpha_{i}a_{i}^{\dagger}+\bar \alpha_{i}a_{i})+\,&\sum_{ijk}( \alpha_{ijk}a^{\dagger}_{i}a^{\dagger}_{j}a^{\dagger}_{k}+\beta_{ijk}a^{\dagger}_{i}a^{\dagger}_{j}a_{k}+\gamma_{ijk}a^{\dagger}_{i}a_{j}a_{k})  \\
+\,&\sum_{ijk}( \bar \alpha_{ijk}a_{i}a_{j}a_{k}+\bar\beta_{ijk}a_{i}a_{j}a^{\dagger}_{k}+\bar\gamma_{ijk}a_{i}a^{\dagger}_{j}a^{\dagger}_{k})+\cdots
}
where  the $\alpha_{i},\bar \alpha_{i},\alpha_{ijk},\bar \alpha_{ijk}$, etc. are complex coefficients which are constrained  by the nilpotency condition \eqref{nilpQ} and the ellipses denote possible higher power terms. Any such supercharge defines a supersymmetric Hamiltonian via \eqref{QQcom} which, by construction, is parity preserving. If each of the terms above involve at most $k$ fermionic modes we say the supercharge is $k$-local and the resulting Hamiltonian is (at most) $2k$-local. 

Note that the supercharge above has definite parity $(-1)^{F}$ but does not necessarily have a definite fermion number $F$, as it is given by the sum of terms with different odd values of $F$. We will be particularly interested in a large family of supersymmetric systems which arises from imposing two separate conditions on the supercharge.  The first is that the supercharge has definite fermion number $F=1$. We require this for later applications to the cohomology and Betti number problems. Any supercharge with $F=1$ can be written as 
\equ{
\cQ=\sum_{i}a_{i}^{\dagger}\, B_{i}(a,a^{\dagger})\,, 
}
where the $B_{i}$ are bosonic functions with $F=0$ and further constrained by the condition that the supercharge is nilpotent, $\cQ^{2}=0$. The second condition we will impose is that the supercharge is $k$-local and thus each $B^{i}$ above can be written as
\equ{
B_{i}(a,a^{\dagger})=\sum_{S_{i}} \cB_{S_{i}}(a,a^{\dagger})
}
where for each $i$,  $S_{i}\subseteq\{1,\dots,m\}$ is a subset of $m$ vertices of size at most $k-1$, and the $\cB_{S_i}(a,a^{\dagger})$ are bosonic operators all with $F=0$ acting on the sites $S_{i}$. We thus require that each $\cB_{S_{i}}$ is a $(k-1)$-local operator in fermionic space with fermion number $F=0$. 

\paragraph{Examples.} 

The simplest model is  the 1-local supercharge
\equ{
\cQ=\sum_{i}a_{i}^{\dagger}\,,
}
which one can easily check squares to zero and has $F=1$. The resulting Hamiltonian, however, is proportional to the identity operator and is thus rather trivial (in particular, it has no supersymmetric ground states). Examples with a nontrivial Hamiltonian are the original models considered in \cite{Nicolai:1976xp} or the $\cN=2$ version of the SYK model \cite{Fu:2016vas}, given by
\equ{
Q=\sum_{ijk} \alpha_{ijk}a^{\dagger}_{i}a^{\dagger}_{j}a^{\dagger}_{k}\,,
}
where the $\alpha_{ijk}$ is completely antisymmetric and the supercharge has $F=3$. This model has an exponential number of supersymmetric ground states.  

Finally, a very interesting class of models is the fermion hard-core models of \cite{Fendley:2002sg}. These are specified by a choice of a graph $G=(V,E)$ and  
\equ{
\cQ=\sum_{i\in V}a_{i}^{\dagger}\, P_{i}\,,\qquad \qquad P_{i}= \prod_{j\,|\,(i,j)\in E  }(1-\hat n_{j}) \,.
}
The supercharge has $F=1$ and is $(\delta+1)$-local, with $\delta$ the maximum degree of the graph. The resulting Hamiltonian is the one in \eqref{HHC}. We will return to this model in Section~\ref{sec:Betti numbers in Topological Data Analysis} in relation to topological data analysis.

\

In the next few sections we will not focus on particular models but rather on studying the computational complexity of generic supersymmetric systems and its implications for the (co)homology problem.

\section{The SUSY Local Hamiltonian Problem}
\label{sec:The SUSY local Hamiltonian problem}

In this section, we determine the complexity of determining the ground state of supersymmetric systems, i.e., a supersymmetric version of  {\sc $k$-local Hamiltonian} studied by Kitaev.  It was suggested in \cite{Crichigno:2020vue} that the problem remains  $\QMA$-complete for quantum mechanical Hamiltonians with $\cN=2$ supersymmetry and we show here that this is indeed the case.  As we discuss in detail in Section~\ref{sec:The k-local cohomology problem}, a  consequence of this result is that a certain promise version of the cohomology problem is $\QMA_{1}$-hard and in certain cases $\QMA_{1}$-complete (we define $\QMA$ and related complexity classes in Appendix~\ref{app:complexity_classes}).

\subsection{Definitions}
We begin by reviewing {\sc $k$-local Hamiltonian}, defined as follows \cite{KitaevBook}: 
\mypromprob{{\sc  $k$-local Hamiltonian}}{An $n$-qubit local Hamiltonian, $H=\sum_{a} H_{a}$,  where each $H_a$ acts on at most $k$ qubits, and two numbers, $b>a\geq 0$, with  $b-a>\frac{1}{\text{poly}(n)}$.}{The lowest energy level $E_{0}$ of the Hamiltonian is either $E_{0}\leq a$ or $E_{0}\geq b$.}{Output $\yes$ if $E_{0}\leq a$ and $\no$ if $E_{0}\geq b$. \\}

\noindent A well known series of results established that 
{\sc $k$-local Hamiltonian} is $\QMA$-complete  for $k\geq 2$ \cite{KitaevBook,2003quant.ph..2079K,2004quant.ph..6180K}. This is also the case for $2$-local Hamiltonians in fermionic Fock space \cite{Liu_2007}. As reviewed in Section~\ref{sec:Supersymmetric Quantum Mechanics}, supersymmetric Hamiltonians are given in terms of the supercharge $\cQ$ which, due to its anticommuting nature, is most naturally described in fermionic Fock space. We thus formulate the supersymmetric Hamiltonian problem in fermionic Fock space. Precisely, we define the $\cN=2$ supersymmetric version of the local Hamiltonian problem as follows:\\

\mypromprob{{\sc SUSY $k$-local Hamiltonian}}{A local fermionic supercharge, $\cQ=\sum_{a} \cQ_{a}$, acting on the Fock space of $n$ fermionic modes, and two numbers $b'>a'\geq 0$, with  $b'-a'>\frac{1}{\text{poly}(n)}$.}{The  lowest energy level $E_{0}$ of the supersymmetric $k$-local Hamiltonian, $H=\{\cQ,\cQ^{\dagger}\}$, is such that either $E_{0}\leq a'$ or $E_{0}\geq b'$.}{Output $\yes$ if $E_{0}\leq a'$ and $\no$ if $E_{0}\geq b'$. \\}

\noindent Note that although a local supercharge is given as input, the degree of locality $k$ refers to that of the Hamiltonian, which is (at most) twice the degree of locality of $\cQ$. 

To provide a description of a $k$-local supercharge  $\cQ=\sum_{a} \cQ_{a}$ as input it is sufficient to explicitly provide the individual operators $\cQ_{a}$. We may at times drop the restriction that an operator is $k$-local, and instead require that it simply be \emph{sparse}, i.e., it has at most $\poly(n)$ non-zero entries per row and column, and that the input is provided via unitaries that allow us to access the non-zero entries. We call such access \emph{sparse access}.\footnote{We define this as having access to classical (or quantum) subroutines that, given an index $i$, can output a list running over all non-zero entries in row or column $i$, and a unitary that allows us to query the value of the $i,j$th entry of $d$. More precisely, we assume we have access to three unitaries $O_{\text{row}}$, $O_{\text{col}}$, and $O_{d}$, where: $O_{\text{row}} : \ket{i,l}\ket{0} \mapsto \ket{i,l}\ket{I(i,l)}$ and $O_{\text{col}} : \ket{j,l}\ket{0} \mapsto \ket{j,l}\ket{J(j,l)},$ for $0 \leq i, j \leq 2^n$, $0 \leq l \leq \poly(n)$, and where $I(i,l)$ denotes the $l$th non-zero entry in row $i$ of $d$ and $J(j,l)$ the $l$th non-zero entry in column $j$ of $d$. And finally $O_{d} : \ket{i,j}\ket{0} \mapsto \ket{i,j}\ket{d_{ij}}$.} In either case we always assume that the entries of $\cQ$ are upper bounded by a polynomial, and hence that $\|\cQ\| = O(\poly(n))$.

\subsection{The SUSY local Hamiltonian problem is $\QMA$-complete }
\label{sec:The SUSY local Hamiltonian problem is QMA-complete}

 The goal of this section is to prove the following Theorem:
\begin{thm}[]\label{thmQMA}
 {\sc SUSY $k$-local Hamiltonian} is $\QMA$-complete for $k\geq 4$.
\end{thm}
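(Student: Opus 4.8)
The plan is to establish both containment in $\QMA$ and $\QMA$-hardness, the latter by reduction from the ordinary local Hamiltonian problem.

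For containment, I would observe that {\sc SUSY $k$-local Hamiltonian} is essentially a special case of the sparse Hamiltonian problem. Given local (hence sparse) access to $\cQ$, the Hamiltonian $H=\cQ\cQ^\dagger+\cQ^\dagger\cQ$ is itself sparse: each of $\cQ,\cQ^\dagger$ has at most $\poly(n)$ nonzero entries per row and column, so their products and sum do as well, and the entries of $H$ are efficiently computable from those of $\cQ$. Since $\|\cQ\|=O(\poly(n))$ we also have $\|H\|=O(\poly(n))$. The sparse Hamiltonian problem is in $\QMA$ via the standard verification procedure (the prover sends a candidate ground state and the verifier estimates its energy by phase estimation), so containment is immediate and uses no supersymmetry-specific input.

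The substance is $\QMA$-hardness. The chief difficulty is structural: any $H=\{\cQ,\cQ^\dagger\}$ is automatically positive semidefinite, and its ground energy is a minimum over \emph{all} fermion-number sectors, so a careless encoding can be ruined by accidental zero-energy (cohomology) states living in unintended sectors; at the same time $\cQ$ must be kept both local and nilpotent, $\cQ^2=0$. A clean gadget that finesses the sector problem is the following. Take a $\QMA$-complete, parity-even local Hamiltonian $H_{\mathrm{target}}$, shifted by a known constant so that $H_{\mathrm{target}}\geq 0$ while preserving the promise gap, and adjoin a single ancillary fermionic mode $d$. Set $\cQ=d^\dagger\,H_{\mathrm{target}}$. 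Since $H_{\mathrm{target}}$ is even and acts on modes disjoint from $d$, it commutes with $d,d^\dagger$, so $\cQ$ is parity-odd and $\cQ^2=(d^\dagger)^2H_{\mathrm{target}}^2=0$; moreover each term of $\cQ$ touches only $d$ and the few modes of a single term of $H_{\mathrm{target}}$, so $\cQ$ is local. A direct computation then gives
\[
H=\{\cQ,\cQ^\dagger\}=n_d\,H_{\mathrm{target}}^2+(1-n_d)\,H_{\mathrm{target}}^2=H_{\mathrm{target}}^2,
\]
which is independent of the ancilla and hence has \emph{no} spurious sectors: its ground energy is exactly $\lambda_{\min}(H_{\mathrm{target}})^2$. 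Choosing thresholds $a'=a^2$ and $b'=b^2$ gives a valid promise with inverse-polynomial gap, since $b^2-a^2\geq(b-a)^2$ when $a\geq 0$. This already establishes $\QMA$-hardness for some constant locality.

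The remaining issue, and the point where I expect the real work to lie, is pushing the locality down to $k=4$. The squaring trick doubles the locality of the input, so even starting from the minimally-local $\QMA$-complete fermionic Hamiltonian is not enough by itself. To reach $k=4$ I would instead realize $H_{\mathrm{target}}$ \emph{linearly} as the relevant block of $\{\cQ,\cQ^\dagger\}$, starting from the $2$-local ($2$-body) fermionic Hamiltonian shown $\QMA$-complete in~\cite{Liu_2007}. This forces a multi-term supercharge, for which nilpotency is no longer automatic—the cross terms of the form $c_a^\dagger c_b^\dagger[h_a,h_b]$ obstruct $\cQ^2=0$ precisely when the constituent pieces fail to commute—and one must additionally lift the unwanted fermion-number sectors using penalty terms engineered \emph{inside} $\cQ$ (one cannot simply add penalties to $H$). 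Balancing nilpotency, locality, and this sector control simultaneously is the main obstacle, and it is what separates the easy constant-$k$ reduction above from the sharp $k=4$ statement.
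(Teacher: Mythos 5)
Your containment argument and your core hardness gadget essentially coincide with the paper's. For containment, the paper deliberately avoids arguing about access to $H=\{\cQ,\cQ^{\dagger}\}$ (it even remarks that sparse access to $\cQ$ does not obviously yield efficient access to the anticommutator) and instead exploits the supersymmetry: since $H=(\cQ+\cQ^{\dagger})^{2}$, the verifier runs phase estimation on $e^{i(\cQ+\cQ^{\dagger})}$, the witness being an eigenstate of $\cQ+\cQ^{\dagger}$ with eigenvalue in $[-\sqrt{a'},\sqrt{a'}]$. Your alternative---that a product of $\poly(n)$-sparse matrices is itself $\poly(n)$-sparse with efficiently computable entries, so $H$ admits sparse access directly---is defensible and reaches the same conclusion, though the paper's route is cleaner. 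For hardness, your $\cQ=d^{\dagger}H_{\mathrm{target}}$ is, up to replacing the single ancilla mode by $\tfrac{1}{\sqrt 2}(a_{0}^{\dagger}+b_{0}^{\dagger})$, exactly the paper's construction: there $\cQ=\tfrac{1}{\sqrt 2}(a_{0}^{\dagger}+b_{0}^{\dagger})B$ gives $H=B^{2}$ with thresholds $(a',b')=(a^{2},b^{2})$ and $b'-a'\geq (b-a)^{2}$. (You are in fact more careful than the paper in insisting on a shift making the target PSD before squaring; the paper reduces from the qubit problem and therefore adds the penalty $B_{\mathit{pen}}$ to lift the unphysical occupation sectors of its two-modes-per-qubit encoding, rather than invoking a parity-even fermionic $\QMA$-complete Hamiltonian as a black box.)

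The genuine gap is your final step. You assert that the squaring trick cannot reach $k=4$ even from a minimally local starting point, and you propose---but do not carry out---a linear realization of $H_{\mathrm{target}}$ as a block of $\{\cQ,\cQ^{\dagger}\}$, with multi-term supercharges, nilpotency obstructions from cross terms, and penalties engineered inside $\cQ$. No such construction appears in the paper, and none is needed: in the paper's convention the parameter $k$ is the locality of the \emph{Hamiltonian}, at most twice that of the supercharge, and locality is counted per encoded site---each qubit becomes the pair of modes $(a_{i},b_{i})$, and the map sending $\sigma_{i}^{\pm},\sigma_{i}^{z}$ to bilinears is site-local, so a $k$-local qubit operator maps to a $k$-local fermionic operator. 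Starting from {\sc $2$-local Hamiltonian}, which is $\QMA$-hard \cite{2004quant.ph..6180K}, the operator $B$ is $2$-local and $H=B^{2}$ is $4$-local (the site-$0$ dependence cancels in $\{\cQ,\cQ^{\dagger}\}$), so the squaring reduction already delivers $k=4$ and the theorem follows with no further work. Your worry corresponds to counting individual fermionic modes, under which $B^{2}$ would be $8$-mode-local; that is the strictly harder regime of smaller $k$, which the paper explicitly leaves open for $2\leq k\leq 3$. As written, then, your proposal establishes containment and $\QMA$-hardness at some constant locality but does not prove the stated $k\geq 4$ bound: the missing piece is not a harder gadget but the bookkeeping observation that squaring a $2$-local operator yields a $4$-local Hamiltonian under the paper's locality convention.
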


\noindent We begin by  first establishing the following proposition:

\begin{proposition}[]\label{thmQMAcontainment}
 {\sc SUSY $k$-local Hamiltonian} is in $\QMA$ for any constant $k$, and remains in $\QMA$ when the $k$-local restriction is dropped and we require only that $\cQ$ is sparse.
\end{proposition}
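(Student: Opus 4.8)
The plan is to place the problem in $\QMA$ with the standard witness-and-verify template: the prover supplies a purported low-energy state, and an efficient quantum verifier estimates its energy under $H=\{\cQ,\cQ^\dagger\}$ to inverse-polynomial additive accuracy and compares against the threshold $(a'+b')/2$. Since a constant-$k$-local supercharge yields a Hamiltonian that is automatically sparse in the occupation-number basis---a sum of $\poly(n)$ terms each supported on $O(1)$ modes has at most $\poly(n)$ nonzero entries per row---I would prove the sparse statement directly and recover the $k$-local case as an immediate special case.

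The first step is to show that $H$ is sparse and that efficient sparse access to it can be built from the assumed sparse access to $\cQ$. Writing $H=\cQ\cQ^\dagger+\cQ^\dagger\cQ$, the entries are $H_{ij}=\sum_k\big(\cQ_{ik}\,\overline{\cQ_{jk}}+\overline{\cQ_{ki}}\,\cQ_{kj}\big)$. If every row and column of $\cQ$ has at most $s=\poly(n)$ nonzero entries, then each row of $H$ has at most $2s^2=\poly(n)$ nonzero entries, so $H$ is sparse. To enumerate the nonzero entries in row $i$ of $H$ one uses $O_{\text{row}}$ to list the nonzero $k$ in row $i$ of $\cQ$ and $O_{\text{col}}$ to list the corresponding $j$ in column $k$ (and symmetrically for the $\cQ^\dagger\cQ$ term), then evaluates each entry with $O_d$; this costs $\poly(n)$ oracle calls. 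Since the problem statement guarantees $\|\cQ\|=O(\poly(n))$, we have $\|H\|\le\|\cQ\cQ^\dagger\|+\|\cQ^\dagger\cQ\|=2\|\cQ\|^2=O(\poly(n))$.

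The second step is to feed this sparse access into a sparse Hamiltonian simulation routine to implement the propagator $e^{-iH\tau}$ to small error using $\poly(n)$ gates, choosing $\tau$ so that $\|H\|\tau$ fits within a single phase-estimation period while the rescaled promise gap $(b'-a')\tau$ stays $1/\poly(n)$. The verifier runs phase estimation on $e^{-iH\tau}$ with the witness as input at precision finer than $(b'-a')\tau$, which requires $O\!\big(1/((b'-a')\tau)\big)=\poly(n)$ applications of the propagator, and accepts iff the estimated energy is below $(a'+b')/2$. For completeness, in the $\yes$ case the prover sends a ground state, an eigenstate of eigenvalue $E_0\le a'$, so phase estimation reports a value close to $E_0$ and the verifier accepts with high probability. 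For soundness, in the $\no$ case $H\succeq 0$ has all eigenvalues $\ge b'$, so phase estimation on any witness outputs an estimate at least $b'-\epsilon$ and the verifier rejects; standard $\QMA$ amplification then boosts the constant gap.

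I expect the main obstacle to be conceptual bookkeeping rather than a single hard lemma: one must verify carefully that sparse access to $\cQ$ really does give \emph{efficiently computable} sparse access to $H$, and that the norm bound $\|H\|=O(\poly(n))$ keeps the rescaled promise gap inverse-polynomial so that phase estimation uses only $\poly(n)$ resources. A secondary point worth handling cleanly is the fermionic-to-qubit identification: in the occupation-number basis the Jordan--Wigner strings contribute only signs and diagonal phases and therefore do not increase the number of nonzero entries, which is precisely why routing the proof through sparsity is cleaner than a local-term measurement scheme that would otherwise be sensitive to Jordan--Wigner nonlocality.
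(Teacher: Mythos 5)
Your proposal is correct, but it takes a genuinely different route from the paper's. The paper deliberately avoids constructing access to $H=\{\cQ,\cQ^\dagger\}$ altogether: it invokes the $\cN=2$ algebra to write $H=(\cQ+\cQ^\dagger)^2$, so that the eigenvalues of $H$ are the squares of those of the Hermitian operator $\cQ+\cQ^\dagger$, to which sparse access is immediate (sparsity at most $2s$ directly from the given oracles); the witness is then an eigenstate of $\cQ+\cQ^\dagger$ with eigenvalue in $[-\sqrt{a'},\sqrt{a'}]$, and the verifier runs phase estimation on $e^{i(\cQ+\cQ^\dagger)}$, the promise gap $\sqrt{b'}-\sqrt{a'}\geq (b'-a')/(\sqrt{a'}+\sqrt{b'})$ remaining inverse-polynomial since $\|\cQ\|=O(\poly(n))$. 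The paper motivates this trick by asserting that sparse access to $\cQ$ ``does not in general ensure efficient access to $\{\cQ,\cQ^\dagger\}$, which might not be sparse''; your argument shows that under the paper's own access model---which bounds the nonzeros per row \emph{and} per column and supplies both $O_{\text{row}}$ and $O_{\text{col}}$---this worry is not an actual obstruction: $\cQ\cQ^\dagger$ and $\cQ^\dagger\cQ$ are each $s^2$-row-sparse and your two-stage oracle composition synthesizes sparse access to $H$. Note that this leans essentially on \emph{column} sparsity: with row sparsity alone the product can be dense (a matrix with sparse rows but a single dense column gives a dense $\cQ\cQ^\dagger$), which is presumably the scenario behind the paper's caution. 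As to what each approach buys: yours is more generic, never using nilpotency $\cQ^2=0$ and hence applying to any $H=\cQ\cQ^\dagger+\cQ^\dagger\cQ$, at the cost of quadratically worse sparsity in the simulation and some bookkeeping you should make explicit (deduplicating the candidate indices $j$ arising from different intermediate $k$, and tolerating listed positions whose entries vanish by cancellation---standard sparse-simulation oracles accommodate a superset of the true support, so this is routine but worth a sentence); the paper's argument is cheaper at the oracle level and showcases the supersymmetric square-root structure that recurs throughout the paper, e.g.\ in the hardness reduction where the eigenvalues of $H$ are the squares of those of $B$.
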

\begin{proof}
It suffices to prove the containment for sparse $\cQ$ since any $k$-local operator is automatically sparse. Since a supersymmetric Hamiltonian is a special case of a general Hamiltonian, one may think that {\sc SUSY $k$-local Hamiltonian} is automatically inside $\QMA$. This is not so straightforward since in the supersymmetric version we are given efficient (sparse) access to $\cQ$, which does not in general ensure efficient access to $\{\cQ,\cQ^{\dagger}\}$, which might not be sparse, and hence using Hamiltonian simulation to obtain an approximation of $e^{i\{\cQ,\cQ^{\dagger}\}}$ will generally not be efficient. This difficulty is easily overcome, however,  by noting that the $\cN=2$ SUSY algebra \eqref{nilpQ} and \eqref{QQcom} imply
\equ{
\(\cQ+\cQ^{\dagger}\)^{2}=\{\cQ,\cQ^{\dagger}\}=H\,,
}
and thus the eigenvalues of $H$  are the squares of the eigenvalues of the Hermitian operator $\cQ+\cQ^{\dagger}$, to which we do have efficient access. Thus, a witness for a $\yes$ instance is an eigenstate of the supercharge $\cQ+\cQ^{\dagger}$ with eigenvalue  $-\sqrt{a'}\leq \lambda\leq \sqrt{a'}$, which implies that the smallest eigenvalue of $H$ satisfies $E_{0}\leq a'$. Given such a witness we simply run phase estimation on the unitary operator $e^{i \left(\cQ+\cQ^{\dagger}\right)}$,  or an approximation of it obtained via Hamiltonian simulation using, e.g., the algorithm of~\cite{berry2015hamiltonian}, with the witness as input. Since we are promised that either $E_0 \leq a'$ or $E_0 \geq 
b'$ with $b' - a' > 1/\poly(n)$, we only need to obtain an estimate of the eigenvalue of $\cQ+\cQ^{\dagger}$ on the witness to additive accuracy $\epsilon = 1/O(\poly(n))$ to verify that $E_0 \leq a'$. Using standard techniques, the run-time of Hamiltonian simulation plus phase estimation will be $\poly(n, 1/\epsilon, \|\cQ\|)$, which is $\poly(n)$ for our choice of $\epsilon$, and thus the problem is in $\QMA$ for sparse $\cQ$.
\end{proof}

\noindent As mentioned, we will prove $\QMA$-hardness of the SUSY Hamiltonian problem by reduction from  standard {\sc  $k$-local Hamiltonian}. Since the latter is formulated in qubit space, we will need to provide an encoding of qubit space into fermionic Fock space. There are various ways of doing this, but the most useful to our purposes is to represent a qubit as a single fermion which can be in two different modes, $a_{i},b_{i}$, as in \cite{Liu_2007}. 
Namely, we consider a set of $m=2n$ fermionic modes satisfying the algebra
\equ{\label{basisfN}
\{a_{i}^{\dagger},a_{j}\}=\delta_{ij}\,,\qquad \{b_{i}^{\dagger},b_{j}\}=\delta_{ij}\,,
}
and all other anticommutators vanishing. An $n$-qubit basis state $\ket{n_{1}\cdots n_{n}}$ is mapped into a state $ \ket{\psi_{n_{1}\cdots n_{n}}}$ in fermionic Fock space  as
\equ{\label{basisfN}
\ket{n_{1}\cdots n_{n}}\quad \to \quad \ket{\psi_{n_{1}\cdots n_{n}}}:= (a_{1}^{\dagger})^{1-n_{1}}(b_{1}^{\dagger})^{n_{1}}\cdots  (a_{n}^{\dagger})^{1-n_{n}}(b_{n}^{\dagger})^{n_{n}}\ket{\mathit{vac}}
}
and $\ket{\mathit{vac}}$ is the vacuum state with no fermions. Importantly, note that the states $ \ket{\psi_{n_{1}\cdots n_{n}}}$ have all fermion number $F=n$. 

Single qubit operators are mapped into bilinears in these fermion operators, as
\equ{\label{mapsigmaab}
\sigma_{i}^{+}\to a_{i}^{\dagger} b_{i}\,,\qquad \sigma_{i}^{-}\to b_{i}^{\dagger} a_{i}\,,\qquad  \sigma_{i}^{z}\to  a_{i}^{\dagger} a_{i}-b_{i}^{\dagger} b_{i}\,;
} 
one can easily check this is consistent with the algebra of Pauli matrices.\footnote{This is simply the well known fermionic representation of spin-$\tfrac12$ operators, with $a^{\dagger}=a^{\dagger}_{\uparrow}$ creating a spin up and $b^{\dagger}=a^{\dagger}_{\downarrow}$ creating a spin down.} Note that the map is local, sending qubit operators at site $i$ to bilinears in fermionic operators at site $i$. The total fermion number operator is  
\equ{
F= \sum_{i}(a_{i}^{\dagger} a_{i}+b_{i}^{\dagger} b_{i})\,,
} 
and all bilinears in \eqref{mapsigmaab} have $F=0$. 

Then, given a $k$-local qubit operator $A$
\equ{
 A=\sum_{S}A_S(\sigma_i^{+},\sigma_i^{-},\sigma_i^{z})\,,
}
applying the map \eqref{mapsigmaab} to each term gives 
\equ{
\hat A(a_i,b_i,a_i^{\dagger},b_i^{\dagger}):= \sum_{S}A_S(a_i^{\dagger}b_i,b_i^{\dagger}a_i,a_i^{\dagger}a_i-b_i^{\dagger}b_i)\,,
}
which is a $k$-local operator in fermionic Fock space, with fermion number $F=0$.

Note that $\hat A$ acts on the entire $2^{2n}$-dimensional Fock space. When restricted to states in the space spanned by \eqref{basisfN}, however, it coincides with the qubit operator $A$. Indeed, consider two arbitrary $n$-qubit states, 
\equ{
\ket{\psi}:=\sum c_{n_{1}\cdots n_{n}}\ket{n_{1},\cdots,n_{n}}\,,\qquad \ket{\psi'}:=\sum c'_{n_{1}\cdots n_{n}}\ket{n_{1},\cdots,n_{n}}\,.
}
Applying the map \eqref{basisfN} to each basis element gives the corresponding states in fermionic Fock space,
\equ{
\ket{\psi_n}=\sum c_{n_{1}\cdots n_{n}}\ket{\psi_{n_{1}\cdots n_{n}}}\,,\qquad \ket{\psi_n'}=\sum c'_{n_{1}\cdots n_{n}}\ket{\psi_{n_{1}\cdots n_{n}}}\,.
}
Then, one can see that 
\equ{\label{eq:same_op}
\braket{\psi'|A|\psi}= \braket{\psi'_n|\hat A|\psi_n}\,.
}
This will play an important role in the reduction from hard problems in qubit space to hard problems in fermionic Fock space.  

\begin{thm}[]\label{thmQMAhard}
 {\sc SUSY $4$-local Hamiltonian} is $\QMA$-hard. 
\end{thm}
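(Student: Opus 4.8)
The plan is to prove $\QMA$-hardness by a polynomial-time reduction from the standard {\sc $k$-local Hamiltonian} problem, whose $\QMA$-completeness for $k\geq 2$ is known \cite{KitaevBook,2003quant.ph..2079K,2004quant.ph..6180K}; containment is already settled by Proposition~\ref{thmQMAcontainment}, so only hardness remains. Without loss of generality I would take the source Hamiltonian in the canonical sum-of-projectors form $H_0=\sum_a \Pi_a\geq 0$ produced by the circuit-to-Hamiltonian construction, with the promise $E_0\leq a$ or $E_0\geq b$ and gap $b-a>1/\text{poly}(n)$. The first step is to push this into fermionic Fock space via the mode-doubling encoding \eqref{basisfN}: each $\Pi_a$ becomes a bosonic ($F=0$) operator $\hat\Pi_a$ which, by \eqref{eq:same_op}, reproduces $\Pi_a$ exactly on the image $\cC$ of the qubit space (the states with precisely one of each pair $(a_i,b_i)$ occupied). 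Because $\Pi_a$ is a projector we have $\hat\Pi_a^2=\hat\Pi_a$, and this idempotency is exactly what lets me avoid taking non-local square roots.

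The heart of the reduction is choosing the supercharge so that $H=\{\cQ,\cQ^\dagger\}$ reproduces $H_0$ on the physical subspace. I would introduce one ancillary fermionic mode $c_a$ per projector and set $\cQ=\sum_a c_a^\dagger \hat\Pi_a$ (plus the penalty-generating terms discussed below). Since $\cQ$ raises the ancilla number by one, $H$ preserves it, and a short computation using $\hat\Pi_a^2=\hat\Pi_a$ and the anticommutation relations shows that on the zero-ancilla sector $H=\cQ^\dagger\cQ=\sum_a\hat\Pi_a=H_0$ on $\cC$. Equivalently, writing $\mathcal{H}:=\cQ+\cQ^\dagger$ so that $H=\mathcal{H}^2$ (the identity exploited in Proposition~\ref{thmQMAcontainment}), $\mathcal{H}$ is block-off-diagonal between the zero- and one-ancilla sectors with blocks $D$ and $D^\dagger$, where $D=\bigoplus_a\hat\Pi_a$ satisfies $D^\dagger D=H_0$. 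Embedding the ground state of $H_0$ into $\cC$ with empty ancillas then gives the upper bound $E_0(H)\leq E_0(H_0)$, which handles the \yes\ case with $a':=a$.

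The delicate direction is the lower bound for \no\ instances: I must rule out that some state with nonzero ancilla occupation, or some state lying outside $\cC$, drops below energy $b$. To confine the low-energy space to $\cC$ I would add terms to $\cQ$ engineered so that $\{\cQ,\cQ^\dagger\}$ contains a penalty $\Delta\sum_i(\hat n_{a_i}+\hat n_{b_i}-1)^2$ with $\Delta=\text{poly}(n)$ large, which vanishes on $\cC$ and is $\geq\Delta$ off it; a projection/perturbation argument (Kitaev's geometric lemma) then shows the true ground energy differs from that of $H_0$ on $\cC$ by at most $O(1/\text{poly}(n))$, preserving the gap so that we may set $b':=b-1/\text{poly}(n)$. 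The structural constraint that we may specify only $\cQ$, never $H$ directly, is precisely what makes realizing these penalties inside a nilpotent supercharge the subtle point.

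The main obstacle I anticipate is simultaneously (i) keeping $\cQ$ $2$-local, so that $H=\{\cQ,\cQ^\dagger\}$ is genuinely $4$-local, and (ii) still controlling the spectrum on the spurious ancilla and unphysical sectors. Naively $c_a^\dagger\hat\Pi_a$ for a two-qubit projector is already too non-local, so to reach locality $4$ I would reduce instead from a base Hamiltonian whose hardness is carried by low-weight terms (for instance the $2$-local fermionic Hamiltonian of \cite{Liu_2007}) and/or apply supersymmetry-preserving perturbative gadgets that lower the locality of $\cQ$ while leaving both the low-energy structure and the relation $H=(\cQ+\cQ^\dagger)^2$ intact. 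Verifying that any such gadgetization respects the nilpotency $\cQ^2=0$ and does not reintroduce low-energy states in the wrong sectors is where I expect the real work to lie.
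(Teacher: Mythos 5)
Your reduction fails at its very first structural step: the supercharge $\cQ=\sum_a c_a^\dagger\,\hat\Pi_a$ is not nilpotent. Since each $\hat\Pi_a$ is a bosonic ($F=0$) operator on the system modes and therefore commutes with the ancilla operators, one computes
\equ{
\cQ^2=\sum_{a<b}c_a^\dagger c_b^\dagger\,\bigl[\hat\Pi_a,\hat\Pi_b\bigr]\,,
}
which vanishes only when all the projectors pairwise commute. But the commuting-projector case is not where the $\QMA$-hardness of {\sc $k$-local Hamiltonian} lives -- the circuit-to-Hamiltonian constructions you would reduce from consist essentially of non-commuting terms -- so for exactly the instances you need, $\cQ^2\neq 0$. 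The object you build is then not a legal instance of {\sc SUSY $k$-local Hamiltonian}: the problem takes as input an $\cN=2$ supercharge, and without $\cQ^2=0$ you lose both the supersymmetry algebra and the identity $(\cQ+\cQ^\dagger)^2=\{\cQ,\cQ^\dagger\}$ on which the containment argument and your own block analysis rest. You flag nilpotency of the gadget and penalty terms as ``where the real work lies,'' but the obstruction is already in your core term, and the remaining plan (penalties realized inside $\cQ$, supersymmetry-preserving perturbative gadgets to reach locality $4$) is left entirely unconstructed -- those are precisely the hard parts, and it is not clear that gadgets compatible with $\cQ^2=0$ exist.

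The paper's proof avoids all of this with one move you came close to but did not make: attach a \emph{single} nilpotent fermionic prefactor to the \emph{entire} Hermitian operator, taking $\cQ=\frac{1}{\sqrt 2}(a_0^\dagger+b_0^\dagger)\,B$ with $B=\hat A+B_{\mathit{pen}}$, where $\hat A$ is the fermionized input Hamiltonian and $B_{\mathit{pen}}$ is the (site-local) penalty confining to the one-fermion-per-site subspace. Then $\cQ^2=0$ holds automatically for \emph{arbitrary} Hermitian $B$, because $(a_0^\dagger+b_0^\dagger)^2=0$ -- no idempotency, no commutation assumptions, no gadgets, and no projection-lemma perturbation theory, since the penalty sits inside $B$ and the analysis is exact. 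The price, which your per-projector-ancilla design was trying to avoid, is that $H=\{\cQ,\cQ^\dagger\}=B^2$ squares the spectrum rather than reproducing $H_0$ linearly; the paper absorbs this by squaring the promise parameters, $(a',b')=(a^2,b^2)$, noting $b^2-a^2\geq(b-a)^2>1/\poly(n)$, and locality $4$ then follows immediately by reducing from {\sc $2$-local Hamiltonian}, since $B$ $2$-local makes $B^2$ $4$-local. If you want to rescue the linear-spectrum construction, you must first solve the nilpotency obstruction for non-commuting $\hat\Pi_a$, and no route to that is visible in your proposal.
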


\begin{proof} 

We will prove hardness by reduction from the standard  {\sc $k$-local Hamiltonian}.  
Consider an instance of  {\sc $k$-local Hamiltonian}  specified by a Hermitian matrix $A=A^{\dagger}=\sum_{S=1}^{L}A_{S}$ acting on $n$ qubits, and parameters $(a,b)$.\footnote{We reserve the symbol $H$ for the actual Hamiltonian of our system which will be related, but not identical, to $A$.} Then, applying the map \eqref{mapsigmaab} we construct the operator
\equ{
B(a_i,b_i,a_i^{\dagger},b_i^{\dagger}):= \sum_{S=1}^{L}A_{S}\(a_i^{\dagger}b_i,b_i^{\dagger}a_i,a_i^{\dagger}a_i-b_i^{\dagger}b_i\)+B_{\mathit{pen}}\,,
}
where
\equ{
B_{\mathit{pen}}= J(n)\sum_{i=1}^{n} \left[n_{a_i} n_{b_i}+(n_{a_i}-1)( n_{b_i}-1)\right]\,,
}
with $J(n)$ a coupling constant. Note that $B$ has fermion number $F=0$. The term $B_{\mathit{pen}}$ penalizes all states which do not have exactly one fermionic mode at each site $i=(1,\dots,n)$. Note that the construction of the operator $B$ requires only a polynomial number of operations since  $L\leq {n \choose k} = O(n^k)$. Let us introduce an auxiliary pair of fermionic modes $(a_{0},b_{0})$ at a site $i=0$. The total set of fermionic operators is  $(a_{0},b_{0},a_{i},b_{i})$ with $i=1,\cdots,n$. Let us consider the supercharge 
\equ{
\cQ= \frac{1}{\sqrt 2}(a_{0}^{\dagger} +b_{0}^{\dagger} )\, B(a_{i},b_{i},a_{i}^{\dagger},b_{i}^{\dagger})\,.
}
Note that the fermionic operators at site $i=0$ do not appear inside $B$; their role is simply to ensure that $\cQ^{2}=(\cQ^{\dagger})^{2}=0$ for arbitrary $B$ and that $\cQ$ has fermion number $F=1$. The hardness of the problem is encoded entirely in $B$. Indeed, the resulting Hamiltonian is 
\equ{\label{SUSYHB2} 
H=\{\cQ,\cQ^{\dagger}\}= B^{2}\,,
} 
and the eigenvalues of $H$ are simply the square of those of $B$. Now, by choosing $J(n)$ large enough (in particular larger than $\|A\| = O(\poly(n))$) the ground space of $B$ is necessarily in the subspace in which $B_{\mathit pen}=0$. Note that 
\equ{
B_{\mathit pen}=0\qquad \Leftrightarrow \qquad (n_{a_i},n_{b_i})=(1,0) \; \text{or} \; (0,1)\,,
}
and  each site $i=(1,\dots,n)$ is occupied by exactly one fermion in this subspace. This is precisely the Fermionic Fock subspace spanned by the states on the RHS of \eqref{basisfN}. The operator $B$ on this subspace coincides with the qubit operator $A$ by \eqref{eq:same_op}.  Thus, if $A$ is a $k$-local Hamiltonian  satisfying the promise with parameters $(a,b)$ then the SUSY Hamiltonian is $2k$-local and satisfies the promise with parameters $(a',b')=(a^{2},b^{2})$. Furthermore, note that
\equ{
b'-a'=(b-a)(b+a)\geq (b-a)^{2}> \frac{1}{\poly(n)}\,,
}
and the separation remains inverse polynomial, as required. Thus, a $\yes/\no$ instance of {\sc $k$-local Hamiltonian} implies a $\yes/\no$ instance of {\sc SUSY $2k$-local  Hamiltonian}. Since  {\sc $2$-local Hamiltonian} is $\QMA$-hard \cite{2004quant.ph..6180K}, it follows that {\sc SUSY $4$-local Hamiltonian} is $\QMA$-hard. 
\end{proof} 

It would be interesting to determine the hardness of  {\sc SUSY $k$-local Hamiltonian} for $2\leq k\leq3$. It would also be interesting to study whether the problem remains hard for systems with a larger degree of supersymmetry, i.e.,  $\cN>2$. Finally, note that although in this construction the supersymmetric Hamiltonian is local, in the sense that it is given by a sum of terms involving a constant number of sites, it is not {\it geometrically} local, even if $B$ is geometrically local since taking a square of a geometrically local function is in general not geometrically local. It would be interesting to study whether {\sc SUSY $k$-local Hamiltonian} remains hard for geometrically local Hamiltonians.

\section{The $k$-local Cohomology Problem}
\label{sec:The k-local cohomology problem}

In this section, we introduce and study the computational complexity of the cohomology problem. As mentioned in the Introduction, this is a fundamental problem in homological algebra, encompassing a large number of problems from a myriad of mathematical fields including  topology, group theory, and number theory. See \cite{CartanEilenberg+2016,weibel_1994} for standard textbooks on homological algebra and \cite{KaczynskiTomasz2004Ch/T} for a more accessible introduction discussing various applications and computational aspects.\footnote{For applications in algebraic topology see, e.g., \url{http://pi.math.cornell.edu/~hatcher/AT/ATplain.pdf} and \cite{CompTopBook,zomorodian_2005} for computational aspects.} 
We will restrict ourselves to a subclass of cohomology problems to which the tools of quantum Hamiltonian complexity can be readily applied but which at the same time is rich enough to encompass a wide range of interesting problems. Precisely,  we introduce the problem {\sc $k$-local cohomology} and show that it is $\QMA_{1}$-hard and contained in   $\QMA$. The setup is  heavily motivated by the supersymmetric systems studied in Section~\ref{sec:The SUSY local Hamiltonian problem is QMA-complete} and captures the computational problems that arise in these systems.

Although we have briefly discussed cohomology in Section~\ref{sec:Cohomology, homology, and ground states} in the context of supersymmetric quantum mechanics, we now review the general definition of a cochain complex and cohomology groups in a self contained way. 

\begin{definition}[Cochain complex]
 A cochain complex is a sequence of Abelian groups, $\{C^{p}\}$, together with a ``coboundary'' operator $d$:
\equ{\label{complexN}
C: \qquad \cdots  \rightarrow  C^{p-1}\xrightarrow{d}C^{p}\xrightarrow{d} C^{p+1} \rightarrow \cdots  \,,
}
with 
\equ{\label{nilpd}
d^{2}=0\,.
}
The elements of $C^{p}$ are called {\it $p$-cochains} or simply cochains.   
\end{definition}
A special role is played by the notion of ``cocycles'' and ``coboundaries.'' 
\begin{definition}[Cocycle, coboundary]
A cochain $c^p\in C^p$ is said to be a {\it $p$-cocycle} if it is in the kernel of the coboundary operator, i.e.,
\equ{
dc^p=0\,.
}
A cochain is said to be a {\it $p$-coboundary} if it is in the image of $d$, i.e., if there exists some $c^{p-1} \in C^{p-1}$ such that
\equ{
c^p=dc^{p-1}\,.
}  
\end{definition}

Note that due to the nilpotency property \eqref{nilpd} if $c^p$ is a cocycle,  so is $\tilde c^p=c^p+dc^{p-1}$. One then says that the cocycles $c^p$ and $\tilde c^p$ are in the same ``cohomology'' class,
\equ{
c^p\sim \tilde c^p\qquad \text{if}\qquad c^p=\tilde c^p+dc^{p-1}\,.
}
The element $c^p$ (or equivalently $\tilde c^p$) is, by definition, a representative of the cohomology class $H^p(d)$. More precisely,
\begin{definition}[Cohomology group]
The $p$th cohomology group is
\equ{
H^{p}(d):= \frac{\text{Ker}(d): C^{p}\to C^{p+1}}{\text{Im} (d): C^{p-1}\to C^{p}}\,,
}
consisting of $p$-cocycles that are not $p$-coboundaries.
\end{definition}
The collection of $p$-cocycles form the subgroup $Z^p$, and the collection of $p$-coboundaries the subgroup $B^p$, so that $H^p(d) = \frac{Z^p}{B^p}$. The relationships between these groups is illustrated in Figure~\ref{fig:cohomology}.

\begin{figure}
    \centering
    \includegraphics[scale=0.7]{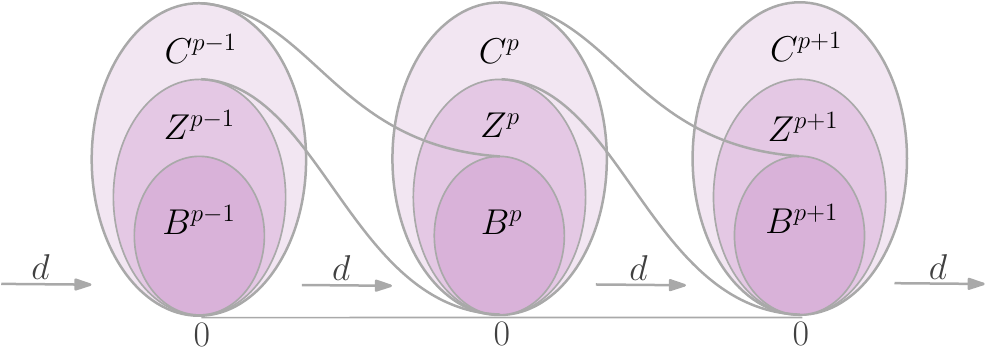}
    \caption{Relationship between the cochains $C^{p-1}, C^p, C^{p+1}$, the cocycles $Z^{p-1}, Z^p, Z^{p+1}$, and the coboundaries $B^{p-1},B^p,B^{p+1}$. The $p$th cohomology group consists of those elements which are cocycles but not coboundaries, i.e. $H^p(d) = Z^p/B^p$.}
    \label{fig:cohomology}
\end{figure}
\begin{definition}[Betti number]
The $p$th Betti number is defined as the rank of the $p$th cohomology group,
\equ{
\beta_{p}:=\text{rank}\, H^{p}(d)\,.
}
\end{definition}
Note that the full cohomology group is  
\equ{
H(d):= \bigotimes_{p}  H^p(d)\,,
}
with total rank the sum of Betti numbers,
\equ{
\text{rank}\, H(d) = \sum_p \beta_p\,.
}
The ``Euler characteristic'' of the complex is defined  as the alternating sum, 
\equ{
\chi:=\sum_p (-1)^p\beta_p\,.
}

Finally, one can similarly define a \emph{chain} complex as a sequence of Abelian groups $\{C_p\}$ connected by a boundary operator $\partial$, satisfying $\partial^2=0$, in which the arrows are reversed compared to cohomology:
\equ{
C: \qquad \cdots  \leftarrow  C_{p-1}\xleftarrow{\partial}C_{p}\xleftarrow{\partial} C_{p+1} \leftarrow \cdots  \,,
}
and one defines the {\it homology} groups
\equ{
H_{p}(d):= \frac{\text{Ker}(\partial): C_{p}\to C_{p-1}}{\text{Im} (\partial): C_{p+1}\to C_{p}}\,,
}
consisting of $p$-cycles that are not $p$-boundaries.

Cohomology is often presented as the ``dual'' version of homology. In our setting cohomology groups will also define homology groups by conjugation of the operator $d$. Since the notion of cohomology connects more naturally to supersymmetry, we thus focus on cohomology from now on. The hardness results we derive for cohomology hold also for homology. 

\

The computational problem we wish to consider is: given a cochain complex $C$, to decide whether a certain cohomology group $H^l(d)$ is nontrivial. We refer to this as the ``cohomology problem:''\\

 \par\noindent  
   {\bfseries Input}: A cochain complex $C$  and an integer $l$.\\
   {\bfseries Problem}: Output $\yes$ if $H^{l}(d)\neq 0$ and $\no$ otherwise.\\
   \par

\noindent Note that a general complex $C$ can consist of an infinite sequence. Since we consider $C$ to be part of the input, we will focus on finite sequences. Furthermore, in order to make contact with quantum Hamiltonian complexity, in which the operator $d$ is taken to act on a Hilbert space, we consider the special case in which the Abelian groups $C^{p}$ have the additional structure of a vector space, which we denote by $V^{p}$. All in all, we consider finite sequences of {\it vector spaces},
\equ{\label{cochainVp}
C:\qquad 0\xrightarrow{d} V^{0}\xrightarrow{d} \cdots \xrightarrow{d} V^{m}\xrightarrow{d} 0\,,
}
and take $m$ to be the length of the input. We will further endow the full vector space $V:=\bigoplus_{p=0}^{m}V^{p}$ with an inner product and a corresponding notion of conjugation, which lets us define the operator $d^{\dagger}$. As we discuss below, this is crucial in our setting for defining the appropriate promise version of the cohomology problem.   From now on we assume that $V$ is a finite dimensional vector space and thus $d$ and $d^\dagger$ are finite dimensional operators acting on this space.

Given a notion of conjugation, it is convenient to introduce the ``Laplacian'' operator
\begin{definition}[Laplacian operator]
For coboundary operator $d$, the Laplacian operator is defined as
\equ{
\Delta:= dd^{\dagger}+d^{\dagger}d\,.
}
\end{definition}

\noindent Since this is a positive semidefinite operator, it follows that 
\equ{
\Delta v =0 \qquad \Leftrightarrow \qquad  dv  =d^\dagger v=0\,.
}
It is also easy to see that an element $v\in V$ is an element of cohomology $H(d)$ if and only if it is a zero-eigenvalue of the Laplacian:
\equ{\label{Hodge}
v\in H(d) \qquad \Leftrightarrow \qquad \Delta v=0\,.
}
In the context of de Rham cohomology this is known as Hodge's theorem. In the language of supersymmetry, this is the statement that elements of cohomology of the coboundary operator are in one-to-one correspondence with supersymmetric ground states and we have given a proof of this below Eq.~\eqref{qomega0}.

Let $d^l:=d|_{V^l}$ and ($d^\dagger)^l:=d^\dagger|_{V^l}$ denote the restriction of the operators $d$ and $d^\dagger$ to the space $V^l$, respectively.\footnote{We will often omit the subscript on $d$ when this is redundant to simplify notation. } Similarly, let $\Delta^l:= \Delta|_{V^l}$ denote the restriction of the Laplacian to $V^l$. Then, one can write\footnote{ Note that $(d^\dagger)^{l+1}=(d^{l})^\dagger$ and thus one can also write $\Delta^l=d^{l-1} (d^\dagger)^l+(d^\dagger)^{l+1} d^l$.} 
\equ{
\Delta^l = d^{l-1} (d^\dagger)^{l-1}+(d^\dagger)^{l} d^l \,.
}
It follows trivially from  \eqref{Hodge} that  $v^l\in V^l$ is an element of the cohomology group $H^l(d)$  if it is a zero-eigenvalue of the corresponding Laplacian: 
\equ{\label{Hodgel}
v^l\in H^l(d) \qquad \Leftrightarrow \qquad \Delta^l v^l=0\,.
}

\  

Although the cohomology problem is rather fundamental, and it is believed to be hard, little is known about its precise complexity. In the case of simplicial complexes, the question was posed  in \cite{2002math......2204K}. It was recently shown in \cite{ADAMASZEK20168} that for  independence complex the problem is $\NP$-hard (given a graph as input) and generally not in $\NP$.\footnote{The problem is described in \cite{ADAMASZEK20168} in terms of homology rather than cohomology, but as already mentioned these are equivalent for these complexes. } As we describe below, however, a certain promise version of the cohomology problem can be placed inside $\QMA$. 

\subsection{The $k$-local cohomology problem}

We now introduce a version of the cohomology problem, inspired by supersymmetric systems discussed in previous sections. Indeed, note that the supersymmetric systems discussed in Section~\ref{sec:Cohomology, homology, and ground states} precisely lead to a cochain structure of the form \eqref{cochainVp}, with the  coboundary operator identified with the supercharge and the Laplacian with the supersymmetric Hamiltonian: 
\equ{
\cQ=d\,,\qquad  H=\Delta\,.
}
In fact, this is how the notion of cohomology was introduced there. The supersymmetric models of Section~\ref{sec:Supersymmetry in lattice systems} are of particular interest since they describe systems with discrete degrees of freedom and are thus more amenable to a computational treatment.\footnote{It would also be interesting to study appropriate dicretizations of continuum quantum mechanical systems. } We thus take inspiration from these to  define the cohomology problem we will study. First, we note that in these models the vector space $V$ is the Hilbert space of the system, which can generally be a {\it subset} of the fermionic Fock space of $m$ fermions,
\equ{
V\subseteq \text{span}\{\ket{n_{1},\cdots,n_{m}}\}\,,\qquad n_i=\{0,1\}\,.
}
This happens whenever certain states are excluded from the Hilbert space for dynamical reasons, such as a strong repulsion forbidding certain states corresponding to nearby particles. The grading of $V=\bigoplus_{p=0}^m V^p$ is  provided by the fermion number $F$, i.e., 
\equ{
V^p=\text{span}\{v\in V\;|\; F v=p \, v\}\,.
}
We take the coboundary operator to have fermion number 1 and thus it acts as $d: V^p\to V^{p+1}$.  Given such a cochain complex $C=(V,d)$ and an integer $l$ we would like to determine whether the cohomology group $H^l(d)$ is nontrivial. We will input the description of a cochain complex as a tuple $(V[m],d[k])$, where for clarity we use the notation $[\cdot]$ to explicitly denote a parameter of $V$ or $d$. As always, we need to provide a succinct description of the input. For our purposes, we will take  (unless stated otherwise) the input to satisfy the following:
\begin{itemize}
    \item The graded vector space $V[m] = V = \bigoplus_{p=0}^m V^p$ is a vector space spanned by a {\it subset} of basis states in the fermionic Fock space of $m$ fermionic modes, $\ket{n_{1},\cdots,n_{m}}$. The subset will be determined by a polynomial-sized (in $m$) list of constraints among the $n_{i}$. This list might be specified in a number of ways: for instance, we could be given a list of constraints via a (polynomial) set of Boolean functions, a Boolean formula in conjunctive normal form, or an arithmetic circuit over the integers. We will mostly remain agnostic about how these constraints are supplied, only that they contain enough information to define the full vector space $V$. Often we will consider when this space can be efficiently ``sampled'' from (see Definition~\ref{def:efficient_sample}).
    
    \item The coboundary operator $d[k] = d$ is a $k$-local operator acting on $V$ and with fermion number $F=1$. Any such operator can be written as 
\equ{\label{dB}
d=\sum_{i=1}^{m}a_{i}^{\dagger}\, B_{i}(a,a^{\dagger})\,,
}
where the $B_{i}$ are $(k-1)$-local functions  such that $d^{2}=0$. Furthermore, each $B_i$ has $F=0$ which ensures that  $d:V^{p}\to V^{p+1}$ and defines a cochain of the form \eqref{cochainVp}.\footnote{One could relax this condition and simply require that the $B_{i}$ have even fermion number. This defines a $\Bbb Z_{2}$-graded complex. See Section~\ref{sec:Cohomology, homology, and ground states} for a discussion in the context of supersymmetric systems. Since a $\Bbb Z_{2}$-graded complex is obtained from the complexes we study, identifying $F$ mod 2, all the hardness results we derive here hold for $\Bbb Z_{2}$-graded complexes as well. } Note that since $V$ is a subspace of fermionic Fock space, there is a natural notion of conjugation given by the standard Hermitian conjugate. Indeed, we define
\equ{\label{dBd}
d^{\dagger}=\sum_{i=1}^{m} B_{i}^{\dagger}(a,a^{\dagger})\, a_{i}\,.
}
\end{itemize}
The size of the input in this case is $m$, the number of fermionic modes. This is natural, since the full Fock space on $m$ modes will have dimension $2^m$, and the topological objects we consider (e.g. Laplacians) will often be $\approx 2^m$-dimensional.

\

Let us note that if, instead,  the {\it full} vector space $V$ is provided as a set of all its basis elements, and the coboundary operator as the full matrix acting on these basis elements, the cohomology problem is in $\P$ (taking  the dimension of $V$ to be the size of the input). Indeed, given such a description one simply needs to diagonalize the Laplacian operator $\Delta$ and check, for a given $l$, whether there are any zero-eigenvalues in the subspace $V^{l}$. If so, the cohomology group $H^{l}(d)$ is nontrivial and trivial otherwise. Of course, the representation of $V$ as a list of all its basis elements is not efficient, since it can contain an exponential (in $m$) number of elements.

\ 

We comment that the description of the complex $(V[m],d[k])$ given above is a natural generalization of some instances considered in the literature. In particular, the problem of determining whether the $l$th {\it homology} group $H_l(\partial)$ of the independence complex was recently studied in  \cite{ADAMASZEK20168}, where the input to the problem is taken to be a graph $G$. In our setting, the input $(V[n],d[k]) $ is specified as follows.\footnote{The homology groups $H_l(\partial)$ with $\partial$ the standard boundary operator of simplicial complexes are isomorphic to the cohomology groups of the coboundary operator in \eqref{dindepend}, $H_l(
\partial)\simeq H^l(d)$ so these are equivalent computational problems--we expand on this in Section~\ref{sec:Betti numbers in Topological Data Analysis}. }  The vector space is defined by the $\cO(n^2)$ number of constraints
\equ{
A_{ij}n_i n_j=0\,,
}
where $A_{ij}$ is the incidence matrix of $G$. All solutions to these constraints correspond to independent sets of $G$. The coboundary operator is given by
\equ{\label{dindepend}
d[k]=\sum_i a_i^\dagger P_i\,,\qquad P_i=\prod_{j} (1-A_{ij}n_j)\,,
}
where $k=\delta+1$  with $\delta$ the maximum degree of the graph. Thus both $V[n]$ and $d[k]$ can be computed efficiently from $G$. Our description of chain complexes is more general and thus encompasses a larger number of problems.  

\ 

Having discussed the form of the input, we define the following problem:\\
\mypromprob{{\sc $k$-local Cohomology}}{A cochain complex $(V[m],d[k])$ and an integer $l$.}{Either there exists a state $\ket{\Omega^{l}}\in V^{l}$ such that $d\ket{\Omega^{l}}=d^{\dagger}\ket{\Omega^{l}}=0$, and hence $H^l(d)\neq 0$, or otherwise $\norm{\(d\pm d^{\dagger}\)\ket{\Psi^{l}}}\geq \epsilon$ for all $\ket{\Psi^{l}}\in V^{l}$, with  $\epsilon=\frac{1}{\poly(m)}$.}{Output $\yes$ if the former and $\no$ if the latter.\\}

\noindent Note that the promise requires the notion of conjugation and an inner product, which is why the groups $C^{p}$ are taken to be vector spaces endowed with an inner product and notion of conjugation.  

\ 

One can similarly define the ``dual'' version of the problem, specified instead by a vector space $V=\bigoplus_{p=0}^m V_p$ and a  $k$-local {\it boundary} operator $\partial: V_p\to V_{p-1}$, $\partial^2=0$, in the same way:\\

\mypromprob{{\sc $k$-local Homology}}{A chain complex $(V[m],\partial[k])$ and an integer $l$.}{Either there exists a state $\ket{\Omega_{l}}\in V_{l}$ such that $\partial\ket{\Omega_{l}}=\partial^{\dagger}\ket{\Omega^{l}}=0$,  and hence $H_l(\partial)\neq 0$, or otherwise $\norm{\(\partial\pm \partial^{\dagger}\)\ket{\Psi_{l}}}\geq \epsilon$ for all $\ket{\Psi_{l}}\in V_{l}$, with  $\epsilon=\frac{1}{\poly(m)}$.}{Output $\yes$ if the former and $\no$ if the latter.\\}

\noindent Note that since the notion of conjugation is defined for all these complexes, the operator $d^{\dagger}$ defines a boundary operator $\partial$ and it follows that all the complexity results we derive for  {\sc $k$-local Cohomology} also hold for  {\sc $k$-local Homology}. We will not make this explicit below.

\

We note that  {\sc $k$-local Cohomology} (and {\sc $k$-local Homology}) admits  an efficient  algorithm for {\it constant} $l$, even if $V$ itself is exponential in size. Indeed, for constant $l$ there are at most $O(m^{l})$ elements in $V^{l}$. Thus, the operators $d^l$ and $\Delta^l$ are polynomial-size matrices and determining if they have any zero-eigenvalues can be done in polynomial time (see \cite{CompTopBook} for an overview of various algorithms). 

For non-constant $l$, however, the matrices can be exponentially large and we expect no  algorithm (classical or quantum) to be efficient. Indeed, strong evidence for this is provided by the fact that these problems are $\QMA_1$-hard as we show below. We will prove hardness by reduction from {\sc Quantum $k$-$\SAT$}, introduced in \cite{bravyi2011efficient}. For reference, recall this is defined as:\\
\mypromprob{{\sc Quantum $k$-$\SAT$} \cite{bravyi2011efficient}}{A set of $k$-local projectors $\Pi_S$ where $S$ are possible subsets of $\{1,\ldots,n\}$ of cardinality $k$.  }{Either there exists a state $\ket{\psi}$ such that $\Pi_S\ket{\psi}=0$ for all $S$ or otherwise $\sum_{S}\bra{\psi}\Pi_{S}\ket{\psi}\geq \epsilon$ for all $\ket{\psi}$, with $\epsilon>\frac{1}{\poly(n)}$. }{Output $\yes$ if the former and $\no$ if the latter.\\}

\noindent It was shown in \cite{bravyi2011efficient} that this problem is $\QMA_1$ complete for $k=4$ and that for $k=2$ there is an efficient algorithm. For  $k=3$ the problem was shown to be $\QMA_1$-hard in \cite{Gosset_2013}.

\subsection{The complexity of the $k$-local cohomology problem}

We now prove the following Theorem:

\begin{thm}[]\label{thmCohomology}
{\sc $4$-local Cohomology} is $\QMA_{1}$-hard.
\end{thm}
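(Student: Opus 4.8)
The plan is to reduce from {\sc Quantum $4$-$\SAT$}, which was shown to be $\QMA_1$-complete in \cite{bravyi2011efficient}. Given an instance consisting of $4$-local projectors $\Pi_S$ on $n$ qubits, I first pass to fermionic Fock space using the single-fermion-in-two-modes encoding \eqref{basisfN}--\eqref{mapsigmaab} from Section~\ref{sec:The SUSY local Hamiltonian problem is QMA-complete}. Each projector $\Pi_S$ maps to a $4$-local fermionic operator $\hat\Pi_S$ with fermion number $F=0$, and by \eqref{eq:same_op} the restriction of $\hat\Pi_S$ to the encoded subspace (one fermion per site, all states at fixed $F=n$) agrees with $\Pi_S$. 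The key observation is that the total projector $\hat\Pi := \sum_S \hat\Pi_S$ is positive semidefinite with a zero eigenvector in the encoded subspace \emph{iff} the original {\sc Quantum $4$-$\SAT$} instance is satisfiable, and that the promise gap $\epsilon$ carries over directly.

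Next I build a coboundary operator realizing this. Following the construction in the proof of Theorem~\ref{thmQMAhard}, I introduce one auxiliary pair of modes $(a_0,b_0)$ and set $\cQ = d = \tfrac{1}{\sqrt2}(a_0^\dagger + b_0^\dagger)\,B$, where $B = \hat\Pi^{1/2}$ (or, to keep things $k$-local and avoid square roots, I instead build a $d$ whose associated Laplacian restricts on the relevant subspace to something proportional to $\hat\Pi$). The auxiliary modes guarantee $d^2=(d^\dagger)^2=0$ for arbitrary $B$ and give $d$ definite fermion number $F=1$, so the complex has the required graded form \eqref{cochainVp}. The penalty term $B_{\mathit{pen}}$ with large coupling $J(n)$ forces the low-lying spectrum into the encoded subspace, exactly as before. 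The Laplacian is then $\Delta=\{d,d^\dagger\}=B^2$, so on the encoded subspace its zero eigenvectors are precisely the satisfying assignments of the {\sc Quantum $4$-$\SAT$} instance, and a spectral gap of $\epsilon^2 = 1/\poly$ persists in the $\no$ case. Fixing $l$ to be the fermion number $n$ (augmented by the auxiliary occupation) of the encoded states, the $\yes$ case yields $\ket{\Omega^l}$ with $d\ket{\Omega^l}=d^\dagger\ket{\Omega^l}=0$ and hence $H^l(d)\neq 0$, while the $\no$ case gives $\norm{(d\pm d^\dagger)\ket{\Psi^l}}\geq\epsilon$ for all $\ket{\Psi^l}\in V^l$, matching the promise of {\sc $k$-local Cohomology}.

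The delicate points are threefold. First, one must confirm that $d$ is genuinely $4$-local: since $\hat\Pi_S$ is $4$-local and multiplication by $(a_0^\dagger+b_0^\dagger)$ adds one mode, taking $B=\hat\Pi$ rather than a square root is preferable so that no unbounded locality creeps in; I expect the cleanest route is to absorb the projector structure directly into the $B_i$ of \eqref{dB} so that $d$ stays $4$-local and $\Delta$ reproduces the penalized Hamiltonian on the encoded subspace. Second, I must verify that the large penalty $J(n)$ does not spoil \emph{perfect} zero energy in the $\yes$ case --- this holds because a genuine satisfying assignment lives in the $B_{\mathit{pen}}=0$ subspace and is annihilated by both $d$ and $d^\dagger$ exactly, preserving the $\QMA_1$ (perfect completeness) structure rather than merely $\QMA$. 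Third, one checks that states violating the encoding but appearing at level $l$ cannot accidentally create spurious cohomology; this is controlled by the $J(n)$ gap together with the promise separation. I expect the main obstacle to be this last bookkeeping: ensuring that restricting to the fixed fermion-number sector $V^l$ isolates exactly the encoded Hilbert space so that the cohomology at level $l$ is nontrivial if and only if the {\sc Quantum $4$-$\SAT$} instance is satisfiable.
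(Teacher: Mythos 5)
Your overall strategy --- reduce from quantum SAT via the two-modes-per-qubit encoding \eqref{basisfN}--\eqref{mapsigmaab}, attach an auxiliary pair $(a_0,b_0)$ so that $d=\tfrac{1}{\sqrt2}(a_0^\dagger+b_0^\dagger)B$ squares to zero and carries $F=1$, and exploit $\Delta=B^2$ with $B=\hat\Pi=\sum_S\cP_S$ --- is exactly the paper's, and your perfect-completeness check (a satisfying state is annihilated \emph{exactly} by every $\cP_S$ and by the penalty, so $\QMA_1$ structure survives) is the right one. But there is a genuine gap in the locality count: you reduce from {\sc Quantum $4$-$\SAT$} \cite{bravyi2011efficient}, and since every term $a_0^\dagger\cP_S$ or $b_0^\dagger\cP_S$ touches the auxiliary site \emph{plus} all $k$ sites of $S$, your $d$ is $5$-local, which proves $\QMA_1$-hardness only of {\sc $5$-local Cohomology}. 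Your proposed fix of ``absorbing the projector structure directly into the $B_i$'' cannot help: in the canonical form \eqref{dB} the operator $B_0=\sum_S\cP_S$ \emph{is} the absorbed structure, and each of its terms is still $k$-local, so $d$ is $(k+1)$-local no matter how you bracket it (and $B=\hat\Pi^{1/2}$ is rightly discarded, being nonlocal). The paper obtains the stated $4$-local result by reducing instead from {\sc Quantum $3$-$\SAT$}, which is $\QMA_1$-hard by \cite{Gosset_2013}; with $k=3$ the identical construction gives a genuinely $4$-local $d$.

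The second deviation is your use of the penalty $B_{\mathit{pen}}$ with large $J(n)$ on the full Fock space. The paper avoids the penalty entirely: since the input to {\sc $k$-local Cohomology} includes the vector space, it simply \emph{defines} $V$ by the constraints $n_{a_i}n_{b_i}=0$ for $i=1,\dots,n$ (site $0$ unconstrained), so the ``junk'' states you worry about at level $l$ are excluded by fiat; it then sets $l=n+2$, the top of the complex, where every cochain has the form $\ket{1,1}\otimes\ket{\psi_n}$ and is automatically $d$-closed, and the whole problem collapses to whether $d^\dagger$, i.e.\ $\sum_S\cP_S$, annihilates some $\ket{\psi_n}$, with the $\no$ bound $\norm{(d\pm d^\dagger)\ket{\Psi^{n+2}}}\geq\sum_S\bra{\psi}\Pi_S\ket{\psi}\geq\epsilon$ following from a single Cauchy--Schwarz step. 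This cleanly disposes of what you yourself flag as your main obstacle. Your penalty variant can in fact be completed --- $B=\hat\Pi+B_{\mathit{pen}}$ is block-diagonal with respect to the encoded/junk split within each fermion-number sector, satisfying encoded states are killed exactly by both summands, and junk states at level $l$ cost at least $J$ --- but you must commit to a definite $l$ and verify $d^\dagger\ket{\Omega^l}=0$ exactly: this requires site $0$ to be empty (take $l=n$, so $(a_0+b_0)\ket{\Omega^l}=0$) or doubly occupied (take $l=n+2$, where $B(a_0+b_0)\ket{\Omega^l}=0$ because $B$ annihilates the encoded factor and commutes with the site-$0$ modes); your ``fermion number $n$ augmented by the auxiliary occupation'' leaves precisely this step unresolved.
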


\begin{proof}

We will establish hardness by reduction from  {\sc Quantum $k$-$\SAT$}. That is, our goal is to construct a cochain complex $(V,d)$ such that a certain cohomology group $H^l(d)$ is nontrivial only if there exists a satisfying assignment of {\sc Quantum $k$-$\SAT$}.  The general structure of the reduction is identical to that used in the proof of Theorem~\ref{thmQMAhard}.\footnote{In fact, one could establish hardness of the cohomology problem by choosing $B$ there to be Kitaev's clock Hamiltonian \cite{KitaevBook} and set the perfect completeness parameter $c=1$. However, this would lead to a $6$-local supercharge and thus the hardness result is strengthened by reduction from  {\sc Quantum $k$-$\SAT$}.} 

Consider an instance of {\sc Quantum $k$-$\SAT$}, given by a set of $k$-local projectors $\Pi_S$ acting on an $n$-qubit state  $\ket{\psi}$. These have the form
\equ{
\Pi_{S}=\Pi_{S}(\sigma_{i}^{+},\sigma_{i}^{-},\sigma_{i}^{z})\,,\qquad \ket{\psi}=\sum c_{n_{1}\cdots n_{n}}\ket{n_{1},\cdots, n_{n}}\,,
}
where the arguments in $\Pi_S$ denote the qubits $i\in S\subset \{1,\cdots,n\}$ that are acted upon by each term and the $c_{n_{1}\cdots n_{n}}$ are arbitrary coefficients. Based on this data and applying the map \eqref{basisfN},   \eqref{mapsigmaab}  we construct 
\equ{\label{PsPsi}
\cP_{S}:= \Pi_{S}(a_{i}^{\dagger}b_{i},b_{i}^{\dagger}a_{i},a_{i}^{\dagger}a_{i}-b_{i}^{\dagger}b_{i})\,,\qquad \ket{\psi_{n}}:=\sum c_{n_{1}\cdots n_{n}} \ket{\psi_{n_{1}\cdots n_{n}}}\,.
} 
 Note that the $\cP_{S}$ are $k$-local operators with $F=0$ and $\ket{\psi_n}$ is a state in fermionic Fock space with fermion number $F=n$. 
We  introduce an additional auxiliary site $i=0$ with corresponding fermionic modes $(a_0,b_0)$. Then, we construct the coboundary operator 
\equ{\label{dPs}
d=\frac{1}{\sqrt 2}(a_{0}^{\dagger}+b_{0}^{\dagger}) \, \sum_{S} \cP_{S}\,.
}
 Note that the fermionic modes at the site $i=0$ appear only outside $\cP_{S}$; $l$th is to ensure that $d^{2}=0$ for any $\cP_{S}$ and that $d$ has fermion number $F=1$. 
 
 In principle, the operator \eqref{dPs} acts on the full fermionic Fock space $(\Bbb C^2)^{\otimes m}$, $m=2n$, spanned by the basis states $\ket{n_{a_{0}},n_{b_{0}}\,,\dots, n_{a_{n}},n_{b_{n}}}$, where $n_{a_{i}}=\{0,1\}$ and $n_{b_{i}}=\{0,1\}$ are the occupancy numbers of the two types of fermionic modes at site $i$. However, we can consistently restrict the action of $d$ to a vector space $V\subseteq (\Bbb C^2)^{\otimes m}$, which we take  to be the subspace spanned by the basis states satisfying the $n$ constraints:
\equ{\label{constVhard}
n_{a_{i}}n_{b_{i}}=0\,,\qquad \qquad  i=(1,\cdots,n)\,.
}
Note that there are no constraints on the occupancy numbers $n_{a_{0}}$ and $n_{b_{0}}$ at the site $i=0$. For the remaining $n$ sites the constraints above impose that no more than a single mode is occupied.
The vector space then has the form $V=\bigoplus_{p=0}^{n+2} V^p$, where each $V^{p}$ is spanned by the set of states satisfying \eqref{constVhard} and  fermion number $p$. Note that the coboundary operator \eqref{dPs} creates states at the (unconstrained) site $i=0$, while at the sites $i=(1,\cdots, n)$ it exchanges $a$-modes by $b$-modes preserving the condition \eqref{constVhard}. 
Furthermore, since $d$ has $F=1$ it acts as $d: V^{p}\to V^{p+1}$ and this defines a cochain complex of the form
\equ{\label{chardV}
C:\qquad 0\xrightarrow{d} V^{0}\xrightarrow{d} \cdots \xrightarrow{d} V^{n+2}\xrightarrow{d} 0\,.
}

Having specified the cochain complex $(V,d)$ it remains only to set the integer $l$ specifying the particular cohomology group $H^{l}$ of interest. We set $l=n+2$, the highest possible nontrivial value in the complex \eqref{chardV}. That is, we need to search for cochains in $V^{n+2}$ which are cocycles but not coboundaries. Since the sites $i=(1,\cdots, n)$ can be occupied by at most one fermionic mode, any element  $\ket{\Psi_{n+2}}\in V^{n+2}$ takes the form
\equ{\label{staten+2}
\ket{\Psi_{n+2}}=\ket{1,1}\otimes \ket{\psi_{n}}\,,
}
where $\ket{\psi_{n}}$ is an $n$-fermion state of the form in  \eqref{PsPsi}. 

Note that applying the operator $d$ in  \eqref{dPs} to  a  state of the form  \eqref{staten+2}, we automatically have
\equ{\label{dn2psi}
d\ket{\Psi_{n+2}}=0\,,
}
consistent with the form of the complex \eqref{chardV}.
That is,  all $(n+2)$-cochains are automatically $(n+2)$-cocycles. 
On the other hand, applying $d^\dagger$ on such states we have
\equ{
d^{\dagger}\ket{\Psi_{n+2}}=\frac{1}{\sqrt 2}(\ket{0,1}+\ket{1,0})\otimes \sum_{S}\cP_{S}\ket{\psi_{n}}\,.
}
Now, given a $\yes$ instance of {\sc  quantum $k$-$\SAT$}, there exists some $n$-qubit state  $\ket{\psi}$ such that $\Pi_S \ket{\psi}=0$ for all $S$. The corresponding state $\ket{\psi_n}$ in fermionic Fock space obatained by \eqref{PsPsi} then satisfies $\cP_S \ket{\psi_n}=0$. Thus, 
\equ{
 \Pi_{S}\ket{\psi}=0  \qquad \Rightarrow \qquad d^{\dagger}\ket{\Psi_{n+2}}=0\,.
}
This, together with \eqref{dn2psi} implies $H^{n+2}\neq 0$. Thus, for every $\yes$ instance of {\sc  quantum $k$-$\SAT$} there is $\yes$ instance  of {\sc $(k+1)$-local Cohomology}  and vice versa.

To complete the reduction it remains only to check the case of a $\no$ instance. That is, we evaluate the norm
\eqss{
\norm{(d\pm d^{\dagger})}_{V^{n+2}}=\,& \sqrt{ \bra{\Psi_{n+2}}(d\pm d^{\dagger})^{2} \ket{\Psi_{n+2}}}= \left( \bra{\psi_{n}} \left(\sum_{S}\cP_{S}\right)^{2}\ \ket{\psi_{n}}\right)^{1/2} \\
\geq \,&  \bra{\psi_{n}} \sum_{S}\cP_{S}\ \ket{\psi_{n}}=\sum_{S}\bra{\psi}\Pi_{S}\ket{\psi}\,,
}
where in the penultimate step we used the Cauchy–Schwarz inequality and in the last step we wrote the expression in qubit space, using \eqref{eq:same_op}. In a $\no$ instance of {\sc  quantum $k$-$\SAT$}, the last expression above is bounded below by $\epsilon$, and we have
\equ{
\norm{d\pm d^{\dagger}}_{V^{n+2}}\geq \epsilon\,.
}
Thus, a $\no$ instance of {\sc  quantum $k$-$\SAT$} implies a $\no$ instance of {\sc $k$-local cohomology}. Since {\sc  quantum $3$-$\SAT$} is $\QMA_{1}$-hard \cite{Gosset_2013} then {\sc $4$-local Cohomology} is $\QMA_{1}$-hard.  

\end{proof}

\noindent One can also see that the problem is contained in  $\QMA$, as we show next.

\begin{thm}
    {\sc $k$-local Cohomology} is contained in $\QMA$ for cochain complexes $(V[m],d[k])$.
\end{thm}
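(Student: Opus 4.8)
The plan is to follow the verifier of Proposition~\ref{thmQMAcontainment} almost verbatim, but restricted to the fermion-number sector $V^l$. The key observation is again that, although the Laplacian $\Delta=dd^\dagger+d^\dagger d$ need not be sparse (and so need not admit efficient Hamiltonian simulation), the \emph{Hermitian} operator $d+d^\dagger$ is $k$-local, hence sparse, and satisfies $(d+d^\dagger)^2=\Delta$ because $d^2=(d^\dagger)^2=0$. Thus the eigenvalues of $\Delta$ are the squares of those of $d+d^\dagger$, to which we do have efficient access through the input description of $d$. The witness for a $\yes$ instance is the cohomology representative $\ket{\Omega^l}\in V^l$, which by definition satisfies $d\ket{\Omega^l}=d^\dagger\ket{\Omega^l}=0$, and hence is an exact zero-eigenstate of $d+d^\dagger$ carrying fermion number exactly $l$.

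The verifier proceeds in two stages. First I would project the witness onto $V^l$: since the fermion number $F=\sum_i a_i^\dagger a_i$ and the defining constraints of $V$ are all diagonal in the Fock basis, the verifier can simultaneously measure $F$ and evaluate the polynomial-size list of constraints with a reversible classical circuit, rejecting unless $F=l$ and all constraints hold; if this stage succeeds the post-measurement state $\ket{\Psi^l}$ lies in $V^l$. Second, I would run phase estimation on $e^{i(d+d^\dagger)}$ — implemented up to inverse-polynomial error by sparse Hamiltonian simulation, e.g.\ the algorithm of~\cite{berry2015hamiltonian}, after rescaling so the spectrum lies in a fixed interval (recall $\norm{d+d^\dagger}=O(\poly(m))$) — and accept iff the estimated eigenvalue $\tilde\lambda$ satisfies $|\tilde\lambda|\le\epsilon/2$. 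Taking the phase-estimation resolution to be $\epsilon/4=1/\poly(m)$ keeps the total runtime $\poly(m)$.

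For completeness, a genuine witness $\ket{\Omega^l}$ passes the first stage with certainty and is an exact zero-eigenstate of $d+d^\dagger$, so phase estimation returns $|\tilde\lambda|\le\epsilon/4$ and the verifier accepts with probability close to one. For soundness I would use that $\Delta$ is block-diagonal in fermion number, so for any $\ket{\Psi^l}\in V^l$ one has $\bra{\Psi^l}(d+d^\dagger)^2\ket{\Psi^l}=\bra{\Psi^l}\Delta^l\ket{\Psi^l}=\norm{(d+d^\dagger)\ket{\Psi^l}}^2$. After a successful first stage in a $\no$ instance the residual state lies in $V^l$, where the promise gives $\bra{\Psi^l}\Delta^l\ket{\Psi^l}\ge\epsilon^2$; hence phase estimation samples an eigenvalue $\lambda$ of $d+d^\dagger$ with $\mathbb{E}[\lambda^2]\ge\epsilon^2$. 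Since $\lambda^2\le\norm{d+d^\dagger}^2=O(\poly(m))$, a Markov-type estimate forces $\Pr[|\lambda|>\epsilon/2]\ge \tfrac{3\epsilon^2}{4\norm{d+d^\dagger}^2}=1/\poly(m)$, so the verifier rejects with at least inverse-polynomial probability. This yields a completeness–soundness gap of $1/\poly(m)$, which places the problem in $\QMA$ after standard amplification.

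The main obstacle — and the reason the containment is not immediate — is exactly the access issue flagged in Proposition~\ref{thmQMAcontainment}: we are handed $d$ (equivalently $d+d^\dagger$) locally, not $\Delta$, and $\Delta$ may fail to be sparse, so one cannot simply invoke the textbook $\QMA$ verifier for the local Hamiltonian $\Delta$; squaring the Hermitian, sparse operator $d+d^\dagger$ circumvents this. A second point requiring care is that $d+d^\dagger$ does \emph{not} preserve the grading $V=\bigoplus_p V^p$, so the eigenstates sampled by phase estimation live in neighbouring fermion-number sectors. This is harmless because only the expectation $\bra{\Psi^l}(d+d^\dagger)^2\ket{\Psi^l}$ enters the soundness bound and $\Delta$ is block-diagonal, so the first-stage projection onto $V^l$ suffices to confine the analysis to the single relevant block.
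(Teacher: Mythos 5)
Your proof is correct, but it takes a genuinely different route from the paper's. The paper exploits the fact that for a $k$-local $d$ the Laplacian $\Delta$ is itself $2k$-local with efficiently computable local terms, so it simply Trotterizes $e^{it\Delta}$ and runs phase estimation directly on $\Delta$, which in a $\no$ instance yields rejection with probability $\geq 2/3$ in one shot. You instead import the trick from the $\QMA$-containment of {\sc SUSY $k$-local Hamiltonian}: simulate the sparse Hermitian operator $d+d^\dagger$ and use $(d+d^\dagger)^2=\Delta$. Note that your motivating premise --- that $\Delta$ ``need not be sparse'' --- is actually false in the $k$-local setting of this theorem; the access obstruction you cite is only relevant when $d$ is merely sparse, and indeed the real payoff of your route is that it extends the containment to sparse (non-local) coboundary operators, analogously to Proposition~\ref{thmQMAcontainment}, whereas the paper's Trotter-based argument does not. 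The price you pay is in soundness: because $d+d^\dagger$ does not preserve the grading, you retreat to the expectation $\bra{\Psi^l}(d+d^\dagger)^2\ket{\Psi^l}\geq\epsilon^2$ and a Markov bound, obtaining only an inverse-polynomial completeness--soundness gap that must then be amplified. This is valid but lossier than necessary: since $\Delta$ is block-diagonal in fermion number and its $V^l$-block is $\geq\epsilon^2$ in a $\no$ instance, any eigenvector of $d+d^\dagger$ with eigenvalue $\abs{\lambda}<\epsilon$ has vanishing $V^l$-component, so your post-projection witness is supported entirely on eigenvalues $\abs{\lambda}\geq\epsilon$ and phase estimation already rejects with constant probability, no amplification needed. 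Finally, your first stage is slightly more careful than the paper's: you check both the fermion number \emph{and} the diagonal constraints defining $V$, whereas the paper's verifier measures only the Hamming weight; your version is the safer one, since the promise constrains states in $V^l$ rather than all Hamming-weight-$l$ states of Fock space.
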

\begin{proof}
    Recall that a $\yes$ instance of {\sc $k$-local Cohomology} is one where there exists some state $\ket{\Omega^l} \in V^l$ such that $d\ket{\Omega^l} = d^\dagger\ket{\Omega^l} = 0$, and in a $\no$ instance, we have that for any state $\ket{\psi^l} \in V^l$, $\| (d \pm d^\dagger) \ket{\psi^l}\| \geq \epsilon$, with $\epsilon = 1/\poly(m)$. Thus for a $\yes$ instance the witness is simply the state $\ket{w^l} =\ket{\Omega^l}$. To verify the witness, it suffices to estimate $\braket{w^l|\Delta^l|w^l} = \braket{w^l|\Delta|w^l}$ up to an inverse polynomial additive accuracy smaller than $\epsilon$. This can easily be done by running phase estimation on an approximation of $e^{i\Delta}$ obtained from Hamiltonian simulation, and providing $\ket{w^l}$ as the input. 
    Since $d$ is $k$-local, $\Delta$ is $2k$-local and the local terms can be  computed from $d$ efficiently. One can use, for instance, the Trotterization method of Hamiltonian simulation to implement a unitary $U$ such that $\|U - e^{it\Delta}\| \leq 1/\poly(m)$ in time $\poly(m,t)$~\cite{lloyd1996universal}. Hence, phase estimation can be used to obtain an estimate of $\braket{w|\Delta|w}$ up to additive accuracy $1/\poly(m)$, with probability at least $2/3$, in polynomial time. Hence, the verifier circuit will measure an energy inverse-polynomially close to zero and accept $\ket{w}$ with probability at least $2/3$, as required.  
    
    In a $\no$ instance, we are promised that any candidate witness $\ket{\psi^l} \in V^l$ will have $\braket{\psi^l|\Delta|\psi^l} \geq \epsilon^2$. However, we are not provided any promises on states outside of $V^l$ and it is possible that we could be ``tricked'' into accepting a witness $\ket{w}$ with support outside $V^l$, and so we must first project the witness onto the space $V^l$. This is straightforward: we compute the Hamming weight of the state $\ket{w}$ in the witness register, and store the result in an ancilla register of $\log m$ qubits. We then proceed to measure this register and if we see any outcome $\neq l$, reject. Otherwise, we know that the {\it collapsed} state in the witness register is supported entirely in $V^l$, and we proceed as above, rejecting if we obtain an energy larger than some threshold $\epsilon^2 - 1/\poly(m)$. By the problem promise and the behaviour of phase estimation, we will therefore reject with probability $\geq 2/3$ as required. 
\end{proof}

Therefore we find that, in general, the complexity of {\sc $k$-local Cohomology} is somewhere in between $\QMA_{1}$ and $\QMA$. The question of whether $\QMA_{1}=\QMA$ is an open problem in complexity theory~\cite{aaronson2009perfect}.

An interesting question is whether the cohomology problem remains $\QMA_{1}$-hard for natural instances of the problem. As mentioned in the Introduction, a system of particular interest is the fermion hard-core model of \cite{Fendley_2003}, which gives rise to the well-known independence complex. A lower bound on the computational complexity of the cohomology problem for the independence complex follows from the results in \cite{ADAMASZEK20168}. In this reference, the chain complex was defined by the boundary operator $\partial$ and it was shown that deciding if the corresponding homology groups $H_l(\partial)$ were trivial or nontrivial is $\NP$-hard, giving a lower bound on its complexity. This translates directly into a lower bound on the complexity of the cohomology problem since in this case the homology and cohomology groups are isomorphic, $H_l(\partial)\simeq H^l(d)$. Finally, we note that for the case of the independence complex, the Laplacian is always sparse, even though the coboundary operator is not always local~\cite{gyurik}. Therefore, following the proof above, we can run Hamiltonian simulation directly on $\Delta$ without first obtaining a local decomposition, and hence the cohomology problem for the independence complex is inside $\QMA$. Combining the two results, we find that the complexity of the problem lies somewhere between $\NP$ and $\QMA$. Determining the precise complexity would be interesting, especially in relation to its applications to topological data analysis \cite{LloydetalTDA}.  We discuss this in more detail in Section~\ref{sec:Betti numbers in Topological Data Analysis}.

\section{The Betti Number Problem}\label{sec:betti_numbers}

In this section, we consider the counting version of the cohomology problem. Namely, rather than deciding whether a certain cohomology group $H^l(d)$ is nontrivial, we consider the problem of determining its rank, or Betti number $\beta_l=\text{rank}\, H^l(d)$.\footnote{Note that Betti numbers only count the rank of the freely acting part of the cohomology group. In the absence of torsion coefficients, as we assume here, the Betti number captures the rank of the full group. Thus, a vanishing Betti number implies a trivial cohomology (and dual homology) group.} 
We consider both the exact problem as well as that of finding an  additive approximation to normalized Betti numbers, $\frac{\beta_l
}{\dim V^l}$. It is known that computing Betti numbers can be a very hard problem. For instance, in the case of  algebraic varieties the problem is $\PSPACE$-hard~\cite{scheiblechner2007complexity} and in the case of the independence complex it is $\#\P$-hard (see below). To the best of our knowledge, the complexity of an additive approximation to normalized Betti numbers has not been directly considered. 

As we show below, for a cochain complex $(V[m],d[O(1)])$, the problem of computing the $l$th Betti number is complete for the class $\#\BQP$ (the counting version of $\QMA$), which was shown to be equivalent to $\#\P$ under weakly parsimonious reductions in \cite{Brown_2011}. Let us first point out that a weaker hardness result follows from \cite{Crichigno:2020vue}, where it was shown that computing the Euler characteristic (or Witten index), 
\equ{\cI=\sum_{l=0}^{m} (-1)^l\beta_l\,,
} 
of a complex is $\#\P$-complete, with the vector space and a $6$-local coboundary operator given as the input. It thus follows that computing the individual Betti numbers themselves is $\#\P$-hard under a Turing reduction. This result is strengthened by the following Theorem:
\begin{thm}
    Given a cochain complex $(V[m],d[6])$, the problem of computing the $l$th Betti number is $\sharpP$-hard for $l = \Omega(n)$ and $\sharpP$-complete provided the $l$th Laplacian $\Delta_l$ has a $1/\poly(m)$ gap above its zero groundspace.
\end{thm}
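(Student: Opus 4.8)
The entire statement hinges on the Hodge identification \eqref{Hodgel}, which tells us that $\beta_l = \text{rank}\, H^l(d) = \dim\ker\Delta_l$: the $l$th Betti number is exactly the dimension of the zero-energy groundspace of the restricted Laplacian $\Delta_l$. The plan is therefore to treat ``compute $\beta_l$'' as the problem of counting the ground states of the local supersymmetric Hamiltonian $\Delta=\{d,d^\dagger\}$ at a fixed fermion number $l$, and to bound it on both sides using the counting class $\#\BQP$ together with the fact that $\#\BQP = \sharpP$ under weakly parsimonious reductions \cite{Brown_2011}.

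\textbf{Hardness.} For the lower bound I would reuse the cochain complex built in the proof of Theorem~\ref{thmCohomology}, but feed it a counting-hard source. Concretely, start from a $\sharpP$-complete problem — counting satisfying assignments of a $3$-CNF formula, encoded as a family of diagonal rank-one projectors $\Pi_S$ — apply the single-qubit map \eqref{mapsigmaab} to obtain operators $\cP_S$, and set $d=\frac{1}{\sqrt 2}(a_0^\dagger+b_0^\dagger)\sum_S \cP_S$. Working at the top fermion level $l=n+2=\Omega(n)$ makes every $l$-cochain automatically a cocycle, exactly as in \eqref{dn2psi}, so that $\ker\Delta_l$ collapses to the states $\ket{1,1}\otimes\ket{\psi_n}$ with $\sum_S\cP_S\ket{\psi_n}=0$; since each $\cP_S\succeq 0$ this is the common kernel of the $\cP_S$, whose dimension is precisely the number of satisfying assignments. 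Thus $\beta_l$ literally equals the hard count, giving $\sharpP$-hardness. This reduction already produces a $4$-local $d$, and a fortiori lands inside the $d[6]$ class of the statement; encoding instead a $\QMA_1$-complete history (Kitaev's clock Hamiltonian \cite{KitaevBook} with perfect completeness) into the block $B$, as in the footnote to Theorem~\ref{thmCohomology}, is what naturally yields $d[6]$ and the matching quantum witness-counting underlying $\#\BQP$-completeness. Note that $l$ must grow with $n$: for constant $l$ the space $V^l$ has only $O(m^l)$ basis states and $\beta_l$ is computable classically in polynomial time, so no hardness can hold there, which explains the $l=\Omega(n)$ hypothesis.

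\textbf{Containment with the gap.} For the upper bound I would exhibit $\beta_l$ as a $\#\BQP$ quantity. Build the natural verifier: measure the fermion number of the witness and reject unless it equals $l$ (projecting onto $V^l$), then run phase estimation on an approximation of $e^{i\Delta}$ obtained by Trotterised Hamiltonian simulation — efficient because $d[k]$ makes $\Delta$ a $2k$-local, efficiently-simulable operator — and accept iff the estimated energy lies below a threshold placed inside the promised gap. Because $\Delta_l$ has a $1/\poly(m)$ gap above its zero groundspace, the subspace of witnesses accepted with high probability is exactly $\ker\Delta_l$: energy-zero states accept, states in $V^l$ orthogonal to the kernel have energy at least the gap and are rejected, and states outside $V^l$ fail the fermion-number test. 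Hence the dimension of the acceptance subspace counted by the $\#\BQP$ machine equals $\dim\ker\Delta_l=\beta_l$, so $\beta_l\in\#\BQP=\sharpP$, and together with the hardness this yields $\sharpP$-completeness.

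\textbf{Main obstacle.} The real content lives in the containment, and it is precisely what forces the gap hypothesis: a counting machine that only estimates eigenvalues to inverse-polynomial accuracy cannot distinguish an exactly-zero eigenvalue from an exponentially small nonzero one, so without a spectral gap the returned count need not equal the true $\text{rank}\, H^l(d)$. The $1/\poly(m)$ gap is exactly the promise separating the zero groundspace from all excited states and making the $\#\BQP$ count coincide with $\beta_l$ — the same exactness-versus-gap tension the paper flags for the true (ungapped) Betti-number problem. A secondary subtlety is isolating a \emph{single} $\beta_l$ rather than the alternating sum $\cI=\sum_l(-1)^l\beta_l$, whose $\sharpP$-hardness was already known; placing the reduction at the top fermion level, where the cocycle condition is vacuous, is what pins the hard count to one fixed cohomological degree.
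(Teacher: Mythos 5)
Your proposal is correct, and its containment half (fermion-number projection, then phase estimation on $e^{i\Delta}$, accept below a threshold inside the promised gap, count accepting witnesses in $\#\BQP$ and invoke $\#\BQP=\sharpP$ under weakly parsimonious reductions \cite{Brown_2011}) is essentially identical to the paper's. Where you genuinely diverge is the hardness leg. The paper does not reduce from a classical counting problem: it reuses the complex of Theorem~\ref{thmCohomology} so that $\beta_{n+2}$ equals the number of zero eigenvalues of an arbitrary local Hamiltonian, and then invokes the Brown--Fefferman--Wootters problem {\sc LH}$(0)$, arguing via a perfect-completeness adaptation of their Theorem~11 that this is hard for $\#\BQP_1$ (the counting version of $\QMA_1$), whence $\#\BQP$-hard under weakly parsimonious reductions and in particular $\sharpP$-hard; the inverse-polynomial gap of the hard instances is then extracted from their Lemma~12. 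Your route instead feeds the same construction a parsimonious reduction from $\#3\mathrm{SAT}$ with diagonal clause projectors, so that at the top level $l=n+2$ the cocycle condition is vacuous and $\ker\Delta_l$ is exactly the span of satisfying assignments. This is more elementary, and it has a virtue you leave implicit but should state: because $\sum_S\Pi_S$ is diagonal with integer spectrum, $\Delta|_{V^{n+2}}$ has eigenvalues that are squares of nonnegative integers, so the hard instances automatically have a gap of $1$ above the zero groundspace and hence lie inside the promise class --- the completeness claim is coherent without any appeal to \cite{Brown_2011}'s gap lemma. What your route gives up is the stronger conclusion: direct $\#3\mathrm{SAT}$ hardness shows $\sharpP$-hardness (which is all the theorem statement asserts), but the paper's detour through {\sc LH}$(0)$ and $\#\BQP_1$ establishes $\#\BQP$-hardness under weakly parsimonious reductions, which is where the $d[6]$ locality in the statement comes from; your diagonal construction lands at $d[4]$, a fortiori within $d[6]$, and you correctly anticipate in passing that the clock-Hamiltonian encoding is what the $6$-local figure reflects. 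Your closing observations --- that constant $l$ is classically easy, explaining $l=\Omega(n)$, and that pinning the count at the top degree isolates a single $\beta_l$ rather than the Witten index --- match the paper's framing.
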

\begin{proof}
   
    In the proof of Theorem~\ref{thmCohomology}, we showed how to construct a cochain complex from an arbitrary $n$-qubit, $k$-local Hamiltonian such that the Betti number $\beta_{n+2}$ is equal to the number of $0$-eigenvalues of that Hamiltonian.

    In~\cite{Brown_2011} the problem {\sc LH$(\sigma)$} of computing the number of states in a low-energy groundspace of energies $0\leq E\leq \sigma$ of a local $n$ qubit Hamiltonian is considered, and shown to be complete for the class $\mathsf{\#BQP}$ for $e^{-\poly(n)}\leq \sigma\leq 1/\poly(n)$. In our case we are interested in computing the dimension of the zero-eigenspace of a local Hamiltonian, i.e., the problem {\sc LH$(0)$}. This can be seen to be hard for the counting version of $\QMA_1$ -- which we will call $\#\BQP_1$ -- using the same proof as~\cite{Brown_2011} (in particular, considering the perfect completeness version of the  proof of their Theorem~11). Note that by definition we have the inclusions $\sharpP \subseteq \#\BQP_1 \subseteq \#\BQP$. Since $\mathsf{\#BQP}=\sharpP$ under weakly parsimonious reductions  \cite{Brown_2011}, it follows that {\sc LH$(0)$} and hence the problem of computing $\beta_{n+2}$ for these complexes is  $\#\BQP$-hard under weakly parsimonious reductions, and in particular that computing Betti numbers is $\sharpP$-hard.

The problem can be placed inside $\#\BQP$ whenever the Laplacian $\Delta_l$  has at least an inverse-polynomial gap above 0. Indeed, since the Betti numbers $\beta_l$ correspond to the dimension of the zero-eigenspace of the $l$th Lapacian, $\Delta_l$, one can construct a $\BQP$ verifier that distinguishes zero from non-zero eigenvectors so long as  $\Delta_l$ has at least an inverse-polynomial gap above 0. The problem of computing Betti numbers then reduces to the problem of counting the number of inputs that cause the $\BQP$ verifier to accept with high probability, which is inside $\#\BQP$ and by the results of \cite{Brown_2011} is inside $\#\P$. In the construction used in the proof of Theorem~\ref{thmCohomology} mentioned above, we do indeed have an inverse-polynomial spectral gap above 0 (this can be seen from Lemma 12 of \cite{Brown_2011} plus the perfect completeness modification mentioned above), and hence it follows that the Betti number problem is $\#\P$-complete. 
\end{proof}

\ 

\noindent We now focus on the problem of additively approximating Betti numbers. More precisely, we focus our attention on the relatively easier problem of estimating a \emph{normalized} version of Betti numbers. Concretely, we define the following problem:

\ 

\myprob{\label{def:bne}{\sc Betti number estimation (BNE)}}{A cochain complex $(V[m],d[k])$, an integer $0\leq l \leq m$, and an accuracy parameter $\epsilon = \frac{1}{O(\poly(m))}$ and success probability $\mu > 1/2$}{Output an approximation $\chi$ s.t. $\left|\chi - \frac{\beta_l}{\dim{V^l}} \right| \leq \epsilon$ with probability $\geq \mu$.\\}

\noindent Intuitively, {\sc BNE} should be much easier to solve than the cohomology problem considered in Section~\ref{sec:The k-local cohomology problem}. Indeed,  we describe a $\BQP$ algorithm for a general family of complexes in Section~\ref{sec:quantum_algorithms}, showing that for a large class of complexes {\sc BNE} is  tractable for quantum computers. In contrast, no efficient classical algorithms are known for estimating normalized Betti numbers (even for ``easier'' complexes like the clique complex), and indeed it seems reasonable to expect that the problem should remain hard for a large family of complexes. 

Pinpointing the precise complexity of {\sc BNE}, however, is a bit subtle. As we have discussed, Betti numbers correspond to the dimension of the zero-eigenspace of the Laplacian operator $\Delta$, and this creates difficulties from ``both directions'' when attempting to prove completeness results. First, distinguishing zero-eigenvalues from non-zero eigenvalues is in general a hard task for quantum computers in the absence of any promise on a spectral gap above 0 (as we had for the problems considered in Section~\ref{sec:The k-local cohomology problem}), which can make it difficult to design efficient quantum algorithms. Second, any constructions used in hardness proofs will need to use information contained only in the number of zero-eigenvectors of a Laplacian, which can, at least using our techniques, make it difficult to prove hardness results for complexity classes that don't have a notion of perfect completeness (it is for this reason that we could only prove $\QMA_1$-hardness in Section~\ref{sec:The k-local cohomology problem}). 

To allow for a more complete characterisation of the problem, we instead consider a relaxed version corresponding to counting the number of small eigenvalues (below an inverse-polynomial threshold) of $\Delta_l$, and argue below that this is in fact a natural problem. To capture the notion that this problem is very similar in nature to estimating Betti numbers, and in fact coincides with it when we are promised that $\Delta_l$ has a spectral gap above its zero groundspace, we call this problem the \emph{quasi-Betti number estimation} ({\sc QBNE}) problem. This problem was also considered for the case of clique complexes in~\cite{gyurik}, where it was called the \emph{Approximate Betti number estimation} ({\sc ABNE}) problem, although its hardness was not considered there. We therefore use the name {\sc QBNE} to distinguish this more general version of the problem from the restriction to clique complexes, where it is called {\sc ABNE}.

First we define the notion of spectral density \emph{within a subspace}. Let $A|_{V}$ be the restriction of a matrix $A$ to its action on the space $V$, then for a positive semi-definite Hermitian matrix $A \in \mathbb{C}^{2^n \times 2^n}$ and a threshold $b \in \mathbb{R^+}$, we define the \emph{spectral density within a subspace} of $A$ with respect to threshold $b$ and space $V$ as 
\equ{
	N_A(b,V) := \frac{1}{\dim(V)} \sum_{j : \lambda_j(A|_{V}) \leq b} 1
}
where $\lambda_j(A|_{V})$ denotes the $j$th eigenvalue of $A|_{V}$. When the space $V$ is taken to be the full $2^n$-dimensional complex vector space, this quantity is just the ordinary low-lying spectral density defined in~\cite{brandao_thesis}, denoted by $N_A(b)$. 

The central problem that we consider in this section is the following:\\
\myprob{\label{def:QBNE}{\sc quasi-Betti number estimation (QBNE)}}{A cochain complex $(V[m],d[k])$, an integer $0\leq l \leq m$, precision parameters $\delta, \epsilon = \Omega(1/\poly(m))$, and a success probability $\mu > 1/2$.}{Output an approximation $\chi$ that satisfies, with probability at least $\mu$, \equ{\label{eq:quasi_betti_number_estimation}N_\Delta(b,V^l) - \epsilon \leq \chi \leq N_\Delta(b+\delta,V^l) + \epsilon,} where $\Delta$ is the Laplacian of the complex.\\}

\noindent This problem is the generalization of the {\sc LLSD} problem~\cite{gyurik} from arbitrary Hermitian matrices to Laplacians of cochain complexes. For completeness, and because we make use of it later on, we define the {\sc LLSD} problem here.\\

\myprob{Low-lying spectral density ({\sc LLSD})~\cite{gyurik}}{A positive-semidefinite Hermitian matrix $A \in \mathbb{C}^{2^n \times 2^n}$, a real number $b = \Omega(1/\poly(n))$, precision parameters $\delta, \epsilon = \Omega(1/\poly(n))$, and a success probability $\mu > 1/2$}{Output an approximation $\chi$ that satisfies, with probability at least $\mu$, \equ{N_A(b) - \epsilon \leq \chi \leq N_A(b+\delta) + \epsilon.}}

\

We argue that the fraction of small eigenvalues of the $l$th Laplacian $\Delta_l$ is in fact a natural quantity to consider. For the ordinary graph Laplacian ($\Delta_0$ of the graph's simplicial complex) it is well known that the number of $0$-eigenvalues (i.e. $\beta_0$) gives the number of connected components of the graph. Cheeger’s inequality and its variants provide an approximate version of this fact for the case of two components: they state that a graph has a sparse cut if and only if there are at least two eigenvalues that are close to zero~\cite{alon1986eigenvalues}. Intuitively, this says that if the graph Laplacian has an ``almost-zero'' eigenvalue, then only a few edges with small total weight need to be removed from the graph to cut it into two connected components. It was long conjectured that an analogous characterization should hold for higher multiplicities: that there are $k$ eigenvalues close to zero if and only if the vertex set can be partitioned into $k$ subsets, each defining a sparse cut. Surprisingly, this conjecture was only confirmed quite recently, where a series of works culminated in the so-called higher-order Cheeger inequality~\cite{lee2014multiway}. These results have been used to give theoretical backing to the use of spectral clustering algorithms for optimally partitioning the vertices of a graph into well-connected clusters~\cite{peng2015partitioning}. 

Currently it is not known whether such strong results also hold in general for higher-order Laplacians, but there are some recent encouraging results in this direction~\cite{parzanchevski2016isoperimetric, horak2013spectra, steenbergen2014cheeger, gundert2014higher}, which have motivated work on spectral clustering for simplicial complexes~\cite{palande2020spectral}. With these results in mind, we suggest that the problem of estimating the number of small eigenvalues of higher-order Laplacians is a well-motivated one. 

\ 

The quantum algorithms that we consider can in some instances be implemented on a non-universal model of quantum computer known as the one clean qubit model, whose associated complexity class is known as $\DQC$ (See Appendix~\ref{app:complexity_classes} for a description of the model and associated complexity class). This essentially rules out the possibility that the problem is $\BQP$-hard, unless $\DQC = \BQP$, but raises the question of whether the problem of estimating Betti numbers is a $\DQC$-hard problem. This makes intuitive sense in light of the fact that the cohomology problem is $\QMA_1$-hard. Indeed, it is often useful to view the class $\DQC$ as an ``averaged'' version of the class $\QMA$: in $\QMA$ we care about the existence of a single input to a quantum circuit that causes it to accept with high probability; in $\DQC$ we care about the average acceptance probability of a quantum circuit over all possible inputs from some Hilbert space. This connection is made explicit by some of the complete problems for each class: for $\QMA$, the canonical complete problem is that of finding the value of the smallest eigenvalue of a Hamiltonian; for $\DQC$, a complete problem is estimating the average of an (exponential) number of low-lying eigenvalues of a Hamiltonian. More generally, many $\QMA$-complete problems are concerned with finding a specific state or energy, whilst many $\DQC$-complete problems focus on finding more ``averaged'' quantities such as (normalized) partition functions, or estimating the Jones polynomial of the trace closure of braids (the exact variant of which is $\sharpP$-hard).

In the next section we prove that the {\sc QBNE} problem is indeed $\DQC$-hard. One might also consider the hardness of the true {\sc BNE} problem. It is in fact possible to show that it is hard for a ``perfect completeness'' version of $\DQC$; however, such a class is not a particularly natural one, and a hardness result for this class may not be of interest (indeed, it is not immediately clear that this class would not be classically simulable) and so we do not focus on that here. 

\

\subsection{{\sc quasi-Betti number estimation} is $\DQC$-hard}
Recall the definition of cochain complexes given in Section~\ref{sec:The k-local cohomology problem}, and their input as a tuple $(V[m],d[k])$. As discussed there, for the Betti number estimation problem to be well defined one needs to be able to be given an efficient description of the input vector space, $V[m] = V = \bigoplus_{p=0}^m V^p$.  This efficient description automatically allows one to efficiently \emph{check} whether a given (basis) vector lies in the space $V^p$ or not, since it involves checking only a polynomial (in $m$) number of constraints, each of which can be checked in polynomial time. For the purpose of describing quantum algorithms for Betti number estimation, however, we will also need to be able to efficiently \emph{sample} elements from a space $V^p$. This won't in general be possible, and so this imposes an additional restriction on the instances of the problem that we consider. We define what we mean by an \emph{efficiently sample-able vector space} below.

\begin{definition}[Efficiently sample-able space]\label{def:efficient_sample}
Let $V = \bigoplus_{p=0}^m V^p \subseteq \mathbb{C}^{2^m}$ be a vector space described by a polynomially long (in $m$) list of constraints. We say that $V$ is \emph{efficiently sample-able} if there exists a polynomial-sized quantum circuit $\mathcal{C}$ that can produce an $m$-qubit state 
\equ{
	\rho_{V^p} := \frac{1}{\dim(V^p)} \sum_{v^p \in V^p} \proj{v^p},
}
starting from the all-zeros state on $O(\poly(m))$ qubits, and where we take the $v^p$ to be some orthonormal basis over $V^p$, labelled by $m$-bit strings. Additionally, we require that $\mathcal{C}$ itself must be constructed in polynomial time by a classical Turing Machine given only the list of constraints defining $V$. For instance, if $V$ is a subspace of the fermionic Fock space of $m$ fermions, then $\rho_{V^p}$ can be taken to be the maximally mixed state over all fermionic states from $V$ with fermion number $p$, encoded in qubits (e.g. via some fermion-to-qubit encoding).
\end{definition}
The requirement that the quantum circuit be constructed by a polynomial-time classical Turing machine ensures that no hardness is hidden in the sampling of the vector space.\footnote{E.g. if the list of constraints was given as some arbitrary {\sc 3SAT} formula, then sampling from the space would be equivalent to sampling from its set of solutions -- something we don't expect to be able to do efficiently.} Note that it might also be sufficient to \emph{approximate} $\rho_{V^p}$ by, say, sampling only approximately uniformly from a basis for $V^p$. For some spaces this might be computationally easier than sampling exactly uniformly, however we won't consider this further. 

\ 

We show that {\sc QBNE} is $\DQC$-hard for $\log$-local\footnote{A $\log$-local operator $B$ acting on $n$ fermionic modes / qubits is one of the form $B = \sum_{j=1}^m B_j$, where each $B_j$ acts on at most $\log(n)$ fermionic modes / qubits, and $m = O(\poly(n))$.} coboundary operators acting on efficiently sample-able spaces $V[m]$ and $l=\Theta(m)$, by constructing a particular instance of a complex for which this problem is hard. Note that the problem can only get harder if we drop the restriction that $V[m]$ is efficiently sample-able, and so that harder version of the problem remains $\DQC$-hard. The constructions presented in Section~\ref{sec:The k-local cohomology problem}  make the proof of this fact quite straightforward.

\begin{thm}\label{theo:dqc1_hardness}
	{\sc QBNE} is $\DQC$-hard for cochain complexes $(V[m],d[\log m])$, $l = \Theta(m)$, and remains hard when $V[m]$ is efficiently sample-able.
\end{thm}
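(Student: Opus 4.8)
The plan is to prove hardness by reduction from the low-lying spectral density problem {\sc LLSD}, which is $\DQC$-hard by \cite{gyurik} (building on the notion of spectral density in \cite{brandao_thesis}). I would take the hard instance to be a positive-semidefinite, $\log$-local Hamiltonian $A=\sum_S A_S$ on $n$ qubits, with threshold $b_A=\Omega(1/\poly(n))$ and parameters $\delta_A,\epsilon=\Omega(1/\poly(n))$, the task being to estimate the full-space density $N_A(b_A)$ (this is the form in which \cite{gyurik} establish hardness). I would then feed $A$ into exactly the cochain complex built in the proof of Theorem~\ref{thmCohomology}: encode each qubit as two modes $a_i,b_i$ via \eqref{mapsigmaab}, restrict to the single-occupancy subspace $V=\bigoplus_p V^p$ cut out by the constraints $n_{a_i}n_{b_i}=0$, adjoin an auxiliary site $(a_0,b_0)$, and set
\[
d=\tfrac{1}{\sqrt2}(a_0^\dagger+b_0^\dagger)\,\hat A,\qquad \hat A:=\sum_S A_S\!\left(a_i^\dagger b_i,\,b_i^\dagger a_i,\,a_i^\dagger a_i-b_i^\dagger b_i\right).
\]
As in Theorem~\ref{thmCohomology}, this $d$ has fermion number $F=1$ and squares to zero because $(a_0^\dagger+b_0^\dagger)^2=0$; it is $\log$-local since each $A_S$ is $\log$-local and the site-$0$ factor contributes a single extra mode. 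Here the total number of modes is $m=2n+2$, and I would take $l=n+2=\Theta(m)$, the top level of the complex.

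The crucial step is to identify the restricted Laplacian $\Delta^l$ with $A^2$. Using the operator identity $\{a_0^\dagger+b_0^\dagger,\,a_0+b_0\}=2$ and the fact that $\hat A$ commutes with the site-$0$ modes, a short computation gives $\Delta=\{d,d^\dagger\}=\hat A^2$ on all of $V$. At the top level every state has the form $\ket{1,1}_0\otimes\ket{\psi_n}$ with $\ket{\psi_n}$ an encoded $n$-qubit state, so the isometry $\ket{\psi_n}\mapsto\ket{1,1}_0\otimes\ket{\psi_n}$ identifies $V^l$ with $(\mathbb{C}^2)^{\otimes n}$ and, via \eqref{eq:same_op}, identifies $\Delta^l$ with $A^2$. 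Since $A\succeq0$, the eigenvalues of $\Delta^l$ are the squares of those of $A$, so $N_\Delta(b,V^l)=N_{A^2}(b)=N_A(\sqrt{b})$ for every threshold $b\geq0$.

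It then remains to translate the parameters. Running a {\sc QBNE} solver on $(V,d)$ with $l=n+2$, threshold $b:=b_A^2$, gap $\delta:=(b_A+\delta_A)^2-b_A^2=2b_A\delta_A+\delta_A^2$ (all still $\Omega(1/\poly(m))$) and accuracy $\epsilon$ returns $\chi$ obeying $N_A(b_A)-\epsilon\leq\chi\leq N_A(b_A+\delta_A)+\epsilon$, which is exactly a valid {\sc LLSD} answer. I would finish by checking that $V$ is efficiently sample-able in the sense of Definition~\ref{def:efficient_sample}: the maximally mixed state on $V^l$ factorizes as the pure occupied state $\proj{1,1}$ on the auxiliary site tensored with the maximally mixed single-occupancy state on each of the remaining $n$ sites, which a trivially-constructible polynomial-size circuit prepares. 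Hence {\sc QBNE} is $\DQC$-hard even when restricted to $\log$-local $d$, efficiently sample-able $V$, and $l=\Theta(m)$.

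The only genuine subtlety I anticipate is the squaring relation $\Delta^l\cong A^2$: because the Laplacian realizes $A^2$ rather than $A$, the reduction computes a spectral density of $A^2$, and one must verify that passing the {\sc LLSD} thresholds through $b\mapsto\sqrt b$ keeps both the threshold and the sandwich gap inverse-polynomial and preserves the two-sided density inequality defining {\sc QBNE}. The accompanying point needing care is confirming that the $\DQC$-hardness of {\sc LLSD} in \cite{gyurik} indeed holds for a positive-semidefinite and $\log$-local $A$ over the full space, since positivity is what makes $\sqrt{\,\cdot\,}$ well-defined here and $\log$-locality is what guarantees a $\log$-local coboundary operator $d$.
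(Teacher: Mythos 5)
Your proposal is correct and follows essentially the same route as the paper's proof: a reduction from {\sc LLSD} on a positive-semidefinite $\log$-local $A$ using the Theorem~\ref{thmCohomology} construction $d=\tfrac{1}{\sqrt2}(a_0^\dagger+b_0^\dagger)\hat A$, the identity $\Delta=\hat A^2$ identifying $\Delta|_{V^{n+2}}$ with $A^2$, and the identical parameter translation $b'=b^2$, $\delta'=2b\delta+\delta^2$. The only (harmless) difference is in verifying sample-ability: you exploit the product structure of $\rho_{V^{n+2}}$ at the top level, whereas the paper gives a classical rejection-free sampling procedure for general $V^p$; your shortcut suffices since only $\rho_{V^l}$ is needed for the instance at hand.
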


\begin{proof}
	{\sc LLSD} was already shown to be $\DQC$-hard for arbitrary $\log$-local, positive semi-definite Hermitian operators in~\cite{gyurik}.\footnote{Note that there is a claimed proof of this result in~\cite{brandao_thesis}. Unfortunately, the proof contains a (major) mistake. See~\cite{gyurik} for further details on this point.} Our approach will be to define an instance of the {\sc QBNE} problem, such that solving that instance allows one to solve a $\DQC$-hard instance of the {\sc LLSD} problem, which then implies the theorem. To this end, let $A_{\DQC}$ be any ($n$-qubit, $\log$-local, positive semi-definite) Hermitian operator for which the {\sc LLSD} problem is $\DQC$-hard with threshold $b$, precision parameters $\delta$ and $\epsilon$, and success probability $\mu$. 
	
	We use the same construction as in the proof of Theorem~\ref{thmCohomology}, using $A_{\DQC}$ as our starting point. Recall that there the complex $(V[m], d[\log n])$ is of the form
	\[
	    C:\qquad 0\xrightarrow{d} V^{0}\xrightarrow{d} \cdots \xrightarrow{d} V^{n+2}\xrightarrow{d} 0\,,
	\]
	where $V = \bigoplus_{p=0}^{m} V^p$ is a subspace of fermionic Fock space on $m=2(n+1)$ fermionic modes. Our coboundary operator $d : V^p \rightarrow V^{p+1}$ will be a $\log$-local fermionic operator of the form $d = \frac{1}{\sqrt{2}}(a_0^\dagger + b_0^\dagger) B$, where $B$ is the fermionic operator obtained by applying the map \eqref{mapsigmaab} and \eqref{basisfN} to $A_{\DQC}$. Recall that there the Laplacian takes the form
	\equ{
	    \Delta = dd^\dagger + d^\dagger d = B^2.
	}
	As noted in our proof of Theorem~\ref{thmCohomology}, the only states supported in $V^{n+2}$ are states of the form $\ket{1,1}\otimes\ket{\psi_n}$, for $\ket{\psi_n}$ an $n$-fermion state obtained via \eqref{basisfN}. Moreover, by \eqref{eq:same_op}, the operator $B$ restricted to the space $\tilde{V}^n$ spanned by the states $\ket{\psi_n}$ coincides precisely with $A_{\DQC}$. 
	Since $B$ has fermion number $0$ by construction, it necessarily preserves the subspace $\tilde{V}^n$ and so, combined with \eqref{eq:same_op}, we have $\braket{\psi_{n+2} | \Delta | \psi_{n+2}} = \braket{\psi | A_{\DQC}^2 | \psi}$ for all $\ket{\psi} \in V^{n}$, where $\ket{\psi_{n+2}} = \ket{1,1}\otimes \ket{\psi_n}$. Putting everything together, this implies that the eigenvalues of $\Delta|_{V^{n+2}}$ are the squares of the eigenvalues of $A_{\DQC}$.
	
	It follows that the spectral density of $\Delta|_{V^{n+2}}$ below a threshold $t$ is the same as the spectral density of $A_{\DQC}$ below threshold $\sqrt{t}$. Hence, to solve the {\sc LLSD} problem on $A_{\DQC}$ with parameters $b,\delta,\epsilon,\mu$, it suffices to solve the {\sc QBNE} problem for this complex $(V[m],d[\log])$ and $l = n+2 = m/2+1$, with parameters $b',\delta',\epsilon,\mu$ for $b' = b^2$ and $\delta' = 2b\delta + \delta^2$. Since the {\sc LLSD} problem with input $A_{\DQC}$ is $\DQC$-hard, this construction gives a $\DQC$-hard instance of the {\sc QBNE} problem. 
	
	\
    
    The only thing that remains is to verify that the space $V$ is efficiently sample-able as per Definition~\ref{def:efficient_sample}. This is straightforward to see: the space $V$ is a subspace of fermionic Fock space on $m = 2(n+1)$ modes subject to the constraints \eqref{constVhard}. To sample from the space $V^p$, we can classically sample from the set of Hamming weight $p$, $m$-bit strings of the form $n_{a_{0}},n_{b_{0}}\,,\cdots, n_{a_{n}},n_{b_{n}}$ that satisfy those same constraints. This set is easy to sample from uniformly: choose $p$ indices from $0,\dots,n$ uniformly at random (without replacement). If we chose index $0$ then we set either $n_{a_0} = 1$, $n_{b_0} = 1$, or both, each with equal probability. If we set both occupancies to 1, we remove an index from our choice at random (to ensure we create a state with fermion number $p$). From each index $i$ of our remaining indices, we set either $n_{a_i} = 1$ or $n_{b_i} = 1$ with equal probability. 
    
    Each $m$-bit string will correspond to a unique basis state from $V^p$, and so by sampling these uniformly at random and creating the corresponding state $\ket{n_{a_{0}},n_{b_{0}}\,,\cdots, n_{a_{n}},n_{b_{n}}}$ (and ``forgetting'' which state we produced), the quantum state will be precisely 
    \equ{
        \rho_{V^p} = \frac{1}{\dim{V^p}} \sum_{v^p \in V^p} \proj{v^p},
    }
    which is the state we need to produce to efficiently sample from $V^p$.

\end{proof}

\subsection{{\sc quasi-Betti number estimation} is in $\BQP$}\label{sec:quantum_algorithms}
In this section, we present a general algorithm for estimating Betti numbers of complexes defined by a coboundary operator $d$ and a vector space $V$. To keep things simple, we assume that the vector spaces $V$ are subspaces of an $n$-qubit Hilbert space, and that we are given access to boundary operators that are defined in qubit space. This is contrary to the previous sections, however if we are instead given access to boundary operators in \emph{fermionic} space, we note that there exist fermion-to-qubit transformations that can be computed efficiently and give at most a logarithmic overhead in terms of locality of the underlying operators~\cite{bravyi2002fermionic, seeley2012bravyi}, and so this assumption is without loss of generality.

Our algorithm is essentially a simplification of the Lloyd, Garnerone, and Zanardi (LGZ) algorithm of~\cite{LloydetalTDA} that applies to $\log$-local boundary operators, but generalised to apply to a wider class of complexes (i.e. not just the clique complex). If instead the boundary operators are sparse (but not local), then one can solve the {\sc QBNE} problem only when either the Laplacian is guaranteed to have an inverse-polynomial spectral gap above zero, or if it is guaranteed to be sparse itself (and we can efficiently construct sparse access to it). We comment more on this at the end of the section.

\begin{thm}\label{theo:inBQP}
	{\sc QBNE} is in $\BQP$ for cochain complexes $(V[n],d[O(\log n)])$, any $l$, and when $V[n]$ is efficiently sample-able. 
\end{thm}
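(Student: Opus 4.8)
The plan is to adapt the Lloyd--Garnerone--Zanardi (LGZ) algorithm~\cite{LloydetalTDA}: prepare the maximally mixed state over $V^l$, feed it into quantum phase estimation run on the unitary $e^{i\Delta}$, and estimate the fraction of sampled eigenvalues lying below the threshold. The structural fact that makes this work is that, since $d$ maps $V^p$ to $V^{p+1}$ and $d^\dagger$ maps $V^p$ to $V^{p-1}$, the Laplacian $\Delta = dd^\dagger + d^\dagger d$ preserves each graded piece, and in particular $V^l$; being Hermitian it also preserves the orthogonal complement, so $\Delta|_{V^l}$ is a genuine operator and $N_\Delta(b,V^l)$ is exactly its low-lying spectral density. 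Writing $\{\ket{\lambda_j}\}$ for an eigenbasis of $\Delta|_{V^l}$ with eigenvalues $\lambda_j$, the efficiently-sampleable hypothesis (Definition~\ref{def:efficient_sample}) lets us prepare
\equ{
\rho_{V^l} = \frac{1}{\dim V^l}\sum_{v \in V^l}\ket{v}\bra{v} = \frac{1}{\dim V^l}\sum_j \ket{\lambda_j}\bra{\lambda_j}\,,
}
the point being that the maximally mixed state is basis-independent and hence uniform over the eigenbasis of $\Delta|_{V^l}$.

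Next I would check that $e^{i\Delta}$ is efficiently simulable. Writing $d = \sum_{a=1}^{M} d_a$ with each $d_a$ being $O(\log n)$-local and $M = \poly(n)$, the Laplacian $\Delta = \sum_{a,b}\left(d_a d_b^\dagger + d_a^\dagger d_b\right)$ is a sum of $O(\poly(n))$ terms, each acting on $O(\log n)$ qubits and hence each an exactly-exponentiable $\poly(n)$-dimensional operator of norm $O(\poly(n))$. Standard Trotterization~\cite{lloyd1996universal} then produces a unitary $U$ with $\|U - e^{i\Delta}\| \leq 1/\poly(n)$ in time $\poly(n)$. The algorithm runs phase estimation on $U$ with $\rho_{V^l}$ as input, using enough ancillas to resolve eigenvalues to additive precision $\eta$; because the input is maximally mixed over the eigenbasis of $\Delta|_{V^l}$, one run returns an estimate $\tilde\lambda_j$ of a \emph{uniformly random} eigenvalue $\lambda_j$ (weighted correctly by multiplicity).

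Placing the threshold at $b + \delta/2$, recording the indicator that $\tilde\lambda_j$ falls below it, and averaging over $O\!\left(\epsilon^{-2}\log\tfrac{1}{1-\mu}\right)$ independent repetitions, a Hoeffding bound yields an estimate $\chi$ that is within $\epsilon$ of the true counted fraction with probability at least $\mu$. Since $l$ enters only through which space we sample and simulate, none of the above depends on $l$ being constant, establishing the claim for any $l$.

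The main obstacle is the error bookkeeping near the cut, which is precisely why {\sc QBNE} is stated with the two thresholds $b$ and $b+\delta$. Eigenvalues within $\eta$ of the cut can be misclassified by phase estimation, so choosing $\eta < \delta/2$ (and folding the $1/\poly(n)$ simulation error of $U$ into the same budget) confines every misclassified eigenvalue to the window $[b,b+\delta]$; the exact fraction being estimated then lies between $N_\Delta(b,V^l)$ and $N_\Delta(b+\delta,V^l)$, and with the $\pm\epsilon$ sampling error this gives exactly
\equ{
N_\Delta(b,V^l) - \epsilon \;\leq\; \chi \;\leq\; N_\Delta(b+\delta,V^l) + \epsilon\,.
}
The one further point needing care is that $U$ only approximately preserves $V^l$ while $e^{i\Delta}$ preserves it exactly; since the seed $\rho_{V^l}$ is supported in $V^l$, any leakage is bounded by $\|U - e^{i\Delta}\|$ and is absorbed into the same precision budget, so phase estimation faithfully samples the spectrum of $\Delta|_{V^l}$.
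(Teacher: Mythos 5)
Your proposal is correct and takes essentially the same approach as the paper's proof: prepare the maximally mixed state $\rho_{V^l}$ (possible by the efficient-sampleability hypothesis), run phase estimation on a Trotterized approximation of $e^{i\Delta}$ to sample uniformly random eigenvalues of $\Delta|_{V^l}$, and count the fraction below threshold via a Hoeffding bound. The only difference is presentational: you spell out the threshold placement at $b+\delta/2$ and the error bookkeeping explicitly, whereas the paper defers the full error analysis to~\cite{gyurik}.
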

\begin{proof}

By its definition, solving the {\sc QBNE} problem is the same as solving the {\sc LLSD} problem for the operator $\Delta|_{V^l}$ for $\Delta = (d+d^\dagger)^2$ the Laplacian of the cochain complex. If $d$ is $\log$-local, then clearly $\Delta$ is also $\log$-local, and hence sparse. Quantum algorithms for solving {\sc LLSD} on sparse Hermitian matrices were presented in~\cite{gyurik}; these algorithms involve sampling uniformly random eigenvalues of the target matrix by running phase estimation with an input of the maximally mixed state over the full $n$-qubit Hilbert space. Here, we are only interested in the eigenvalues of the matrix $\Delta|_{V^l}$, and so we can run the same algorithm but instead use the maximally mixed, $n$-qubit state
\equ{
	\rho_{V^l} = \frac{1}{\dim(V^l)} \sum_{v^l \in V^l} |v^l\rangle\langle v^l|
}
over the space $V^l$ as input to the phase estimation routine. By definition, all eigenvectors of $\Delta|_{V^l}$ are fully supported in this space, and so the algorithm run in this way will sample uniformly random eigenvalues of $\Delta|_{V^l}$. Finally, we remark that this state can be constructed in polynomial time by a quantum circuit if the space $V$ is efficiently sample-able in the sense of Definition~\ref{def:efficient_sample}. 

For completeness (and because we will refer to parts of this algorithm later on), we briefly describe the steps of the quantum algorithm for {\sc LLSD} of the matrix $\Delta|_{V^l}$. Informally, the steps are:
\begin{enumerate}
	\item Use Hamiltonian simulation to produce a unitary $U \approx e^{i\Delta}$
	\item Run phase estimation on $U$ with input $\rho_{V^l}$. This produces random samples of (approximations of) the eigenvalues of $\Delta|_{V^l}$, since the maximally mixed state $\rho_{V^l}$ can always be expressed as a uniform mixture over eigenvectors of $\Delta|_{V^l}$. 
	\item Repeat a polynomial number of times, keeping a count of how many eigenvalues are output that fall below the input threshold $b$. 
\end{enumerate}  

From this we can produce an approximation satisfying Equation~\ref{eq:quasi_betti_number_estimation} in polynomial time via a Hoeffding bound. Of course, one must check that the errors arising from the Hamiltonian simulation and phase estimation routines can be bounded appropriately. For a complete description of the algorithm, as well as a full error analysis, we defer to~\cite{gyurik}.
\end{proof}

Combined with Theorem~\ref{theo:dqc1_hardness}, it follows that the complexity of estimating Betti numbers of the family of complexes considered in this paper lies somewhere between $\DQC$ and $\BQP$. Finally, we make the observation that for some instances of the problem, the complexity actually lies inside $\DQC$, which can be seen by making use of a $\DQC$ version of the quantum algorithm for {\sc LLSD} from~\cite{gyurik} as a subroutine. We prove the following in Appendix~\ref{app:proof_of_dqc1_algo}.

\begin{corollary}\label{cor:dqc1_betti_algo}
	{\sc QBNE} is in $\DQC$ for cochain complexes $(V[n],d[O(\log n)])$, any $l$, and when the vector space $V = \bigoplus_{p=0}^n V^p$ satisfies $\frac{\dim(V^l)}{2^n} = \Omega\left(\frac{1}{\poly(n)}\right)$ \emph{and} membership in $V^l$ can be tested using a workspace of at most $O(\log n)$ bits.
\end{corollary}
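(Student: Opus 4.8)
The plan is to adapt the $\BQP$ algorithm of Theorem~\ref{theo:inBQP} — which is essentially the {\sc LLSD} algorithm of~\cite{gyurik} applied to $\Delta|_{V^l}$ — into a one-clean-qubit algorithm. The only obstacle to placing that $\BQP$ algorithm directly in $\DQC$ is that it requires preparing the \emph{maximally mixed state $\rho_{V^l}$ over the subspace $V^l$} as the input to phase estimation, whereas the natural free resource in the one-clean-qubit model is the maximally mixed state over the \emph{full} $n$-qubit Hilbert space. I would bridge this gap by reducing {\sc QBNE} to the estimation of two normalized traces, each taken over the full space (hence each directly accessible in $\DQC$), and then forming their ratio.

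Concretely, let $\Pi_l=\sum_{x\in V^l}\ket{x}\bra{x}$ be the projector onto $V^l$, which is diagonal in the Fock/computational basis because $V^l$ is spanned by a subset of basis states. Since $d:V^p\to V^{p+1}$, the Laplacian $\Delta=(d+d^\dagger)^2$ preserves fermion number and maps $V$ into itself, so $[\Pi_l,\Delta]=0$ and $[\Pi_l,e^{ik\Delta t}]=0$. Writing $g$ for a smooth approximation to the indicator of $[0,b]$ with transition region of width $\delta$ (exactly as in the {\sc LLSD} analysis), the target density within the subspace is $N_\Delta(b,V^l)=A/B$ where $A:=\tfrac{1}{2^n}\Tr\!\big(\Pi_l\,g(\Delta)\big)$ and $B:=\tfrac{1}{2^n}\Tr(\Pi_l)=\tfrac{\dim V^l}{2^n}$; both are now normalized by the full dimension $2^n$.

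The denominator $B$ I would estimate by a trivial $\DQC$ routine: feed the maximally mixed register into the membership-testing circuit, which by hypothesis uses only $O(\log n)$ clean workspace qubits, and measure the output flag; the probability of the outcome ``$x\in V^l$'' equals $B$. The numerator I would estimate by inserting $\Pi_l$ into the moment-based {\sc LLSD} estimator. Expanding $g(\Delta)\approx\sum_k c_k e^{ik\Delta t}$ via its truncated Fourier series reduces $A$ to a classical combination of the normalized traces $\tfrac{1}{2^n}\Tr(\Pi_l e^{ik\Delta t})$, each of which I estimate with a modified Hadamard test in which the controlled-$e^{ik\Delta t}$ is additionally conditioned on the membership flag. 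Using the membership circuit to compute the flag, applying the doubly-controlled simulated evolution (Hamiltonian simulation of the $\log$-local, hence sparse, $\Delta$ is efficient), and then uncomputing the flag — which is legitimate precisely because $e^{ik\Delta t}$ preserves $V^l$ — realizes the operator $\Pi_l e^{ik\Delta t}+(1-\Pi_l)$ on the register. The real part of its normalized trace, read off the clean control qubit, is therefore $\Re\,\tfrac{1}{2^n}\Tr(\Pi_l e^{ik\Delta t})+(1-B)$, and the known offset $1-B$ is removed using the estimate of $B$ (the imaginary part, from the $Y$-basis test, carries no offset). All of this uses one measured clean qubit together with $O(\log n)$ clean ancillas for the flag and the membership workspace, and since $\DQC$ is unchanged by allowing $O(\log n)$ clean qubits, the whole procedure stays in $\DQC$.

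Finally I would assemble $\chi=\widehat A/\widehat B$. The promise $B=\dim(V^l)/2^n=\Omega(1/\poly(n))$ is exactly what makes this ratio well-conditioned: estimating $A$ and $B$ to additive accuracy $\Theta(\epsilon/\poly(n))$ — achievable with polynomially many repetitions by a Chernoff/Hoeffding bound — yields $A/B$ to additive accuracy $\epsilon$, while the width-$\delta$ transition of $g$ supplies precisely the $(b,b+\delta)$ slack allowed in the definition of {\sc QBNE}. I expect the main obstacle to be this normalization issue — reconciling the $\dim(V^l)$ normalization of Betti-number density with the $2^n$ normalization intrinsic to the one-clean-qubit trace estimate — together with checking that inserting the non-unitary $\Pi_l$ into the trace estimator (via the flag trick and offset subtraction) keeps the clean-qubit budget at $O(\log n)$; both hypotheses of the corollary are used exactly here.
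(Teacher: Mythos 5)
Your proposal is correct, and it shares the skeleton of the paper's own proof: work with the maximally mixed state over the full $2^n$-dimensional space, compute a membership flag for $V^l$ coherently inside the circuit (this is exactly where the $O(\log n)$-workspace hypothesis enters, in both proofs), obtain a quantity normalized by $2^n$, and divide by an estimate of $B=\dim(V^l)/2^n$, with the dimension-fraction promise used only to condition this final ratio — your closing error analysis matches the paper's almost line for line. Where you genuinely diverge is the spectral-counting subroutine. The paper runs phase estimation on (an approximation of) $e^{i\Delta}$ inside the one-clean-qubit circuit, sets one flag for ``input basis state in $V^l$'' and one for ``estimated eigenvalue $\leq b$,'' and measures the AND of the two flags, so the acceptance probability is directly $\tilde{\beta_l}/2^n$; it also estimates $B$ \emph{classically} in $\BPP$ by sampling bit strings, rather than with a second $\DQC$ circuit as you do (both are fine). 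You instead replace phase estimation by a Fourier-moment estimator: a truncated expansion $g(\Delta)\approx\sum_k c_k e^{ik\Delta t}$ of a smoothed threshold function, with each normalized trace $\frac{1}{2^n}\Tr\bigl(\Pi_l e^{ik\Delta t}\bigr)$ read off a Hadamard test on the block-diagonal unitary $\Pi_l e^{ik\Delta t}+(I-\Pi_l)$; your flag-compute/uncompute trick is valid precisely because $[\Pi_l,\Delta]=0$, which is the same block-diagonality the paper invokes so that its phase estimation samples eigenvalues supported within a single $V^l$. Your route buys stricter adherence to the one-measured-qubit definition of $\DQC$ (the paper must first entertain a two-qubit-measurement variant before settling on the AND-flag trick) and dispenses with the $O(\log n)$-qubit phase register, at the price of extra bookkeeping: the $(1-B)$ offset must be subtracted using the estimate of $B$ (note this propagates an error amplified by $\sum_k|c_k|$), and the accuracy budget must absorb that $\ell_1$-norm of Fourier coefficients, which is $\poly(n)$ only because $\|\Delta\|\leq\poly(n)$ and the transition width $\delta$ is inverse-polynomial — a point worth stating explicitly, though it is standard; the width-$\delta$ transition of $g$ then lands exactly in the $(b,b+\delta)$ slack of the {\sc QBNE} definition, just as the phase-estimation error does in the paper.
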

The requirement that membership in the input Hilbert space can be tested using only a logarithmically many workspace bits/qubits is somewhat unsatisfactory, but seems to be necessary to fit the rejection sampling step of the algorithm into $\DQC$ (see the proof in Appendix~\ref{app:proof_of_dqc1_algo}). In~\cite{aaronson2017computational} Aaronson et al. define the class $\mathsf{TQC}$ (for ``trace computing quantum polynomial time'') and its sampling variant, $\mathsf{sampTQC}$. They find that $\mathsf{TQC} = \DQC$, but that $\mathsf{sampTQC}$  lies somewhere between $\DQC$ and $\BQP$. Informally, the class is defined as any problem that can be solved using a one-clean-qubit computer, with the additional power that one is allowed to learn, after the one-clean-qubit computation, what state from the maximally mixed part of the input was actually given to the circuit. Provided with this information, one can carry out some extra post-processing on the result of the one-clean-qubit computation. With this model, we could perform the rejection-sampling step inside the post-processing part of a $\mathsf{sampTQC}$ computation, meaning that the requirement that membership in the Hilbert space can be checked using only a small workspace can be dropped. This implies the following corollary.
\begin{corollary}
	{\sc QBNE} is in $\mathsf{sampTQC}$ for cochain complexes $(V[n],d[O(\log n)])$, any $l$, and when the vector space $V = \bigoplus_{p=0}^n$ satisfies $\frac{\dim(V^p)}{2^n} = \Omega\left(\frac{1}{\poly(n)}\right)$.
\end{corollary}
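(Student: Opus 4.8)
The plan is to reuse the $\DQC$ algorithm of Corollary~\ref{cor:dqc1_betti_algo} almost verbatim, and to observe that the single ingredient that forced the restrictive hypothesis there---that membership in $V^l$ be testable with only $O(\log n)$ workspace---is exactly the rejection-sampling step that projects the input onto $V^l$, and that this step can be deferred to the classical post-processing stage that the $\mathsf{sampTQC}$ model makes available. Since in $\mathsf{sampTQC}$ one is permitted to learn, after the one-clean-qubit computation has finished, which computational-basis string $x$ was fed into the maximally mixed register, the membership test can be carried out classically on $x$ with no workspace constraint at all, and the offending hypothesis simply disappears. Note also that, unlike Theorem~\ref{theo:inBQP}, we need \emph{no} efficient sample-ability assumption on $V$: the full $n$-qubit maximally mixed state is the native input of the one-clean-qubit model, so we never have to prepare $\rho_{V^l}$ by hand.

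Concretely, I would run the same quantum subroutine as before. Because $d$ is $O(\log n)$-local, the Laplacian $\Delta=(d+d^\dagger)^2$ is $O(\log n)$-local and hence sparse, so Hamiltonian simulation produces $U\approx e^{i\Delta}$ efficiently, and we run phase estimation on $U$ with the full maximally mixed state $\frac{I}{2^n}$ as input. The structural fact that makes everything work is that $V^l$ is spanned by a subset $S_l$ of computational basis states (those of Hamming weight $l$ satisfying the polynomially many constraints defining $V$), so that the maximally mixed state over $V^l$ is \emph{exactly} the classical uniform mixture $\rho_{V^l}=\frac{1}{|S_l|}\sum_{x\in S_l}\ket{x}\bra{x}$. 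Writing $\frac{I}{2^n}=\frac{|S_l|}{2^n}\,\rho_{V^l}+\bigl(1-\tfrac{|S_l|}{2^n}\bigr)\rho_{\mathrm{rest}}$ and using that $\Delta$ preserves each fermion-number sector $V^p$ (so its eigenvectors may be chosen to lie entirely inside or entirely orthogonal to $V^l$), conditioning the phase-estimation output on the event $\{x\in S_l\}$ reproduces precisely the distribution obtained by running phase estimation on $\rho_{V^l}$, namely a uniform sample of the eigenvalues of $\Delta|_{V^l}$. The $\mathsf{sampTQC}$ step is then immediate: after the clean-qubit computation returns an eigenvalue estimate, we learn the input label $x$ and accept the run in post-processing iff $x\in S_l$, i.e.\ iff $x$ has Hamming weight $l$ and satisfies the constraints, both checkable classically in polynomial time. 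The hypothesis $\dim(V^p)/2^n=\Omega(1/\poly(n))$, instantiated at $p=l$, guarantees an inverse-polynomial acceptance probability, so polynomially many repetitions yield polynomially many accepted samples; counting the fraction below the threshold $b$ and applying a Hoeffding bound produces an estimate $\chi$ satisfying \eqref{eq:quasi_betti_number_estimation}, with the simulation and phase-estimation errors controlled exactly as in Theorem~\ref{theo:inBQP} and Appendix~\ref{app:proof_of_dqc1_algo}.

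The main point to nail down---and the only genuine subtlety---is the consistency of the deferred post-selection: one must argue that modelling the maximally mixed register as a uniformly random basis state $\ket{x}$ and then revealing $x$ is faithful to the one-clean-qubit dynamics, so that conditioning on $x\in S_l$ is truly equivalent to having prepared $\rho_{V^l}$. This is exactly where the fact that $V^l$ is a \emph{coordinate} subspace, spanned by basis states rather than an arbitrary superposition, does the work: the reduction of $\frac{I}{2^n}$ to $\rho_{V^l}$ is purely classical conditioning on the input label, so the deferred-measurement reasoning goes through without disturbing the phase-estimation output. Once this is established, the error analysis is identical to the $\BQP$ case and requires no modification.
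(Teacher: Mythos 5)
Your proposal is correct and follows essentially the same route as the paper, which obtains this corollary by taking the $\DQC$ algorithm of Corollary~\ref{cor:dqc1_betti_algo} and moving the rejection-sampling/membership test into the classical post-processing stage that $\mathsf{sampTQC}$ provides (using that the revealed input label $x$ can be checked against the constraints defining $V^l$ with no workspace restriction). Your additional observations---that $V^l$ is a coordinate subspace so conditioning on the label is purely classical, and that $\Delta$ preserves the fermion-number sectors so the conditioned phase-estimation output samples eigenvalues of $\Delta|_{V^l}$---are exactly the facts the paper relies on implicitly in Appendix~\ref{app:proof_of_dqc1_algo}.
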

This raises the intriguing question of whether Betti number estimation is also \emph{complete} for the class $\mathsf{sampTQC}$. 

\ 

\noindent Finally, we comment on our earlier remark that one can only solve the {\sc QBNE} problem for sparse (but not local) boundary operators when either the Laplacian $\Delta$ itself is guaranteed to be sparse, or is guaranteed to have an inverse-polynomial spectral gap above zero (in which case {\sc QBNE} is equivalent to {\sc BNE}). 

If the Laplacian is itself sparse, then we can efficiently use Hamiltonian simulation to obtain an approximation of $e^{i\Delta}$, and the algorithm can proceed as usual.\footnote{For the clique/independence complex, one can show that the Laplacian actually \emph{is} sparse~\cite{gyurik}.} If it is not sparse, then one has to take a different approach. In particular, it is possible to instead simulate the operator $B = (d + d^\dagger)$, which will be sparse as long as $d$ is. Since the eigenvalues of $\Delta$ are the squares of the eigenvalues of $B$, one would hope that by inputting the maximally mixed state over $V^p$ to phase estimation on the unitary $e^{iB}$, and estimating the fraction of eigenvalues falling below the threshold $b$, it would be possible to obtain an estimate of the low-lying spectral density of $\Delta|_{V^p}$, and hence solve the {\sc QBNE} problem. 

However this is not necessarily the case: in {\sc QBNE} we are interested in the number of (orthogonal) eigenvectors $\ket{\psi^p} \in V^p$ of $\Delta$ such that $\braket{\psi^p | \Delta|_{V^p} | \psi^p} \leq b$ for some threshold $b$. However, any such state $\ket{\psi^p}$ cannot be an eigenvector of $B$ unless $\braket{\psi^p | \Delta|_{V^p} | \psi^p} = 0$ : acting on $\ket{\psi^p}$ with $B = (d + d^\dagger)$ will yield a state with support in both $V^{p+1}$ and $V^{p-1}$, \emph{unless} $d\ket{\psi^p} = 0$ and $d^\dagger\ket{\psi^p} = 0$, in which case $\ket{\psi^p}$ must be a zero-eigenvector of $\Delta|_{V^p}$. Hence, estimating the number of small non-zero eigenvalues of $B$ within the space $V^p$ tells us nothing about the number of small non-zero eigenvalues of $\Delta$ within that space, and hence this approach will not, in general, allow us to solve the {\sc QBNE} problem. 

In some cases however, it \emph{will} work: when $\Delta|_{V^p}$ has a spectral gap of at least $b$ above zero, then any eigenvalues below $b$ must be zero-eigenvalues, and in this case the low-lying spectral density of $\Delta|_{V^p}$ is equivalent to the low-lying spectral density of $B|_{V^p}$, and this quantity just counts the fraction of zero-eigenvalues of $\Delta$, i.e. solves the true {\sc BNE} problem. However, we emphasize that without this promise the method of inputting the maximally mixed state over $V^p$ to phase estimation on the unitary $e^{iB}$ will not allow one to estimate the low-lying spectral density of $\Delta|_{V^p}$, because the non-zero eigenvectors of $B$ and $\Delta$ do not coincide.

\section{Betti Numbers in Topological Data Analysis}
\label{sec:Betti numbers in Topological Data Analysis}

As discussed in the Introduction, the cohomology problem encompasses a large number of computational problems, including ones of practical interest. In this section, we focus on applications to topological data analysis (TDA), specifically to the interesting problem of extracting topological features of large data sets. As we discuss, this amounts to determining the ground states of the fermion hard-core model reviewed in the Introduction. 

Understanding the ground state properties of physical systems (very much including condensed matter systems) is conjectured to be one of, if not the, most promising use-cases of quantum computers~\cite{cao2019quantum, mcclean2021foundations}. Various tools for doing so have been developed over the past few years and we propose that these could be re-purposed for TDA (more generally, for many (co)homology problems). As we argue, it is entirely possible that current state-of-the-art computational techniques for studying condensed matter systems can be applied to this problem.  We first provide an overview of the problem and describe the precise Hamiltonian to be considered, comment on the conditions to obtain a local Hamiltonian, and then discuss implications for the LGZ algorithm in Section~\ref{sec:Existing quantum algorithms for TDA}. Finally, in Section~\ref{sec:variational_TDA} we propose a new variational approach to TDA.

\begin{figure}[t]
\begin{center}
\includegraphics[scale=0.7]{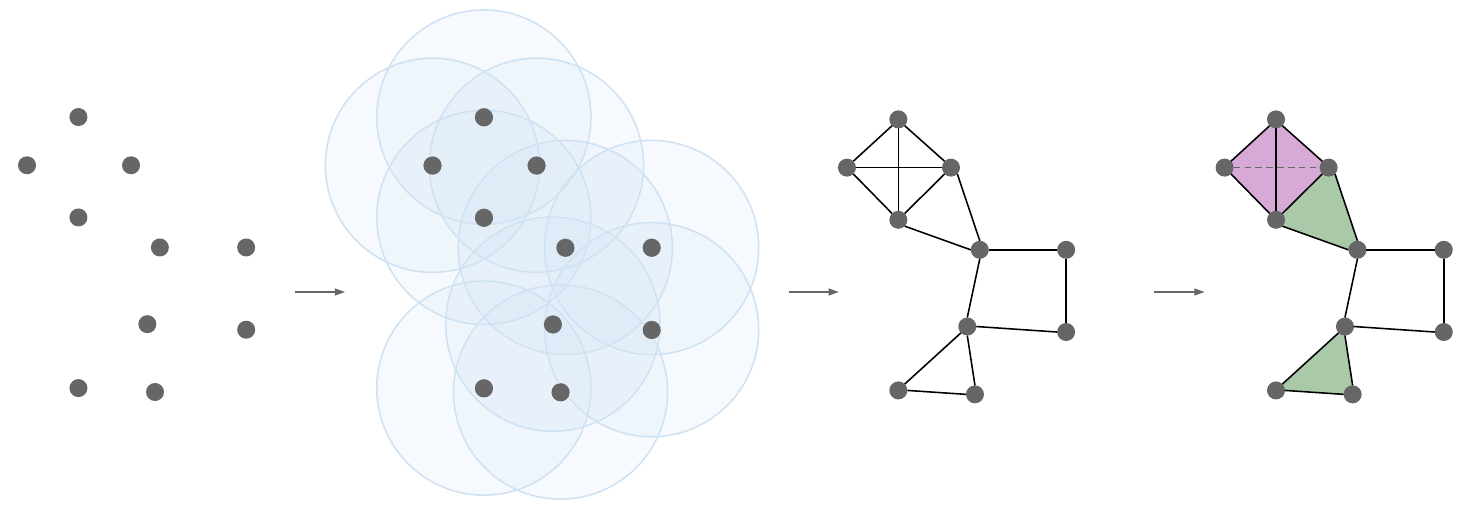}
\end{center}
\caption{An example of the TDA pipeline: one starts with a point cloud (left), adds edges between vertices within a certain distance of each other, resulting in a graph $G$ (middle right), from which the clique  complex is constructed (right). In this example the complex is connected and has a single one-dimensional hole, thus $\beta_0=\beta_1=1$ and all other Betti numbers vanish. }
\label{fig:simplex}
\end{figure}

\subsection{TDA and the fermionic hard-core model}\label{sec:TDA_and_fermion_hardcore}
Topological data analysis is an approach for analyzing data using tools from topology, whose initial motivation was to study the ``shape of data,'' for some mathematically rigorous notion of shape. In this setting one is given a large data set (collection of points in space) and the task is to extract topological features such as the number of $l$-dimensional holes (corresponding to Betti numbers $\beta_l$) that can then be used to classify and analyse it.\footnote{See e.g.~\cite{ghrist2008barcodes} for an introductory overview.} The main tool is so-called ``persistent homology'' -- studying the homology of simplicial complexes derived from point-cloud data in order to look for \emph{persistent} holes in the data, or rather in the topological space that the data is sampled from. A first step towards persistent homology is simply to compute homology, i.e., to compute Betti numbers of a particular simplicial complex, which in this context is usually the \emph{Vietoris-Rips complex}~\cite{ghrist2008barcodes}.

More formally, consider a collection of points $\{x_i\}_{i=1}^n$ embedded in $\mathbb{R}^m$. From this collection of points, one constructs a graph $G$ by associating data points to vertices, and adding an edge between two vertices $i$ and $j$ if they are close, $\|x_i - x_j\|_2 \leq \epsilon$, where $\epsilon$ is termed the \emph{grouping scale}. One then constructs the clique complex of $G$, $C(G)$, a simplicial complex whose simplices are the cliques of the graph $G$ -- see Figure~\ref{fig:simplex}. A complex built in this way is called the Vietoris-Rips complex. Let $C_k(G) \subset \{0,1\}^n$ denote the set of $k$-cliques in $G$, where an element $\ket{s_k} \in C_k(G)$ is represented by a Hamming weight $k$, $n$-bit string whose non-zero entries specify the vertices that constitute the clique. As with any simplicial complex, the boundary operator $\partial$ acts  as
\equ{\label{cobdrysimplex}
    \partial \ket{s_k}= \sum_{l=1}^k (-1)^l \ket{\widehat{s_{k-1}}(l)},
}
where $\ket{\widehat{s_{k-1}}}\in C_{k-1}(G)$ is  obtained by setting the $l$th 1 of $\ket{s_k}$ to 0. That is, the boundary operator maps $k$-cliques to an alternating sum of the $(k-1)$-cliques that it contains (the signs in the sum are determined by choosing a fixed ordering over the simplices of the complex). The homology groups $H_l(\partial)$ are then spanned by $l$-dimensional holes in the complex at a given grouping scale $\epsilon$. Holes that persist over various grouping scales are deemed to be persistent holes, and form the basis for persistent homology. Hence, in order to perform topological data analysis in this way, one needs to determine the homology groups $H_l(\partial)$ or Betti numbers $\beta_l=\text{rank}\, H_l(\partial)$. 

As mentioned, there is a direct connection between TDA in the above sense and the properties of a particular (supersymmetric) many-body  system. Indeed, as discussed in the Introduction, the supersymmetric ground states of the fermionic hard-core model \eqref{HHC} for a graph $G=(V,E)$ precisely captures the cohomology groups of the independence complex $I(G)$. Since we are interested in the clique complex, which is the dual of the independence complex, we consider the model defined on the complement graph $\bar G$ and define the coboundary operator 
\equ{\label{eq:coboundary_clique}
    d = \sum_{i \in \bar{V}} a_i^\dagger P_i, \qquad P_i = \prod_{(i,j)\in \bar E} (1 - \hat{n}_j)\,,
}
where $\bar V$ and $\bar E$ denote the vertex and edge set of $\bar G$. We now note that the operator $d^\dagger$ acts on cliques $\ket{s_k}$ in the same way as the coboundary operator $\partial$ does in \eqref{cobdrysimplex}. We leave as an exercise to verify that the fermionic anti-commutation relations automatically impose the correct signs in the sum of~\eqref{eq:coboundary_clique}, something that results from fixing an ordering over fermionic modes, which in turn defines an orientation of the simplices. On the other hand,  in the absence of torsion coefficients (as is the case here) one has $H_l(d^\dagger)\cong H^l(d)$ and thus 
\equ{H_{l}(\partial)\cong H^{l}(d)\,.
} 
Elements of cohomology are then zero-eigenstates of the Laplacian of the (co)chain complex,  given by
\equ{\label{eq:fhm_laplacian}
    \Delta = \sum_{(i,j) \in \bar{E}} P_i a_i^\dagger a_j P_j + \sum_{i \in \bar{V}} P_i\,,
}
which is simply the supersymmetric Hamiltonian for the fermion hard-core model on the graph $\bar G$ (c.f., Eq. \eqref{HHC}). Thus, the problem of finding $l$-dimensional holes in the clique complex amounts to counting supersymmetric ground states of this system with fermion number $l$. 

In the case of highly structured graphs, the supersymmetric ground states of the fermion hard-core model can be studied analytically or numerically.
For instance, on a square lattice with periodic boundary conditions, the supersymmetric (zero-energy) ground states of the system are found at at fillings in the range $1/5 \leq f \leq 1/4$~\cite{huijse2009supersymmetry} and are in one-to-one correspondence with configurations that arise from tiling the plane using four rhombuses. The ground states are also known for the so-called martini and octagon-square lattices~\cite{fendley2005exact}. For more general graphs, however, the cohomology of the independence complex is much harder to compute. For instance, for the triangular and hexagonal lattices, the ground state structure of the fermion hard-core model is not known, although it is understood that the ground states occur in a finite window of filling~\cite{jonsson2010certain}, and that there is extensive ground state entropy~\cite{van2005extensive,Huijse:2011aa}. For a general graph $G$, however, there is no systematic approach to determining supersymmetric ground states and indeed we expect the problem to be intractable.\footnote{Some partial information about supersymmetric ground states such as the Witten index could be extracted more easily. The Witten index for these systems is $\#\P$-hard, but  finding an additive approximation to the normalized Witten index is in $\BPP$ \cite{Crichigno:2020vue}. }

\subsection{Existing quantum algorithms for TDA}
\label{sec:Existing quantum algorithms for TDA}
A quantum algorithm for TDA that estimates normalized Betti numbers of the clique complex of a graph (in fact it solves the QBNE problem from Definition~\ref{def:QBNE} for clique complexes)  was described by Lloyd, Garnerone, and Zanardi (LGZ)~\cite{LloydetalTDA}. The algorithm is essentially the one outlined in Section~\ref{sec:quantum_algorithms}, in which a maximally mixed state over cliques is used as input to phase estimation on a unitary obtained via Hamiltonian simulation of either the ``Dirac operator'' or Laplacian of the complex. Recent work by Gyurik et al.~\cite{gyurik} considered the possibility of implementing the LGZ algorithm on near-term quantum devices, and suggested that only a modest number of error-corrected qubits ($\approx 80$) might be required to see quantum advantage (i.e. to run the algorithm on input sizes beyond the reach of the best known classical methods) if various optimistic conditions are met.

The computational bottlenecks of the algorithm are the complexities of the subroutines used to implement Hamiltonian simulation of either the combinatorial Laplacian or the Dirac operator $B = (\partial + \partial^\dagger)$ constructed from the boundary operator $\partial$, as well as the cost of generating the maximally mixed state over all $k$-simplices of the complex. The latter is dealt with only by assuming that the graph is sufficiently ``clique-dense'' (which will likely mean that the graph itself is very dense), allowing either rejection sampling or Grover search to be used to create the maximally mixed state.
The Hamiltonian simulation subroutines can be expensive because although the boundary operator and the Laplacian are known to be sparse, they do not have an obvious decomposition into local terms, meaning that one must make use of the more resource-heavy algorithms for sparse Hamiltonian simulation instead of ``lightweight'' techniques for simulating local Hamiltonians. For a detailed discussion of this point, see~\cite{gyurik}.

Let us consider the Hamiltonian \eqref{eq:fhm_laplacian}.  Note that the projectors $P_i$  are nontrivial only when there is no edge between vertex $i$ and vertex $j$ in the input graph $G$, and hence the Laplacian contains terms which act on up to a total of $(n-\delta)$ fermionic modes, with $\delta$ the {\it minimum} degree of the original graph $G$. The Laplacian is thus $k$-local, provided
\equ{\label{deltank}
\delta= n-k\,,
}
with $k$ a constant, i.e., for very edge-dense graphs $G$. Thus, by switching our attention to the coboundary operator of the independence complex, the Laplacian of interest will have a decomposition into local terms provided \eqref{deltank} holds. Moreover, this local decomposition is naturally expressed in terms of fermionic operators, suggesting that it could be possible to make use of modern techniques from the quantum chemistry literature for simulating fermionic Hamiltonians to obtain hardware efficient implementations for the Hamiltonian simulation part of the algorithm.

Graphs that are very edge-dense are likely to be natural candidates for obtaining large quantum speedups using the LGZ algorithm. Since the algorithm is efficient only when the graph $G$ is sufficiently $l$-clique dense, it is likely that the graph is also edge dense -- i.e., all vertices have very high degree. Indeed, a sufficient condition for $l$-clique denseness is when all vertices have degree at least $\gamma n$, for $\gamma > \frac{l-2}{2(l-1)}$~\cite{reiher2016clique}. Hence, a class of graphs for which the LGZ algorithm could exhibit quantum advantage will be when the degree of every vertex is at least $n-k$ for constant $k$, in which case the complement of the graph will have maximum degree $k$, and so the Laplacian \eqref{eq:fhm_laplacian} will be $k$-local. Hamiltonian simulation of local Fermionic operators is likely to be a less resource intensive task on near-to-medium-term devices than simulation of general sparse operators. For example, one can take an efficient Fermion-to-qubit mapping (e.g. the Bravyi-Kitaev transform~\cite{bravyi2002fermionic}) to obtain a $\log$-local qubit operator $\hat{d}$ corresponding to the coboundary operator $d$. Hamiltonian simulation of the $\log$-local qubit operator $\hat{B} = (\hat{d} + \hat{d}^\dagger)$, or of the Laplacian itself, can then be implemented via Trotterization using no extra ancilla qubits. If we have information about the underlying structure of the graph $G$, it is possible that this simulation could be made even more efficient~\cite{childs2019theory}. Moreover, since we are interested only in the low-energy subspace of these operators, it is possible that some recent results (e.g.,~\cite{sahinogluhamiltonian}) regarding Hamiltonian simulation within low-energy subspaces could be exploited to make the simulation even more efficient. We leave these questions open for future work.

\subsection{Alternative quantum and quantum-inspired approaches to TDA }\label{sec:variational_TDA}
The quantum algorithm described in the previous section suffers from a couple of drawbacks. First, it is only efficient when the graph is clique dense, which will likely restrict the family of graphs to which it can be applied. Second, it computes only \emph{normalized} Betti numbers, and it is not entirely clear that these quantities have any immediate practical application. In particular, being able to estimate normalized $l$th Betti numbers up to inverse-polynomial additive error only allows one to distinguish complexes with very many $l$-dimensional holes from those with very few $l$-dimensional holes, which again might limit the family of graphs to which this algorithm can usefully be applied. 

The description of elements of clique/independence homology as ground states of the (supersymmetric) fermion hard-core model reviewed above, a nontrivial quantum mechanical system of interacting fermions, suggests that the problem of clique/independence homology is quantum mechanical in nature. This raises the  prospect of repurposing standard techniques for studying ground states of quantum mechanical systems for TDA, such as VQE or \cite{peruzzo2014variational,mcclean2016theory} or QAOA \cite{farhi2014quantum}. Similarly,  there are are a plethora of \textit{non-quantum} computational tools for approximating ground states of quantum many-body systems, including quantum Monte Carlo and tensor networks~\cite{leblanc2015solutions}, which might yield efficient \emph{classical} algorithms for topological data analysis. One immediate observation however, in both cases, is that often these techniques are designed for systems on very structured graphs (e.g. 2D lattices), and hence they might struggle to perform well on more arbitrary input graphs. Understanding exactly when these classical and quantum techniques perform well is an interesting question, and could shed light on when (if at all) we might expect to obtain quantum advantage for TDA, which we leave for future work.

\section{Discussion and Outlook}
\label{Discussion and Outlook}

We have argued that the intersection of many-body physics, complexity theory, and supersymmetry is a natural arena to study the complexity of problems in homological algebra, in particular the (co)homology problem. Indeed, this perspective reveals that the problems of deciding whether a (co)homology group is nontrivial and of estimating its rank are intrinsically quantum mechanical and serve as natural hard problems for various quantum complexity classes. Precisely, we defined {\sc $k$-local Cohomology} and showed that its complexity lies  between $\QMA_{1}$ and $\QMA$. Similarly, the complexity of an additive approximation to (quasi) Betti numbers lies  between  $\DQC$ and $\BQP$.

These complexity results may come as a surprise at first, as the computational problems we discuss are defined purely in the context of (co)homology and are not manifestly quantum mechanical. As we have observed, however, these {\it can} be seen as quantum mechanical when viewed  as problems in {\it supersymmetric} quantum mechanics, and relate to properties of the ground states of supersymmetric Hamiltonians. Indeed, the complexity of {\sc $k$-local Cohomology} ultimately follows from the fact that the $k$-local Hamiltonian problem for supersymmetric systems remains $\QMA$-complete, as shown in Section~\ref{sec:The SUSY local Hamiltonian problem}.

\

A set of interesting questions arises when considering simpler instances of supersymmetric Hamiltonians. An obvious question, for instance, is whether $\QMA$-hardness of the local  Hamiltonian problem for systems with $\cN=2$ SUSY is maintained for $2\leq k\leq3$. One may also consider systems with higher degrees of supersymmetry, $\cN>2$, and study whether there is a complexity transition at a certain value of $\cN$ for each $k$. It is often the case, at least in the context of supersymmetric QFTs, that beyond a value of $\cN$ the systems become so constrained that the entire class of Hamiltonians is specified by a small set of parameters.\footnote{An example is the case of 4d QFTs with sixteen (Poincar\'e) supercharges, which are specified by a choice of gauge group and one complex coupling constant. These and other maximally supersymmetric systems are also often integrable. Although integrability does not imply tractability in a computational sense (see, e.g., \cite{aaronson2016computational} for a counterexample) it would be interesting to consider the computational hardness of these systems. } 
We note, however, that although systems with extended supersymmetry can be easily constructed in continuum quantum mechanics and quantum field theory,  much less is known about spin systems with extended supersymmetry (see however the $\cN=4$ systems of \cite{2005JPhA...38.5425S}). 
We also stress that although the Hamiltonian for which we have proven hardness is local, in the sense that it is given by a sum of terms acting at most on a constant number of sites, it is in general {\it geometrically} non-local.
It would be interesting to study whether the  SUSY local Hamiltonian problem remains hard for systems constrained by geometric locality or other physical considerations. 
All this applies also to the problem of Betti number estimation. Finally, it would be interesting to study the complexity of these problems in the case of the fermion hard-core model of \cite{Fendley:2002sg}, especially in relation to topological data analysis  as discussed in Section~\ref{sec:Betti numbers in Topological Data Analysis}. In particular, if the local Hamiltonian problem remains $\QMA$-hard for this model, as it is known to be for similar models like Fermi-Hubbard~\cite{o2021electronic}, then this may imply $\QMA_1$-hardness of the cohomology problem for the independence complex, and likely also $\DQC$-hardness for QBNE of clique complexes. The latter would imply that the LGZ algorithm cannot be dequantized unless $\DQC \subseteq \BPP$.

More generally, we note that our hardness constructions give a general approach for mapping hard problems in Hamiltonian complexity into hard problems in algebraic homology. This suggests that the approach could be exploited to elucidate the complexity of problems  beyond those considered in this paper.  In fact, this suggests defining new problems in computational topology altogether, which are natural from a quantum Hamiltonian complexity perspective.  For instance, one may consider the standard $\QCMA$-complete problem of deciding whether a local Hamiltonian has an ``efficiently describable'' ground state, the $\QMA(2)$-complete problem of deciding if there is a ground state with a particular tensor-product structure, or finding the ground energy of stoquastic Hamiltonians. It would be interesting to determine whether these additional constraints correspond to any natural constraints or promises on the corresponding  problems in homological algebra.

On the more practical side, as we discussed in Section~\ref{sec:Betti numbers in Topological Data Analysis}, viewing the problem of topological data analysis as the problem of finding the ground states of a (supersymmetric) Hamiltonian suggests new approaches to this problem.  In particular, this opens the door to using modern techniques from quantum chemistry for topological data analysis and cohomology problems more generally, as long as these can be formulated in terms of a fermionic coboundary operator acting on Fock space. Before the arrival of fault-tolerant quantum computers, it may also be interesting to consider the application of {\it classical} techniques (e.g. traditional tools for studying condensed-matter systems, or even machine learning techniques~\cite{huang2021provably}) for studying ground states of many-body systems to topological data analysis or other (co)homological problems.

\section*{Acknowledgements}
We are grateful to Ronald de Wolf and Kareljan Schoutens for helpful discussions and feedback on an earlier version of this paper, and Vedran Dunjko, Casper Gyurik, Ido Niesen, and Ian Marshall for useful discussions. CC was supported by QuantERA project QuantAlgo 680-91-034. PMC is supported by the European Union’s Horizon 2020 Research Council grant 724659 Massive-Cosmo ERC–2016–COG and the STFC grant ST/T000791/1.

\newpage 

\appendix

\section{Complexity Classes}\label{app:complexity_classes}
In this section, we define the complexity classes $\QMA$, $\QMA_1$, and $\DQC$, which play a central role in our work.

We begin by defining the quantum complexity class  $\QMA$. Loosely speaking, this is the class of problems whose solution can be verified in polynomial time by a quantum computer, a  quantum analog of the class $\NP$. More formally, the quantum complexity class $\QMA[c,s]$ is defined as follows:
\begin{definition}[$\QMA {[c,s]}$]
A promise problem $A = (A_{\text{yes}}, A_{\text{no}})$ is in $\QMA [c,s]$ if there exists a polynomial-time quantum circuit (the ``verifier'') $V_x$ for any $n$-bit input string $x$ such that
\begin{itemize}
	\item If $x \in A_{\text{yes}}$, there exists a $\poly(n)$-qubit witness state $\ket{w}$ such that $\Pr[V_x(\ket{w}) = 1] \geq c$,
	\item If $x \in A_{\text{no}}$, then for any $\poly(n)$-qubit witness state $\ket{w}$, $\Pr[V_x(\ket{w}) = 1] \leq s$.
\end{itemize}
That is,  for a yes-instance, there must exist a quantum witness that causes the verifier to accept with probability $\geq c$. Otherwise, the verifier must reject all witnesses with probability $\geq 1- s$. 
\end{definition}

The parameters $c$ and $s$ are known as the completeness and soundness parameters, respectively.  The class $\QMA[2/3,1/3]$ is simply denoted by $\QMA$ and is viewed as the quantum analogue of the class $\mathsf{NP}$. The choice of $2/3$ and $1/3$ is arbitrary; in fact we have that $\QMA = \QMA[1-1/\exp(n), 1/\exp(n)] = \QMA[c, c-1/\poly(n)]$. A class that will be relevant for us is the ``perfect completeness'' variant of $\QMA$, $\QMA[1,1-1/\poly(n)] = \QMA_1$. Whether $\QMA_1 = \QMA$ is a longstanding open question in quantum complexity theory.\\

\paragraph{A comment on perfect completeness.} The notion of a perfect completeness complexity class introduces some nuances. In particular, the notion of a ``universal gate set'' becomes tricky, and one must fix a gate set in advance and require that any quantum circuits (i.e. the verifier circuit in the case of $\QMA_1$) be constructed from this gate set. It is not known whether the definition of $\QMA_1$ can be made independent of the choice of gate set. To illustrate this point, consider a problem contained in $\QMA_1$ with respect to some choice of universal gate set $\mathcal{G}$, meaning that the verifier circuit, constructed out of gates from $\mathcal{G}$, accepts any valid witness with probability 1. Now consider moving to another universal gate set $\mathcal{G}'$: of course the verifier circuit can still be constructed out of gates from $\mathcal{G}'$, but only \emph{approximately}, even though the approximation can be exponentially close to the target circuit. For perfect completeness, however, this is not enough: we require that the circuit built using $\mathcal{G}'$ also accepts valid witnesses with certainty, and not with ``merely'' a probability exponentially close to 1. Hence, the problem would be considered to be in $\QMA_1$ with respect to $\mathcal{G}$, but \emph{not} in $\QMA_1$ with respect to $\mathcal{G}'$. In~\cite{Gosset_2013}, Gosset and Nagaj use a standard universal gate set $\mathcal{G} = \{\hat{H}, T, CNOT\}$, and for our purposes we choose this gate set also for our definition of $\QMA_1$. For a detailed discussion of this point, we refer the reader to~\cite{Gosset_2013}. 

\ 

Next we define the one-clean-qubit model, and its associated complexity class $\DQC$. Here, we are given access to a \emph{fixed} input 
\begin{equation}
     \rho = \proj{0} \otimes \frac{I_n}{2^n}
\end{equation}
consisting of a single qubit in a pure state $\ket{0}$, and $n$ qubits in a maximally mixed state. We are allowed to apply a polynomial (in $n$) sized circuit, without intermediate measurements, to this state, and then we can measure the clean qubit. We are allowed to perform this procedure a polynomial number of times, and the probability of obtaining a $0$ on the output is considered the output of the model. The associated complexity class is called $\DQC$ and is defined as follows.

\begin{definition}[$\DQC$]\label{def:DQC1}
A language $L = (L_{\text{yes}} L_{\text{no}})$ is in $\DQC$ if, for every $r$-bit input $x \in L$, there exists a quantum circuit $U_x$ acting on $n = \poly(r)$ qubits and consisting of $O(\poly(n))$ quantum gates, such that 
\begin{itemize}
	\item If $x \in L_{\text{yes}}$, \[p_{\text{yes}} = \Tr\left[ \left( U_x \left( \proj{0} \otimes \frac{I}{2^{n-1}} \right) U_x^\dag \right) \proj{1} \otimes I \right] \geq a\]
	\item If $x \in L_{\text{no}}$, \[p_{\text{yes}} = \Tr\left[ \left( U_x \left( \proj{0} \otimes \frac{I}{2^{n-1}} \right) U_x^\dag \right) \proj{1} \otimes I \right] \leq b\]
\end{itemize}
for $a - b \geq 1/O(\poly(n))$.
\end{definition}
The promise gap between $a$ and $b$ means that, to solve any problem in $\DQC$, it is enough to estimate the ``average acceptance probability'' of $U_x$, $p_{\text{yes}}$, up to inverse-polynomial additive error.

\section{$\DQC$ Algorithm for Estimating Quasi Betti Numbers}\label{app:proof_of_dqc1_algo}
Here we provide a proof of Corollary~\ref{cor:dqc1_betti_algo}.

\begin{proof}
	Observe that the condition $\frac{\dim(V^l)}{2^n} = \Omega\left(\frac{1}{\poly(n)}\right)$ immediately implies that $V^l$ is efficiently sample-able, and so the set of instances satisfying this constraint is a subset of those in $\BQP$: if the space makes up a polynomially large fraction of the entire $n$-qubit Hilbert space, one can use rejection sampling to sample uniformly from $V^l$ by repeatedly generating random $n$-bit strings until finding one that is contained in the set $S^l$ of strings that give the computational basis states that span $V^l$. Each string has at least a $1/\poly(n)$ probability of being in $S^l$, and hence this rejection sampling takes polynomial time (and succeeds with probability arbitrarily close to 1). We will show that in the case that membership in $V^l$ can be tested using only logarithmically many ancilla (qu)bits, this rejection sampling step can effectively be performed inside $\DQC$.
	
	First, we note that if the co-boundary operator $d$ is $\log$-local, then the (Hermitian) operator $A = (d + d^\dagger)^2$ that we use for Hamiltonian simulation will also be $\log$-local. Hamiltonian simulation of $\log$-local Hamiltonians can be performed in polynomial time in $\DQC$ using the vanilla simulation technique of Trotterization (\cite{cade2018quantum}, Section~3.4). Hence, as pointed out in~\cite{gyurik}, each step of the algorithm from Theorem~\ref{theo:inBQP} can be performed in $\DQC$, since the Hamiltonian simulation step requires no clean ancilla qubits, and the phase estimation routine requires at most $O(\log n)$ of them. The same error analysis applies in this case as in the proof of Theorem~\ref{theo:inBQP}, and so for the rest of the proof we will assume that each part of the quantum algorithm works perfectly, and instead focus on the problem of doing rejection sampling within the one-clean-qubit model. We will refer to the circuit that performs steps 1 and 2 of the algorithm sketched in the proof of Theorem~\ref{theo:inBQP} (i.e. phase estimation on (an approximation of) $e^{iA}$) as $C$, and assume that it works perfectly -- i.e. for an eigenstate $\ket{\psi}$ of $A$, it outputs the corresponding eigenvalue $\lambda$ perfectly and with certainty.  
	
In $\DQC$, we have at our disposal $O(\log n)$ clean qubits (i.e. in the state $\ket{0}$), and a maximally mixed register of $n$ qubits in the state 
	\[
		\rho = \frac{1}{2^n} \sum_{x \in \{0,1\}^n} \proj{x}.
	\]
	The algorithm of Theorem~\ref{theo:inBQP} requires a maximally mixed state over the space $V^l$, and so we need a way to emulate that input within $\DQC$ given access to only the full maximally mixed state $\rho$. Provided there exists a (classical) circuit to check whether a given bit-string is contained in $S^l$ that uses no more than $O(\log n)$ workspace bits, we can perform this check within the one clean qubit model itself. This allows us to build the rejection sampling step into the $\DQC$ algorithm, in one of two ways.

	If we are allowed to measure 2 qubits at the end of the computation, rather than the usual single qubit, then we can perform the rejection sampling as a post-selection step at the end of the computation. Such a model was considered in, e.g.,~\cite{morimae2014hardness}, although this model does not technically give rise to the same complexity class $\DQC$. Alternatively, we could estimate the quantity $\frac{\dim(V^l)}{2^n}$ classically (i.e. in $\BPP$), and estimate the quantity $\frac{\tilde{\beta_l}}{2^n}$ using a $\DQC$ algorithm, where $\tilde{\beta_l}$ is used to denote the number of eigenvalues of $A_{V^l}$ below the given threshold. By dividing the latter by the former, and ensuring that the estimation accuracies are chosen appropriately, we can obtain an estimate of $\frac{\tilde{\beta_l}}{\dim(V^l)}$, which will give us a solution to the {\sc normalized quasi-Betti number estimation} problem after errors from the circuit are taken into account. The class $\DQC$ is defined\footnote{The accepted definition of $\DQC$ is quite hard to pin down. We take it to be the set of problems solvable in polynomial time given access to a one-clean-qubit quantum computer (which we can run non-adaptively) plus polynomial-time probabilistic classical computation, which feels quite natural -- after all, one must have access to \emph{some} form of classical computation in order to, say, post-process the results of a polynomial number of measurement outcomes in order to output a final answer to the decision problem. This is the view taken also by Brand\~ao in~\cite{brandao_thesis}} to be the set of problems solvable in polynomial time, given access to a one-clean-qubit quantum computer plus polynomial time classical post-processing, and hence this would be a true $\DQC$ algorithm. 

\

We begin by considering the first approach. As described in the proof of Theorem~\ref{theo:dqc1_hardness}, if we were to input the state $\proj{0}_A \otimes \rho_{V^l}$ to $C$, where the first register consists of the $O(\log n)$ ancilla qubits used by the circuit, we would obtain an approximation of a uniformly random eigenvalue of $A_{| V^l}$ in the first register. We can append an extra ancilla qubit and use this as a flag that we set to $\ket{1}$ when the eigenvalue in the first register is $\leq b$, and $\ket{0}$ otherwise. Then the probability of measuring a 1 on this qubit at the end of the computation will be equal to the fraction of eigenvalues of $A_{| V^l}$ that are $\leq b$. However, since $\DQC$ requires us to input the maximally mixed state $\rho$ over the full $n$-qubit Hilbert space we will also obtain estimates of eigenvalues of the other parts of $A$. To remedy this, we can attach another ancilla qubit and use it as a flag that we set to $\ket{1}$ at the beginning of the computation only when the state on the maximally mixed register is a computational basis state from $V^l$ (here is where it is important that this checking only requires $O(\log n)$ many ``work'' qubits, which we must provide as yet more clean ancilla qubits). Finally, if we can measure the two flag qubits at the end of the computation, then by keeping a count of the number of times we see a state from $V^l$, and the number of times we see an eigenvalue $\leq b$, we can estimate the fraction of eigenvalues smaller than $b$ out of all eigenvalues of $A_{| V^l}$ using a Hoeffding bound.  

\

Now we consider the second approach. We proceed in exactly the same way as above, except we add a single extra ancilla qubit in which we store the logical AND of the two flag qubits. We then measure this qubit at the end of the computation as permitted by the canonical definition of $\DQC$. The circuit for this is shown in Figure~\ref{fig:dqc1_circuit}, where we call the unitary for the entire circuit $V$.

\begin{figure}[h]
    \centering
    \includegraphics[scale=0.75]{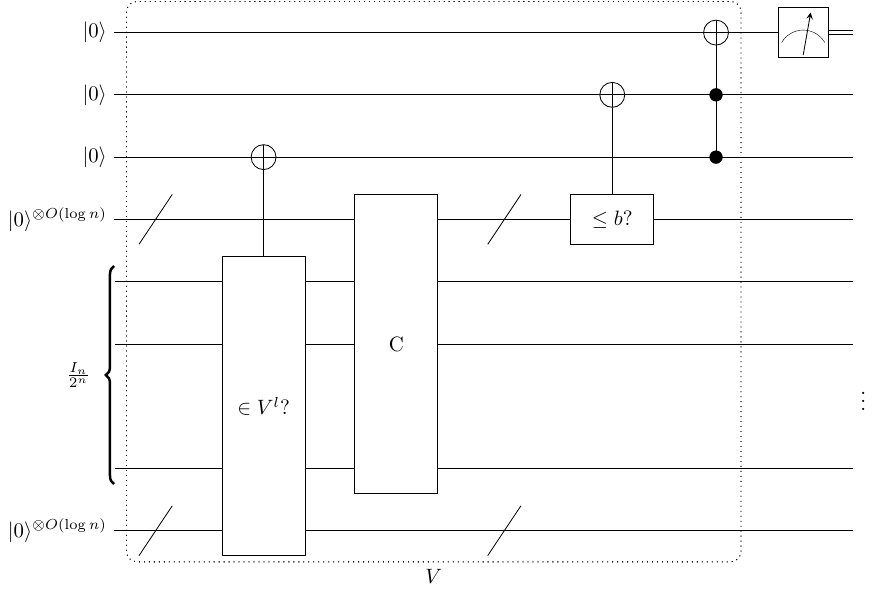}
    \caption{$\DQC$ circuit for quasi Betti number estimation. The gate ``$\in V^l$?'' is a gate that checks (using at most $O(\log n)$ ancillas) whether a (computational basis) state is contained in $V^l$ or not, QPE is quantum phase estimation (for some choice of unitary), and the ``$\leq b$?'' gate checks whether the value stored in the phase register is smaller than $b$ or not.}
    \label{fig:dqc1_circuit}
\end{figure}

Collecting the clean qubits into a single register $A$, the probability of measuring a $1$ on the first qubit is
\begin{equation} 
	\Pr[\text{measure }1] = \Tr\left[ V\left(\proj{0}_A \otimes \frac{I}{2^n}\right)V^\dagger (\proj{1} \otimes I) \right].
\end{equation}
Since $(d + d ^\dagger)^2$ is block-diagonal with each block acting on a separate $V^l$, every one of its eigenvectors, and therefore those of $A$, are either entirely supported within some $V^l$, or else not supported at all. By expanding the equation above and taking the maximally mixed state to be a uniform mixture over eigenvectors of $A$ (plus any extra orthogonal vectors required to complete a basis for $\mathbb{C}^{2^n}$), we can see that the probability of measuring a 1 at the end of the circuit is just the probability that some input state $\ket{\psi}$ chosen from the ensemble is contained in $V^l$ and that this state is an eigenstate of $A$ with eigenvalue $\leq b$. Hence, the probability of measuring 1 at the end of the computation, taken over all such states $\ket{\psi}$, is
\begin{eqnarray}
	\Pr_{\ket{\psi}}[\text{measure } 1] &=& \Pr_{\ket{\psi}}[\ket{\psi} \in V^l] \cdot \Pr_{\ket{\psi}}[\braket{\psi|A|\psi} \leq b | \ket{x} \in V^l] \\
	&=& \frac{\dim(V^l)}{2^n} \cdot \frac{\tilde{\beta_l}}{\dim(V^l)} = \frac{\tilde{\beta_l}}{2^n}.
\end{eqnarray}
By repeating the computation a polynomial number of times, we can obtain an estimate of this probability up to additive accuracy $1/\poly(n)$. Hence, we can obtain a quantity 
\begin{equation}\label{eq:betti_number_estimate}
	\frac{\tilde{\beta_l}}{2^n} \pm \frac{1}{p(n)},
\end{equation}
for $p(n)$ some polynomial in $n$. 

Next, we can classically estimate the fraction of all $n$-qubit (computational basis) states that lie inside $V^l$ -- i.e. estimate the relative dimension of the space $V^l$. To do this we can sample $n$-bit strings uniformly at random and check whether they satisfy the description of $V^l$ provided as input. Once again, in polynomial time and by a Hoeffding bound we can obtain an estimate
\begin{equation}\label{eq:hilbert_space_size}
	\frac{\dim(V^l)}{2^n} \pm \frac{1}{r(n)}
\end{equation}
for some polynomial $r(n)$. Finally, dividing the quantity in~\ref{eq:betti_number_estimate} by the one in~\ref{eq:hilbert_space_size}, we obtain an estimate
\begin{eqnarray}
	\left( \frac{\tilde{\beta_l}}{2^n} \pm \frac{1}{p(n)} \right) \left( \frac{\dim(V^l)}{2^n} \pm \frac{1}{r(n)} \right)^{-1} 
	&=& \left( \frac{\tilde{\beta_l}}{2^n} \pm \frac{1}{p(n)} \right) \frac{2^n}{\dim(V^l)} \left( 1 \pm \frac{2^n}{r(n)\dim(V^l)} \right)^{-1} \\
	&=& \left( \frac{\tilde{\beta_l}}{\dim(V^l)} \pm \frac{2^n}{p(n)\dim(V^l)} \right) \left( 1 \pm \frac{2^n}{r(n)\dim(V^l)} \right)^{-1}. \nonumber
\end{eqnarray}
Using the fact that $\frac{\dim(V^l)}{2^n} \geq 1/q(n)$ for some polynomial $q(n)$, our estimate satisfies
\begin{eqnarray}
	 \left( \frac{\tilde{\beta_l}}{\dim(V^l)} \pm \frac{q(n)}{p(n)} \right) \left( 1 \pm \frac{q(n)}{r(n)} \right)^{-1}.
\end{eqnarray}
Choosing $r(n)$ and $p(n)$ sufficiently large so that $\frac{q(n)}{p(n)} = \epsilon = 1/\poly(n)$ and $\frac{q(n)}{r(n)} = 1/\poly(n) = \epsilon'$, we have an estimate 
\begin{eqnarray}\label{eq:annoying_error}
	 \left( \frac{\tilde{\beta_l}}{\dim(V^l)} \pm \epsilon \right) \left( 1 \pm \epsilon' \right)^{-1}.
\end{eqnarray}
We can then see that
\[
	(1 \pm \epsilon')^{-1} = 1 \mp \frac{\epsilon'}{1 \pm \epsilon'} = 1 \pm \frac{1}{O(\poly(n))},
\]
and hence the estimate in Equation~\ref{eq:annoying_error} satisfies
\begin{eqnarray}
	 \frac{\tilde{\beta_l}}{\dim(V^l)} \pm \frac{1}{O(\poly(n))}
\end{eqnarray}
for some polynomial depending on $\epsilon$ and $\epsilon'$. 

\

In the end we obtain an estimate of $\frac{\tilde{\beta_l}}{\dim(V^l)}$ up to inverse polynomial additive error in polynomial time using a $\DQC$ algorithm, proving the statement.
\end{proof}

\bibliography{references.bib} 
\bibliographystyle{plainnat}

\end{document}